\newcommand{\Sets}{\mathbf{Set}}
\newcommand{\Pred}{\mathbf{Pred}}
\newcommand{\Meas}{\mathbf{Meas}}
\newcommand{\QBS}{\mathbf{QBS}}
\newcommand{\labeleq}[1]{\mathbin{\ooalign{$=$\crcr\hss\raisebox{1.0ex}[0pt][0pt]{\scalebox{0.7}[0.7]{#1}}\hss}}}
\newcommand{\inverse}[1]{{#1}^{\hspace{-0.15em} - \hspace{-0.1em}1}\hspace{-0.1em}}
\newcommand{\defeq}{\stackrel{\mathrm{def}}{=}}
\newcommand{\lrangle}[1]{\langle #1\rangle}
\newcommand{\interpret}[1]{{[\![ {#1} ]\!]}}
\newcommand{\Rinterpret}[1]{{\mathopen{(\!|} {#1} \mathclose{|\!)}}}
\newcommand{\MLET}{\mathop{\mathtt{mlet}}}
\newcommand{\LET}{\mathop{\mathtt{let}}}
\newcommand{\IN}{\mathbin{\mathtt{in}}}
\newcommand{\RETURN}{\mathop{\mathtt{return}}}
\newcommand{\LETREC}{\mathop{\mathtt{letrec}}}
\newcommand{\REAL}{\mathord{\mathtt{real}}}
\newcommand{\BOOL}{\mathord{\mathtt{bool}}}
\newcommand{\IF}{\mathop{\mathtt{if}}}
\newcommand{\THEN}{\mathbin{\mathtt{then}}}
\newcommand{\ELSE}{\mathbin{\mathtt{else}}}
\newcommand{\INT}{\mathord{\mathtt{nat}}}
\newcommand{\LIST}{\mathord{\mathtt{list}}}
\newcommand{\CASE}{\mathop{\mathtt{case}}}
\newcommand{\WITH}{\mathbin{\mathtt{with}}}
\newcommand{\SCALE}{\mathop{\mathtt{scale}}}
\newcommand{\BIND}{\mathop{\mathtt{bind}}}
\newcommand{\OBSERVE}{\mathop{\mathtt{query}}}
\newcommand{\AS}{\mathbin{\Rightarrow}}
\newcommand{\NameOfRelationalLogic}{RPL}
\newcommand{\NameOfUnaryLogic}{UPL}
\newcommand{\NameOfUnderlyingLogic}{PL}
\newcommand{\PL}{\NameOfUnderlyingLogic}
\newcommand{\UPL}{\NameOfUnaryLogic}
\newcommand{\RPL}{\NameOfRelationalLogic}
\newcommand{\PCFP}{HPProg}
\newcommand{\THESYSTEM}{PPV}
\begin{document}

\title{Formal Verification of Higher-Order Probabilistic Programs}
\subtitle{Reasoning about Approximation, Convergence, Bayesian
  Inference, and Optimization}      


\author{Tetsuya Sato}
\affiliation{
 \department{CSE}    
 \institution{University at Buffalo, SUNY}   
 \country{USA}     
}
\email{tetsuyas@buffalo.edu}   

\author{Alejandro Aguirre}
\affiliation{
 \institution{IMDEA Software Institute}   
 \country{Spain}     
}
\email{alejandro.aguirre@imdea.org}   

\author{Gilles Barthe}
\affiliation{
 \institution{IMDEA Software Institute}   
 \country{Spain}     
}
\email{gjbarthe@gmail.com}   

\author{Marco Gaboardi}
\affiliation{
 \department{CSE}    
 \institution{University at Buffalo, SUNY}   
 \country{USA}     
}
\email{gaboardi@buffalo.edu}   

\author{Deepak Garg}
\affiliation{
 \institution{Max Planck Institute for Software Systems}   
 \country{Germany}     
}
\email{dg@mpi-sws.org}   

\author{Justin Hsu}
\affiliation{
 \department{CS}    
 \institution{University of Wisconsin--Madison}   
 \country{USA}     
}
\email{email@justinh.su}   


\renewcommand{\shortauthors}{T. Sato, A. Aguirre, G. Barthe, M. Gaboardi, D. Garg and J. Hsu}

\begin{abstract}
Probabilistic programming provides a convenient \emph{lingua franca} for writing
succinct and rigorous descriptions of probabilistic models and inference tasks.
Several probabilistic programming languages, including Anglican, Church or
Hakaru, derive their expressiveness from a powerful combination of continuous
distributions, conditioning, and higher-order functions. Although very important
for practical applications, these features raise fundamental challenges for
program semantics and verification. Several recent works offer promising answers
to these challenges, but their primary focus is on foundational semantics
issues.

In this paper, we take a step further by developing a suite of logics,
collectively named \THESYSTEM, for proving properties of programs written in an
expressive probabilistic higher-order language with continuous sampling
operations and primitives for conditioning distributions. Our logics mimic the
comfortable reasoning style of informal proofs using carefully selected
axiomatizations of key results from probability theory. The versatility of our
logics is illustrated through the formal verification of several intricate
examples from statistics, probabilistic inference, and machine learning. We
further show expressiveness by giving sound embeddings of existing logics. In
particular, we do this in a parametric way by showing how the semantics idea of
(unary and relational) $\top\top$-lifting can be internalized in our logics.
The soundness of \THESYSTEM\ follows by interpreting programs and assertions in
quasi-Borel spaces (QBS), a recently proposed variant of Borel spaces with a
good structure for interpreting higher order probabilistic programs.
\end{abstract}

\begin{CCSXML}
<ccs2012>
<concept>
<concept_id>10003752.10003753.10003757</concept_id>
<concept_desc>Theory of computation~Probabilistic computation</concept_desc>
<concept_significance>500</concept_significance>
</concept>
<concept>
<concept_id>10003752.10003790.10002990</concept_id>
<concept_desc>Theory of computation~Logic and verification</concept_desc>
<concept_significance>500</concept_significance>
</concept>
<concept>
<concept_id>10003752.10003790.10003800</concept_id>
<concept_desc>Theory of computation~Higher order logic</concept_desc>
<concept_significance>500</concept_significance>
</concept>
<concept>
<concept_id>10010147.10010257.10010321</concept_id>
<concept_desc>Computing methodologies~Machine learning algorithms</concept_desc>
<concept_significance>300</concept_significance>
</concept>
</ccs2012>
\end{CCSXML}

\ccsdesc[500]{Theory of computation~Probabilistic computation}
\ccsdesc[500]{Theory of computation~Logic and verification}
\ccsdesc[500]{Theory of computation~Higher order logic}
\ccsdesc[300]{Computing methodologies~Machine learning algorithms}

\keywords{probabilistic programming, formal reasoning, relational type systems} 

\maketitle
\section{Introduction}
Probabilistic programming is \emph{en vogue} in statistics and machine
learning, where modern probabilistic programming languages are viewed
as a convenient \emph{lingua franca} for writing classical statistical
estimators, and for describing probabilistic models and performing
probabilistic inference. A key strength of many modern probabilistic
programming languages is their expressiveness, which allows
programmers to give succinct descriptions for a broad range of
probabilistic models, and to program specialized inference algorithms
when generic algorithms do not perform well. This
expressiveness has led to significant theoretical
challenges. Specifically, many probabilistic programming languages adopt a
combination of features that goes beyond standard program
semantics and program verification. In this paper, we consider functional probabilistic programming languages and focus on the following
elements:
\begin{itemize}
\item \emph{sampling}: the first key ingredient of a probabilistic programming
  language is a construct to sample from (continuous) distributions.
  A popular way to expose this mechanism is the monadic approach, where
  probabilities are modelled as effects. Languages feature a type constructor
  $M$ for probability measures and monadic operations for sampling from
  continuous distributions or composing probabilistic computations.

\item \emph{conditioning}: the second key ingredient of probabilistic
  programming languages is a conditioning operator, which can be used
  to build a conditional distribution that incorporates observations
  from the real world. Conditioning is often performed through specific
  constructs, such as $\mathop{\mathtt{observe}}$ or $\OBSERVE$, which scale
  a distribution to a measure according to a likelihood function,
  and then normalize the resulting measure back to a
  distribution.

\item \emph{higher-order functions}: probabilistic models and
  statistical tasks are often described
  in a natural way by means of functional higher-order
  programs. The modularity that higher-order functions provide is
  useful for writing likelihood functions, weighting functions,
  parametric models, etc. These components facilitate writing concise
  and expressive probabilistic computations.

\end{itemize}
Examples of probabilistic programming languages that
incorporate the features above are Anglican,
Church, and Hakaru.
For example, Anglican~\citep{DBLP:conf/aistats/WoodMM14} extends
Clojure with constructs for
basic probability distributions and an operation $\mathop{\mathtt{observe}}$,
which is used to build conditional distributions with respect to a
predicate representing an observation of random
variables. Church~\citep{DBLP:conf/uai/GoodmanMRBT08} supports in a simply typed lambda calculus a
similar conditioning operation named $\mathop{\mathtt{query}}$, Hakaru~\citep{DBLP:conf/flops/NarayananCRSZ16}
supports these features as a domain-specific language embedded in
Haskell. 

Despite their popularity, higher-order probabilistic programming languages pose
significant challenges for semantics and verification. In particular, a
classical result~\citep{aumann1961} shows that the category of measurable spaces
is not Cartesian closed, and thus it cannot be used to give denotational models
for higher-order probabilistic languages. Aumann's negative result has triggered
a long line of research, culminating in several recent proposals for semantic
models of higher-order probabilistic languages.  One such proposal, relevant to
our work, is the notion of the quasi-Borel space (QBS)~\citep{HeunenKSY17}, which
has a rich categorical structure and yields an elegant denotational model for
higher-order probabilistic programs.

While a denotational model facilitates formal reasoning about probabilistic
programs, the resulting style of reasoning is typically hard to use. As with
more standard programming languages, we would prefer to use other
techniques, such as equational methods and program logics, to
structure the arguments at a higher level. Several recent papers have started
to look at this. For instance, \citet{Staton17ESOP} and~\citet{CulpepperC17}
have recently proposed equational methods for proving equivalences between
higher-order probabilistic programs.
\citet{CulpepperC17} propose an equational framework based on
observational equivalence and logical relations, while
\citet{Staton17ESOP} proposes a semantics method for equational
reasoning which can be used for program equivalence. 
These two methods  are important steps towards more
general high-level reasoning techniques. However, their main focus is program
equivalence and they do not directly support arbitrary program
properties. Moreover, their approach is based on techniques which
are difficult to directly apply to complex examples. As a result, for
more complex examples the only currently viable approach is to resort directly
to the denotational semantics; for instance,
\citet{Scibior:2017:DVH:3177123.3158148} use semantic methods to prove the
correctness of higher-order Bayesian inference.

\paragraph{Our work: }
The long-term goal of our research is to build practical verification
tools for higher-order probabilistic programs, and to leverage these
tools for building libraries of formally verified algorithms from
machine learning and statistics. This paper makes an initial step
towards this goal and justifies its feasibility by introducing a framework, called the Probabilistic Programming Verification
framework (\THESYSTEM), for proving (unary and relational) properties of
probabilistic higher-order programs with discrete and continuous
distributions. \THESYSTEM\ is:
\begin{itemize}
\item \emph{expressive}: it can reason about different properties of
  probabilistic programs, including approximation, convergence,
  probabilistic inference and optimization.

\item \emph{practical}: it supports lean derivations that abstract away from
  lower-level concerns, like measurability.
\item \emph{sound}: it can be soundly interpreted in the category of quasi-Borel
  spaces.
\end{itemize}

\THESYSTEM's design is based on three different logics: \PL, \UPL\ and \RPL. These logics are presented in the style
of~\citet{Aguirre:2017:RLH:3136534.3110265}: \PL\ is an
\emph{intuitionistic logic} for
reasoning about higher-order programs using a style inspired by
HOL~\cite{JacobsM93} based on judgments of the form $\Gamma \mid \Psi \vdash_{\mathrm{PL}}
\phi$. \UPL\ is a \emph{unary program logic} which manipulates judgments of the form $\Gamma
\mid \Psi \vdash_{\mathrm{UPL}} e \colon \tau \mid \phi$. Finally,
\RPL\ is a \emph{relational program logic} which manipulates judgments of the form
$\Gamma \mid \Psi \vdash_{\mathrm{RPL}} e \colon \tau \sim e' \colon \tau' \mid
\phi'$. Here $\Gamma$ is a simple typing context; $\tau$ and
$\tau'$ are the simple types of the expressions $e$ and $e'$; $\Psi$ is a
set of assumed assertions; $\phi$ is a postcondition; and $\phi'$ is a
relational postcondition. The proof systems are equi-expressive, but
the \UPL\ and \RPL\ are closer to the syntax-directed
style of reasoning generally favored in unary and relational program verification, respectively. We
define an interpretation of assertions in the category of QBS
predicates and prove that the logics are sound with respect to the
interpretation. This interpretation guarantees that every subset of a
quasi-Borel space yields an object in the category.  As a consequence, assertions of the logic are interpreted
set-theoretically, and extensionality is valid. This facilitates
formal reasoning and formal verification.

To further ease program verification, we define carefully crafted
axiomatizations of fundamental probabilistic definitions and results,
including expectations as well as concentration bounds. Following
\citet{Scibior:2017:DVH:3177123.3158148}, we validate the soundness of
these axiomatizations using synthetic measure theory for the QBS
framework. This ensures that a derivation based on our proof system
and axioms is valid in quasi-Borel spaces. A consequence of
this approach is that, in order to verify programs, a 
user of \THESYSTEM\ can focus on higher-level reasoning about probabilistic
programs, rather than the specific details of QBS.

We validate our design through a series of examples from statistics,
Bayesian inference and machine learning. We also demonstrate that our
systems can be used as a framework where other
program logics can be embedded. We show this in a parametrized way by
using \THESYSTEM\ to define a family of \emph{graded
  $\top\top$-liftings}, a logical relation-like technique to construct
predicates/relations over probability distributions, starting from
predicates/relations over values.  As a concrete application, we
embed two recent probabilistic logics: a \emph{union bound} logic for
reasoning about accuracy~\citep{BartheGGHS16ICALP}, and a logic for reasoning
about probability distributions through 
couplings~\citep{DBLP:conf/esop/0001BBBG018}.

Overall, our work provides a fresh, verification-oriented perspective
on quasi-Borel spaces, and contributes to establish their status as a
sound, simple and natural theoretical framework for practical verification of higher-order
probabilistic programs.

In technical terms,
our framework follows the presentation style introduced by 
\citet{Aguirre:2017:RLH:3136534.3110265} to reason about deterministic
higher-order programs. We extend this approach to higher-order
probabilistic programs with continuous random variables. A similar 
approach has been used for discrete random variables
by~\citet{DBLP:conf/esop/0001BBBG018} in order to reason about unary and
relational properties of Markov chains.  Our contribution differs
significantly from the one by~\citet{DBLP:conf/esop/0001BBBG018}.  Assertions in their framework 
  are non-probabilistic and are interpreted first over deterministic values, and
  then over distributions over values by probabilistic lifting.
Instead, in  \THESYSTEM\ we can reason about
  (monadic) probabilistic expressions directly in assertions. This is
  a key component in expressing probabilistic properties such as
  the convergence of the  expectation of an
  expression directly. Moreover,~\citet{DBLP:conf/esop/0001BBBG018} support
  analysis of probabilistic programs via
  coupling arguments only. \THESYSTEM's proof rules are more expressive: they allow reasoning 
  about probabilities within the logic.

\section{\THESYSTEM\ by example}
\label{sec:section2}
In this section we introduce the general ideas behind \THESYSTEM\ through
two examples.
%
\paragraph{Continuous Observations: Two Uniform Samples}
This warm-up example serves as an introduction to Bayesian conditioning
and how we can reason about it in our system.
Let us consider the following program $\mathtt{twoUs}$:
\[
\begin{array}{r@{}l}
\mathtt{twoUs}\equiv\
&\LET u_1 =\mathtt{Uniform}(0,1)\ \IN\ \LET u_2 =\mathtt{Uniform}(0,1)\ \IN\\
&\LET y = u_1\otimes u_2\ \IN\\
&\OBSERVE \ y \AS \lambda{x}.(\IF \pi_1(x)<.5 \lor
\pi_2(x)>.5 \THEN 1 \ELSE 0)
\end{array}
\]
The first line defines two uniform distributions  $u_1$ and
$u_2$. The second line pairs the two distributions together using the product measure of
$u_1$ and $u_2$ which we denote $u_1\otimes u_2$ (this is defined
formally in Section \ref{sec:PCFP}). 
Then, the third line performs
Bayesian conditioning on this product measure using the construction $\OBSERVE$. The \emph{prior} $y$ gets conditioned by the
\emph{likelihood function} corresponding to the observation $\pi_1(x)<.5 \lor
\pi_2(x)>.5$, and a \emph{posterior} is computed. In this simple example, this
is morally equivalent to giving score $1$ to the traces that do satisfy the
assertion, and score $0$ to the ones that do not satisfy it, and rescaling the
distribution. In general, we can use the conditioning construct with an
arbitrary likelihood function to perform more general inference. After the
observation, the posterior is a uniform distribution over the set $\{(x_1,x_2)
\mid x_1 < .5 \lor x_2 > .5\}$.

The simple property we will show is that $\Pr_{(x_1,x_2) \sim \mu}[x_1 > .5 ] =
1/3$, where $\mu$ is the posterior after the observation and the pair
$(x_1,x_2)$ is distributed by $\mu$.  This is expressed in the unary logic
\UPL---since this is a unary property---through the following judgment:
\[
{\textstyle
\vdash_{\mathrm{\UPL}}\mathtt{twoUs}: M[\REAL \times \REAL] \mid \Pr_{z \sim \mathbf{r}}[\pi_1(z) > .5] = 1/3
}
\]
where the distinguished variable $\mathbf{r}$ in the logical assertion
represents the given term $\mathtt{twoUs}$ and the variable $z$ is bound by
$\Pr_{z\sim \mathbf{r}}[\ldots]$ and it is used to represent the value sampled from the probability
distribution $\mathbf{r}$.
We show informally how to derive this assertion. The system
\UPL\ allows us to reason in a syntax-directed manner. Since the
program starts with three let bindings, the first step will be to apply the rule for let bindings
three times. This rule, which we will present formally in
Section~\ref{sec:logic:program}, moves $u_1,u_2$ and $y$ plus the
logical assertions about them into the context. The resulting
judgement is:
\[
\begin{array}{r@{}l}
&
\Pr_{z \sim u_1}[z > .5] = 1/2,  
\Pr_{z \sim u_2}[z < .5] = 1/2,
\Pr_{z \sim u_1}[\top] = 1,  
\Pr_{z \sim u_2}[\top] = 1,
y = u_1\otimes u_2
\\
& \vdash_{\mathrm{\UPL}} \OBSERVE y \AS \lambda{x}.(\IF \pi_1(x)<.5 \lor
\pi_2(x)>.5 \THEN 1 \ELSE 0)\colon
M[\REAL \times \REAL] \mid\\
&\quad \Pr_{z \sim \mathbf{r}}[\pi_1(z) > .5] = 1/3
\end{array}
\]
where for simplicity we omitted the typing context.  The logical assertions on
$u_1$ and $u_2$ can be easily discharged using the assumption that they are
distributed uniformly as $\mathtt{Uniform}(0,1)$, i.e. uniformly
between $0$ and $1$.  To finish the proof, we want to use the
fact that $\OBSERVE$ corresponds to conditioning. In \UPL\ we can do this using
the following special rule that internalizes the Bayesian properties of
$\OBSERVE$:\footnote{%
We introduce the rule here to give some intuition, but this is also discussed in
Section~\ref{sec:logic:program} after introducing \THESYSTEM.}
\[
{
\RightLabel{[Bayes]}
\AxiomC{
$\Gamma, x\colon \tau \vdash e' \colon \mathtt{bool}$
\quad
$\Gamma, x \colon \tau \vdash e'' \colon \mathtt{bool}$
\quad
$\Gamma \vdash e \colon M[\tau]$
}
\UnaryInfC{
$
\Gamma\mid\Psi\vdash_{\mathrm{\NameOfUnaryLogic}}
\OBSERVE e \AS \lambda{x}.(\IF e'\THEN 1\ELSE 0)
{\colon} M[\tau] \mid
\Pr_{y \sim \mathbf{r}}[e''[y/x]]
=
\frac{\Pr_{x \sim e}[e' \land e'' ]}{\Pr_{x \sim e}[e']}
$}
\DisplayProof
}
\]
This rule corresponds to a natural reasoning principle (derived by
Bayes' theorem) for $\OBSERVE$ when we have a boolean condition
as the likelihood function: the probability of an event $e''$ under the
\emph{posterior} distribution is equal to the probability of the
intersection of the event $e''$ and the observation $e'$, under the
\emph{prior} distribution $e$, divided by the probability of $e'$
under the prior distribution $e$.

To apply this rule we need to rewrite the postcondition into
the appropriate shape: a fraction that has the
probability of a conjunction of events in the numerator and the
probability of the observed event in the denominator. This can be done in \UPL\ through
\emph{subtyping} which lets us reason directly in the logic \PL, where we can
prove the following judgment:
\[
\begin{array}{r@{}l}
&
\Pr_{z \sim u_1}[z > .5] = 1/2,  
\Pr_{z \sim u_2}[z < .5] = 1/2,
\Pr_{z \sim u_1}[\top] = 1,  
\Pr_{z \sim u_2}[\top] = 1,
y = u_1\otimes u_2\\
&\vdash_{\mathrm{\PL}} \frac{\Pr_{z \sim y}[(\pi_1(z)<.5 \lor\pi_2(z)>.5) \land (\pi_1(z)>.5)]}
             {\Pr_{z \sim y}[\pi_1(z)<.5 \lor\pi_2(z)>.5]} = \frac{1/4}{3/4} =
             1/3
\end{array}
\]
Using this equivalence and subtyping we can rewrite the judgment we need to prove as follows:
\[
\begin{array}{r@{}l}
&
\Pr_{z \sim u_1}[z > .5] = 1/2,  
\Pr_{z \sim u_2}[z < .5] = 1/2,
\Pr_{z \sim u_1}[\top] = 1,  
\Pr_{z \sim u_2}[\top] = 1,
y = u_1\otimes u_2
\\
&\vdash_{\mathrm{\UPL}} \OBSERVE y \AS  \lambda{x}.(\IF \pi_1(x)<.5 \lor \pi_2(x)>.5 \THEN 1 \ELSE 0)
: M[\REAL \times \REAL] \mid
\\
&\quad \Pr_{z \sim \mathbf{r}}[\pi_2(z)>.5] = 
\frac{\Pr_{z \sim y}[(\pi_1(z)<.5 \lor\pi_2(z)>.5) \land (\pi_1(z)>.5)]}
             {\Pr_{z \sim y}[\pi_1(z)<.5 \lor\pi_2(z)>.5]}
\end{array}
\]
and this can be proved by applying the [Bayes] rule above, concluding the proof.
We saw different components of \THESYSTEM at work here: unary rules, subtyping, and a special rule for $\OBSERVE$. All these components can be assembled in more complex examples, as we show in Section~\ref{sec:examples}.

\paragraph{Monte Carlo Approximation}
As a  second example, we show how to use \THESYSTEM\ to reason about other classical applications that do not use
observations. We consider reasoning about expected value and
variance of distributions.
Concretely, we show convergence in probability of an implementation of
the naive Monte Carlo approximation.  This algorithm considers
a distribution $d$ and a real-valued function $h$, and tries to
approximate the expected value of $h(x)$ where $x$ is sampled from $d$
by sampling a number $i$ of values from $d$ and computing their mean.

Consider the following implementation of Monte Carlo approximation:
\[
\begin{array}{r@{}l}
\mathtt{MonteCarlo}
\equiv&
 \LETREC f(i \colon \INT ) = \IF (i \leq 0) \THEN { \RETURN (0)}\\
&\ELSE \MLET m=f(i-1) \IN \MLET x=d \IN \RETURN ((1/i)\ast(h(x)+m\ast(i-1)))
\end{array}
\]
Our goal is to prove the convergence in probability of this algorithm, that is, 
the result can be made as accurate as desired by increasing the sample
size (denoted by $i$ above and $n$ below). This is formalized in the following \UPL\ judgment (we omit the typing context for simplicity):
\begin{equation}
\label{eq:pre_example:MonteCarlo:0}
\begin{array}{r@{}l}
& (\mathbb{E}_{x \sim d}[1] = 1), (\sigma^2 = \mathrm{Var}_{x \sim d}[h(x)]), (\mu = \mathbb{E}_{x \sim d}[h(x)]), (\varepsilon > 0) \vdash_{\mathrm\UPL} \\
& \mathtt{MonteCarlo}\colon \INT \to M[ \REAL ] \mid
	\forall n, (n > 0) \implies \Pr_{y \sim \mathbf{r}n}[|y-\mu| \geq \varepsilon] \leq {\sigma^2}/{n\varepsilon^2}
\end{array}
\end{equation}
Formally, we are showing that the probability that the computed mean $y$ differs
from the actual mean $\mu$ by more than $\varepsilon$ is upper bounded by a
value that depends inversely on $n$---more samples lead to better estimates.
To derive (\ref{eq:pre_example:MonteCarlo:0}) in \UPL\ we need to perform two steps:
\begin{itemize}
\item Calculating the mass, mean, and variance of $\mathtt{MonteCarlo}$ in \NameOfUnaryLogic:
\begin{equation}
\label{eq:pre_example:MonteCarlo:00}
\begin{array}{rl}
&(\mathbb{E}_{x \sim d}[1] = 1), (\sigma^2 = \mathrm{Var}_{x \sim
  d}[h(x)]),(\mu = \mathbb{E}_{x \sim d}[h(x)]), (\varepsilon > 0) \vdash_{\mathrm\UPL} \\
&\mathtt{MonteCarlo}\colon \INT \to M[ \REAL ] \mid \\
& \forall n \colon \INT. (n > 0) \implies 
(\mathbb{E}_{y \sim \mathbf{r}n}[1] = 1)
\land
(\mathbb{E}_{y \sim \mathbf{r}n}[y] = \mu)
\land
(\mathrm{Var}_{y \sim \mathbf{r}n}[y] = {\sigma^2}/{n})
\end{array}
\end{equation}
\item Applying the Chebyshev inequality (formula \ref{eq:MHOL:lemma:Chebyshev_inequality} in Section \ref{sec:markov_chebyshev}) to (\ref{eq:pre_example:MonteCarlo:00}) using subtyping.
\end{itemize}
We focus on the proof of (\ref{eq:pre_example:MonteCarlo:00}), which
is carried out
by induction on $n$. In our system, the rule for $\LETREC$ lets us prove inductive
properties of (terminating) recursive
functions by introducing an inductive hypothesis into
the set of assertions that can only be instantiated for smaller arguments. After applying this
rule, the new goal is:
\[ \phi_{\mathrm{IH}} \equiv
\forall n \colon \INT. (n < i){\implies}(n > 0){\implies} 
(\mathbb{E}_{y \sim f(n)}[1] = 1)
\land
(\mathbb{E}_{y \sim f(n)}[y] = \mu)
\land
(\mathrm{Var}_{y \sim f(n)}[y] = {\sigma^2}/{n})
\]
On this, we can apply a rule for case distinction according to the two branches of the
if-then-else, which gives us the following two premises:
\[
\begin{array}{rl}
& \Psi, (i\leq 0) \vdash
 \RETURN (0) 
\mid \psi
\\
&\Psi, (i > 0) \vdash
\MLET m = f(i-1) \IN (\MLET x=d \IN
\RETURN (\frac{1}{i}(h(x)+m\ast(i-1))))
\mid \psi
\end{array}
\]
where $\Psi = (\mathbb{E}_{x \sim d}[1] = 1), (\mu = \mathbb{E}_{x \sim d}[h(x)]), (\sigma^2 = \mathrm{Var}_{x \sim d}[h(x)]), (i > 0),
\phi_{\mathrm{IH}}$ and
$\psi = (\mathbb{E}_{y \sim \mathbf{r}i}[1] = 1)
\land
(\mathbb{E}_{y \sim \mathbf{r}i}[y] = \mu)
\land
(\mathrm{Var}_{y \sim \mathbf{r}i}[y] = {\sigma^2}/{i})
$.
The first premise is obvious since the assumptions $(i > 0)$ and $(i \leq 0)$ are
contradictory. The second premise follows from subtyping applied to a
\NameOfUnderlyingLogic-judgment that is proved by instantiating the induction hypothesis
with $i-1$ and applying axioms on expected values. This concludes the proof.

Again, we have seen here several different components of \THESYSTEM: unary
rules (including the rule for inductive reasoning), subtyping, and the use of equations and axioms. We further illustrate these components of \THESYSTEM\ as well as others in verifying more
involved examples (including relational examples) in Section~\ref{sec:examples}.

\paragraph{Remark: }
In this work, we assume that $\OBSERVE$ is always defined: we don't
consider programs ``observing'' events with zero probability.  We make
this simplification to focus here on program verification without the
need to reason about whether a $\OBSERVE $ statement is defined or
not. This approach was used, for example, in~\citet{BartheFGAGHS16} to
reason about differential privacy for Bayesian processes.  We believe
that the problem of identifying ways to reason about when a $\OBSERVE
$ statement is defined is an important one, but it is orthogonal to
the formal reasoning we consider here. Other work has focused on this
problem~\citep{BorgstromGGMG11,Scibior:2017:DVH:3177123.3158148,ShanR17,HeunenKSY17}. In
a similar way, we consider only programs that terminate, without
stipulating a specific method to prove termination.

\section{\PCFP: a higher-order probabilistic programming language}
\label{sec:PCFP}
We present the probabilistic language \PCFP\ we use in this paper. The language is an extension of the simply-typed lambda calculus 
with products, coproducts, natural numbers, lists, (terminating) recursion and \emph{the monadic type for probability}.
The types of \PCFP\ are defined by the following grammar.
\[
\begin{array}{rlll}
\tilde{\tau}
&\mathbin{::=} & \mathtt{unit}
\mid \mathtt{bool}
\mid \INT
\mid \REAL
\mid \mathtt{pReal}
\mid \tilde{\tau} \times \tilde{\tau}
\mid \mathtt{list}(\tilde{\tau})
&\text{(Basic Types)}\\
\tau &\mathbin{::=} &\tilde{\tau}
\mid M[\tau] 
\mid \tau \to \tau
\mid \tau \times \tau
\mid \mathtt{list}(\tau)
&\text{(Types)}
\end{array}
\]
We distinguish two sorts of types: Basic Types and Types.
The former, as the name suggests, include standard basic types (where
$\mathtt{pReal}$ is the type of positive real numbers), and products and lists of them. The latter include a monadic type $M[\tau]$ for general measures
on $\tau$, as well as function and product types. As we will
see later in Section~\ref{sec:semantics}, Basic Types will be interpreted in standard Borel spaces, while for general Types we will need quasi-Borel
spaces. 
The language of \PCFP\ expressions is defined by the following grammar.
\[
\begin{array}{rll}
e
&\mathbin{::=} & x
\mid c
\mid f
\mid e~e
\mid \lambda x.e
\mid \lrangle{e,e}
\mid \pi_i(e)
\mid \mathtt{case}~e~\mathtt{with}~[ d_i \overline{x_i} \Rightarrow e_i ]_i
\mid \mathtt{letrec}~f x = e
\\
& &
\mid \mathtt{return}~e
\mid \mathtt{bind}~e~e
\mid \OBSERVE ~e~\AS~e
\mid \mathtt{Uniform} (e,e)
\mid \mathtt{Bern} (e)
\mid \mathtt{Gauss} (e,e)
\end{array}
\]
Most of the constructs are standard. We use $c$ to
range over a set of basic constants and $f$ to range over a set of
primitive functions. We have monadic constructions $\mathtt{return}~e$ and
$\mathtt{bind}~e_1~e_2$ for the monadic type $M[\tau]$, a conditioning
construction $\OBSERVE ~e_1~\AS~e_2$ for computing the posterior
distribution given a prior distribution $e_1$, and a likelihood
function $e_2$, and primitives representing basic probability distributions.

\PCFP\ expressions are simply typed, using rules that are mostly standard. We show only selected rules here:
\begin{gather*}
{
\AxiomC{
$
\Gamma  \vdash e \colon M[\tau]\qquad
\Gamma \vdash e' \colon \tau \to \mathtt{pReal} 
$
}
\UnaryInfC{$\Gamma \vdash \OBSERVE ~e~\AS~e' \colon  M[\tau] $}
\DisplayProof
}
\qquad
{
\AxiomC{
$
\Gamma  \vdash e \colon M[\tau]\qquad
\Gamma \vdash e' \colon \tau \to M[\tau'] 
$
}
\UnaryInfC{$\Gamma \vdash \BIND e~ e' \colon M[\tau']$}
\DisplayProof
}
\\
{
\AxiomC{
$
\Gamma, f \colon I \to \sigma, x \colon I\vdash e \colon \sigma
$
\quad
$
I \in \{\INT,\LIST(\tau)\}
$
\quad
$
\mathit{Terminate}(f,x,e)
$
}
\UnaryInfC{$\Gamma \vdash \LETREC f x = e \colon I \to \sigma$}
\DisplayProof
}
\end{gather*}
Here, $\mathit{Terminate}(f,x,e)$ is any termination criterion which
ensures that all recursive calls are on smaller arguments.  We also
consider a basic equational theory for expressions based on
$\beta$-reduction, extensionality and monadic rules. These are also
standard and we omit them here. We enrich this equational theory with
axioms and equations reflecting common reasoning principles for
probabilistic programming in Section~\ref{sec:axioms}.

For convenience, we  use some syntactic sugar: 
$(\LET x = e_1 \IN e_2) \equiv (\lambda{x}.~e_2)e_1$,
$(\MLET x = e_1 \IN e_2) \equiv \BIND e_1~ (\lambda{x}.~e_2)$, and
$e_1 \otimes e_2 \equiv \BIND e_1~(\lambda{x}.~\BIND e_2~ (\lambda{y}.\RETURN\lrangle{x,y}))$.
Thanks to the commutativity of $M$ (the Fubini-Tonelli equality; see Section~\ref{sec:axioms}), the semantics of $e_1 \otimes e_2$ is exactly the \emph{product measure} of $e_1$ and $e_2$.

\section{\PL: A Logic for Probabilistic Programs}
\label{sec:MHOL}
In this section we introduce a logic, named \PL, for reasoning about
higher-order
probabilistic programs. This logic forms the basis of \THESYSTEM. \PL\
contains basic predicates over expressions of \PCFP.  To support
more natural verification in \THESYSTEM, we enrich \PL\ with a set of axioms
encompassing a wide variety of reasoning principles over probabilistic programs.


\PL\ follows the style of higher-order simple predicate logic
(HOL)~\cite{JacobsM93}, where quantified variables can be of arbitrary
types, but extends HOL with assertions about
probabilistic constructions.
Terms and formulas of \PL\ are defined by the following 
grammar:
\[
\begin{array}{rlll}
t   &{::=}& e \mid \mathbb{E}_{x \sim t}[t(x)] \mid \mathtt{scale}(t,t)
      \mid \mathtt{normalize}(t) \qquad\qquad\qquad\quad\qquad\text{(enriched expressions)}\\
\phi &{::=}& (t=t)\mid (t<t) \mid \top \mid \bot \mid \phi \land \phi
       \mid \phi \implies \phi \mid \neg \phi \mid \forall{x \colon
       \tau}.\phi \mid \exists{x \colon \tau}.\phi\quad \text{(logical formulas)}
\end{array}
\]
Enriched expressions enrich \PCFP\ expressions
 $e$ with constructions for 
expectations $\mathbb{E}_{x \sim t}[t(x)]$, rescaling of measures
$\mathtt{scale}(t,t)$, and normalization $\mathtt{normalize}(t)$.
A logical formula $\phi$ is a formula built over
equalities and inequalities between enriched expressions. 
%

Similar to expressions in \PCFP, we consider only well-typed enriched
expressions. Typing rules for the additional
constructs of \PL\ are the following.
\[
{
\AxiomC{$\Gamma \vdash t_1 \colon M[\tau]$ \ \ $\Gamma \vdash t_2 \colon \tau \to \mathtt{pReal}$}
\UnaryInfC{$\Gamma \vdash \mathbb{E}_{x \sim t_1}[t_2(x)] \colon\mathtt{pReal}$}
\DisplayProof
}
\ \
{
\AxiomC{$\Gamma \vdash t_1 \colon M[\tau]$ \ \ $\Gamma \vdash t_2 \colon \tau \to \mathtt{pReal}$ }
\UnaryInfC{$\Gamma \vdash \mathtt{scale}(t_1,t_2) \colon M[\tau]$}
\DisplayProof
}
\ \
{
\AxiomC{$\Gamma \vdash t \colon M[\tau]$}
\UnaryInfC{$\Gamma \vdash \mathtt{normalize}(t) \colon M[\tau]$}
\DisplayProof
}
\]
Intuitively, $\mathbb{E}_{x \sim t_1}[t_2(x)] $ is the expected value
of the function $t_2$ over the distribution $t_1$;
$\mathtt{scale}(t_1,t_2)$ is a distribution obtained from an
underlying measure $t_1$ by rescaling its components by means of the density function $t_2$;
$\mathtt{normalize}(t)$ is the normalization of a measure $t$ to a
probability distribution (a measure with mass $1$).
%
Expectations of real-valued functions are defined by the difference
of positive and negative parts.
Precisely, for given $\Gamma \vdash t_1 \colon M[\tau]$ and $\Gamma \vdash t_2 \colon \tau \to \REAL$,
we define the expectation as the following syntactic sugar:\footnote{%
We use absolute values $| - | \colon \REAL \to \mathtt{pReal}$ to adjust the typing.
The right-hand side is undefined if both expectations are infinity.
We could avoid this kind of undefinedness by stipulating
$\infty-\infty = -\infty$, but we leave it undefined since this
actually never shows up in our concrete examples.}
\[
\mathbb{E}_{x \sim t_1}[t_2(x)]
\equiv 
\mathbb{E}_{x \sim t_1}[\IF t_2(x) > 0 \THEN | t_2(x) | \ELSE 0]
- \mathbb{E}_{x \sim t_1}[\IF t_2(x) < 0 \THEN | t_2(x) | \ELSE 0]
\]
We can also define variance and probability in terms of expectation:
\[
\Pr_{x \sim e}[e']
\equiv \mathbb{E}_{x \sim e}[\mathtt{if}~e'~\mathtt{then}~1~\mathtt{else}~0]
\qquad
\label{eq:variance:definition}
\mathrm{Var}_{x \sim e_1}[e_2]
\equiv \mathbb{E}_{x \sim e_1}[(e_2)^2] - (\mathbb{E}_{x \sim e_1}[e_2])^2
\]
A \PL\ judgment has the form $\Gamma \mid \Psi \vdash_{\mathrm{\PL}}
\phi$ where $\Gamma$ is a context assigning types to variables, $\Psi$
is a set of formulas well-formed in the context $\Gamma$, and $\phi$
is a formula also well-formed in $\Gamma$. Rules to derive
well-formedness judgments $\Gamma \vdash \phi\;\mathsf{wf}$ are rather
standard and we omit them here. We often refer to $\Psi$ as the
\emph{precondition}.
The proof rules of \PL\ are rather standard, so we show only a
selection in Figure~\ref{fig:PL-rules}. 

We extend equational rules and axioms for standard expressions to enriched expressions. We also
introduce some axioms specific to enriched expressions in Section~\ref{sec:axioms}.
\begin{figure}
\begin{gather*}
{
\AxiomC{$\Gamma \vdash t \colon \tau$ \quad $\Gamma \vdash t' \colon \tau$ \quad $t =_{\beta\iota\mu} t'$}
\RightLabel{[CONV]}
\UnaryInfC{$\Gamma \mid \Psi \vdash_{\mathrm{\PL}} t = t'$}
\DisplayProof
}
\qquad
{
\AxiomC{$\Gamma \mid \Psi \vdash_{\mathrm{\PL}} \phi[t/x]$ \quad $\Gamma \mid \Psi \vdash_{\mathrm{\PL}} t = u$ }
\RightLabel{[SUBST]}
\UnaryInfC{$\Gamma \mid \Psi \vdash_{\mathrm{\PL}} \phi[u/x]$}
\DisplayProof
}
\\
{
\AxiomC{$\phi \in \Psi$}
\RightLabel{[AX]}
\UnaryInfC{$\Gamma \mid \Psi \vdash_{\mathrm{\PL}} \phi$}
\DisplayProof
}
\qquad
{
\AxiomC{$\Gamma \mid \Psi,\psi \vdash_{\mathrm{\PL}} \phi$ }
\RightLabel{[$\Rightarrow_I$]}
\UnaryInfC{$\Gamma \mid \Psi \vdash_{\mathrm{\PL}}\psi \implies \phi$}
\DisplayProof
}
\qquad
{
\AxiomC{$\Gamma \mid \Psi \vdash_{\mathrm{\PL}} \psi \implies \phi$  \quad $\Gamma \mid \Psi \vdash_{\mathrm{\PL}} \psi$ }
\RightLabel{[$\Rightarrow_E$]}
\UnaryInfC{$\Gamma \mid \Psi \vdash_{\mathrm{\PL}} \phi$}
\DisplayProof
}
\end{gather*}
\caption{Selection of rules for the \PL\ logic.}
\label{fig:PL-rules}
\end{figure}

\section{Axioms and Equations of Assertions for Statistics}
\label{sec:axioms}
%
In this section, we introduce axioms and equations for the logic \PL.
First, we have the standard equational theory for (enriched) expressions covering $\alpha$-conversion, $\beta$-reduction, extensionality, and the monadic rules of the monadic type $M$. We omit these standard rules here.
The monadic type $M$ also has commutativity (the Fubini-Tonelli equality), represented by the following equation:
%
\begin{equation}
\label{eq:MHOL:commutativity}
(\BIND e_1 ~\lambda{x}.(\BIND e_2 ~\lambda{y}.e(x,y)))
=
(\BIND e_2 ~\lambda{y}.(\BIND e_1 ~\lambda{x}.e(x,y)) \quad (x, y \text{fresh})
\end{equation}

We introduce some equalities pertaining to expectation.
We have the monotonicity and linearity of expectation (axioms \ref{ineq:MHOL:expected_monotonicity}, \ref{eq:MHOL:expected:linearity}), and we also have the Cauchy-Schwarz inequality (axiom \ref{ineq:expectation:Cauchy-Schwarz}). 
Finally, we can transform variables in expressions related to expectation by substitution (axiom \ref{eq:MHOL:expected:variable_transformation}).
\begin{gather}
\label{ineq:MHOL:expected_monotonicity}
(\forall{x \colon \tau}.~e' \geq 0) \implies \mathbb{E}_{x \sim e}[e'] \geq 0
\\
\label{eq:MHOL:expected:linearity}
\mathbb{E}_{x \sim e}[e_1 \ast e_2] = e_1 \ast \mathbb{E}_{x \sim e}[e_2]
\quad (x\notin \mathrm{FV}(e_1)),
\qquad
\mathbb{E}_{x \sim e}[e_1 + e_2] = \mathbb{E}_{x \sim e}[e_1] + \mathbb{E}_{x \sim e}[e_2] 
\\
\label{ineq:expectation:Cauchy-Schwarz}
(\mathbb{E}_{x \sim e}[e_1 \ast e_2])^2 \leq \mathbb{E}_{x \sim e}[e_1^2] \ast \mathbb{E}_{x \sim e}[e_2^2]
\\
\label{eq:MHOL:expected:variable_transformation}
\mathbb{E}_{x \sim \BIND e~\lambda{y}.\RETURN(e')}[e'']
=\mathbb{E}_{y \sim e}[e''[e'/x]]
\end{gather}
We also introduce some basic equalities pertaining to observation, rescaling,
and normalization. 
\begin{gather}
\label{eq:MHOL:expected:scaling}
\mathbb{E}_{x \sim d'}[h(x)\cdot g(x)] = \mathbb{E}_{x \sim \mathtt{scale}(d',g)}[h(x)].
\\
\label{eq:MHOL:scaling1}
(\mathtt{scale}(\mathtt{scale}(e_1, e_2),~ e_3) =
(\mathtt{scale}(e_1, \lambda{x}.(e_2(x) \ast e_3(x))),
\quad
e = \mathtt{scale}(e,\lambda{\_}.1)
\\
\label{eq:MHOL:scaling2}
(\MLET x = \mathtt{scale}(e_1, e_2) \IN e_3(x))
=
(\MLET x = e_1 \IN \mathtt{scale}(e_3(x), \lambda{u}.e_2(x)))
\\
\label{eq:MHOL:scaling3}
\mathtt{scale}(e_1, e_2)\otimes \mathtt{scale}(e_3, e_4)
=
\mathtt{scale}( e_1 \otimes e_2, \lambda{w}. e_2(\pi_1(w)) \ast e_4(\pi_2(w)))
\\
\label{eq:MHOL:scaling4}
\mathbb{E}_{y\sim e}[1] < \infty
\implies
(\BIND e'~\lambda{x}.e)=\mathtt{scale}(e,\mathbb{E}_{y\sim e'}[1])
\quad
(x \notin \mathrm{FV}(e))
\\
\label{eq:MHOL:observe}
(\OBSERVE e_1 \AS e_2) = \mathtt{normalize}(\SCALE (e_1,e_2))
\\
\label{eq:MHOL:normalize1}
\mathtt{normalize}(e) = \SCALE(e,\lambda{u}.1/\mathbb{E}_{x \sim e}[1]) \quad (u \notin \mathrm{FV}(\mathbb{E}_{x \sim e}[1]))
\\
\label{eq:MHOL:normalize3}
0 < \alpha < \infty
\implies \mathtt{normalize}(\mathtt{scale}(e_1, e_2))=\mathtt{normalize}(\mathtt{scale}(e_1, \alpha \ast e_2))
\end{gather}
We could also introduce the axioms for particular distributions such as
$\mathbb{E}_{x \sim \mathtt{Bern}(e)}[\IF x \THEN 1 \ELSE 0] = e$ ($0 \leq e \leq 1$) and
$\mathbb{E}_{x \sim \mathtt{Gauss}(e_1,e_2)}[x]=e_1$, but we do not do this here.
%
\subsection{Markov and Chebyshev Inequalities}
\label{sec:markov_chebyshev}
The axioms we introduced above are not only very basic but also very expressive. 
For instance, we can prove the Markov inequality (\ref{eq:MHOL:lemma:Markov_inequality}) and the Chebyshev inequality (\ref{eq:MHOL:lemma:Chebyshev_inequality}) in \PL\ using these axioms.
\begin{gather}
\label{eq:MHOL:lemma:Markov_inequality}
d \colon M[\REAL],~
a \colon \REAL
\vdash_{\mathrm{\NameOfUnderlyingLogic}}
(a > 0) \implies \Pr_{x \sim d}[ |x| \geq a] \leq \mathbb{E}_{x \sim d}[|x|]/a.\\
\label{eq:MHOL:lemma:Chebyshev_inequality}
\begin{array}{rl}
d \colon M[\REAL],
b \colon \REAL,
\mu \colon \REAL
&\vdash_{\mathrm{\NameOfUnderlyingLogic}}
\mathbb{E}_{x \sim d}[1] = 1 \land \mu = \mathbb{E}_{x \sim d}[x] \land b^2 > 0\\
&\implies \Pr_{x \sim d}[ |x - \mu| \geq b] \leq \mathrm{Var}_{x\sim d}[x]/b^2.
\end{array}
\end{gather}

\section{Unary/Relational Logic}
\label{sec:logic:program}
%

In this section, we introduce two specializations of \PL. The first one, \UPL{}, is
a unary logic to verify \emph{unary properties} of probabilistic programs.
More concretely, \UPL{} can be considered as a collection of inference rules derivable in \PL{}
specialized in proving formulas of form $\phi(e)$ by following the syntactic structure
of the distinguished subterm $e$ rather than the syntactic structure of $\phi$ itself.

The second one, \RPL{}, is a relational logic to verify \emph{relational properties} of probabilistic
programs. Similarly, it can be seen as a collection of inference rules derivable in \PL{}
to prove formulas of the form $\phi(e_1, e_2)$ by following the syntactic structure of
$e_1$ and $e_2$.


\subsection{The Unary Logic \UPL}

Judgments in the unary logic \UPL\ have the shape
$
\Gamma \mid \Psi \vdash_{\mathrm{\NameOfUnaryLogic}} e \colon \tau \mid \phi
$
where $\Gamma$ is a context, $\Psi$ is a set of assertions on the context variables, $e$ is a \PCFP\
expression, $\tau$ a type, and $\phi$ is an assertion (possibly)
containing
a distinguished variable $\mathbf{r}$ of type $\tau$ which is used to refer to the
expression $e$ in the formula $\phi$.

We give in Figure~\ref{fig:unary-pure-rules} a selection of proof rules in
\UPL. We have two groups of rules, one for pure computations
and the other for probabilistic computations. The rules are mostly \emph{syntax-directed}, 
with the exception
of the rule [u-SUB]. 
\begin{figure}
{\fbox{Rules for pure constructions.}}
{\small
\begin{gather*}
{
\AxiomC{$\Gamma \vdash x \colon \tau$ \quad $\Gamma \mid \Psi \vdash_{\mathrm\PL} \phi[x/\mathbf{r}]$}
\RightLabel{\tiny [u-VAR]}
\UnaryInfC{$\Gamma \mid \Psi \vdash_{\mathrm\UPL} x \colon \tau \mid \phi$}
\DisplayProof
}
\qquad 
{
\AxiomC{$\Gamma,x\colon\tau\mid\Psi,\phi'\vdash_{\mathrm\UPL} t \colon \sigma \mid \phi$}
\RightLabel{\tiny [u-ABS]}
\UnaryInfC{$\Gamma\mid\Psi \vdash_{\mathrm\UPL} \lambda{x\colon\tau}.t \colon \tau \to \sigma \mid \forall{x}.\phi' {\implies} \phi[\mathbf{r}x/\mathbf{r}]$}
\DisplayProof
}
\\
{
\AxiomC{$\Gamma\mid\Psi \vdash_{\mathrm\UPL} t \colon \sigma \mid \phi'$ \quad $\Gamma\mid\Psi \vdash_{\mathrm\PL} \phi'[t/\mathbf{r}] \implies \phi[t/\mathbf{r}]$}
\RightLabel{\tiny [u-SUB]}
\UnaryInfC{$\Gamma\mid\Psi \vdash_{\mathrm\UPL} t \colon \sigma \mid \phi$ }
\DisplayProof
}
\\ 
{
\AxiomC{$\Gamma\mid \Psi \vdash_{\mathrm\UPL}t\colon\tau\to\sigma\mid\forall{x}.\phi' {\implies} \phi[\mathbf{r}x/\mathbf{r}]$ \quad $\Gamma\mid\Psi\vdash_{\mathrm\UPL}u\colon\tau\mid\phi' $}
\RightLabel{\tiny [u-APP]}
\UnaryInfC{$\Gamma\mid\Psi\vdash_{\mathrm\UPL}tu\colon\sigma\mid\phi[u/x]$}
\DisplayProof
}
\\
{
\AxiomC{
$
\begin{array}{c@{}}
\mathit{Terminate}(f,x,e)\\
\Gamma, x\colon I, f \colon I \to \sigma \mid \Psi,\phi',\forall{m}.|m| < |x| \implies \phi'[m/x] \implies \phi[m/x][fm/\mathbf{r}]\vdash_{\mathrm\UPL} e \colon \sigma\mid\phi
\end{array}
$}
\RightLabel{\tiny [u-LETREC]}
\UnaryInfC{$\Gamma\mid \Psi \vdash_{\mathrm\UPL}\LETREC fx=e \colon I \to \sigma\mid \forall{x}.\phi'\implies\phi[\mathbf{r}x/\mathbf{r}]$}
\DisplayProof
}
\end{gather*}

{
\fbox{Rules for probabilistic constructions.}
}
\begin{gather*}
{
\RightLabel{\tiny [u-RET]}
\AxiomC{$\Gamma \mid \Psi \vdash_{\mathrm\UPL} e \colon \tau \mid \phi[\RETURN(\mathbf{r})/\mathbf{r}]$}
\UnaryInfC{$\Gamma \mid \Psi \vdash_{\mathrm\UPL} \RETURN(e) \colon M[\tau] \mid \phi$}
\DisplayProof
}
%
\\
%
{
\RightLabel{\tiny [u-BIND]}
\AxiomC{
$\begin{array}{c}
\Gamma \mid \Psi \vdash_{\mathrm\UPL} e \colon M[\tau_1] \mid \phi_1
\quad
\Gamma \mid \Psi \vdash_{\mathrm\UPL} e' \colon \tau_1 \to M[\tau_2] \mid
\forall{s\colon M[\tau_1]}.(\phi_1[s/\mathbf{r}]{\implies}
\phi_2[\BIND s~\mathbf{r}/\mathbf{r}])
\end{array}
$
}
\UnaryInfC{$\Gamma \mid \Psi \vdash_{\mathrm\UPL} \BIND e~e' \colon M[\tau_2] \mid \phi_2$}
\DisplayProof
}
%
\\
{ 
\infer
{
\begin{array}{c}\Gamma \mid \Psi \vdash_{\mathrm\UPL} e \colon M[\tau] \mid \phi_1\quad
\Gamma \mid \Psi \vdash_{\mathrm\UPL} e' \colon \tau \to \mathtt{pReal} \mid
\forall{s\colon M[\tau]}.(\phi_1[s/\mathbf{r}] {\implies} \phi_2[\OBSERVE s \AS \mathbf{r}/\mathbf{r}])
\end{array}
} {\Gamma \mid \Psi \vdash_{\mathrm\UPL} \OBSERVE e \AS e' \colon  M[\tau] \mid
  \phi_2}
{\text{\tiny [u-QRY]}}
}
\end{gather*}
}
\caption{A selection of \UPL\ rules. }
\label{fig:unary-pure-rules}
\end{figure}
We present a selection of the pure rules, the rest of them are as in UHOL
\citep{Aguirre:2017:RLH:3136534.3110265}. 
The rule [u-ABS] turns an assertion about the bound variable into a precondition
of its lambda abstraction. The rule [u-APP] proves a postcondition of a function
application provided that the argument satisfies the precondition of the function.
The rule [u-LETREC] allows proving properties
of terminating recursive functions by introducing an induction hypothesis in the context.

In the case of monadic computations, we have rules for monadic return, bind and $\OBSERVE$. 
It is worth noticing that in the second premise of both
the rules [u-BIND] and [u-QRY], the assertion quantifies over elements
in $M[\tau_1]$, while the input type of the function is just
$\tau_1$. This follows the spirit of the interpretation (see Section~\ref{sec:semantics}), where the
Kleisli lifting $(-)^\#$ is used to lift a function $\tau_1\to M[\tau_2]$ to a
function $M[\tau_1]\to M[\tau_2]$. The quantification over
distributions, rather than over elements, is essential to establish a
connection with the assertion on the first premise. This will be
useful to simplify the verification of our examples. 

We can prove that, despite being syntax directed, \UPL{} does not lose expressiveness relative to
\PL: The following theorem shows that the unary logic \UPL{} is sound and complete with respect to the underlying logic \PL.
\begin{theorem}[Equi-derivability of \PL{} and \UPL]
\label{thm:equiPLUPL}
The judgment 
$\Gamma\mid\Psi\vdash_{\mathrm{\NameOfUnderlyingLogic}} \phi[ e /
\mathbf{r}]$ is derivable if and only if
the judgment $\Gamma\mid\Psi\vdash_{\mathrm{\NameOfUnaryLogic}} e \colon\tau \mid
\phi$ is derivable.
\end{theorem}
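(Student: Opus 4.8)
The plan is to prove both directions of the biconditional by induction on the structure of the derivation, exploiting the fact that \UPL{} is designed as a syntax-directed repackaging of \PL{} reasoning. The two directions have fundamentally different characters: the ``only if'' direction (from \PL{} to \UPL) follows the syntactic structure of the expression $e$, while the ``if'' direction (from \UPL{} to \PL) follows the structure of the \UPL{} derivation itself.

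For the ``only if'' direction, I would proceed by induction on the structure of the expression $e$, showing that any \PL{} proof of $\phi[e/\mathbf{r}]$ can be reorganized into a syntax-directed \UPL{} derivation of $e : \tau \mid \phi$. The base case is a variable $x$, handled directly by [u-VAR], whose premise is exactly a \PL{} judgment of the required form. For each compound expression, I would use the corresponding \UPL{} rule: for a lambda abstraction $\lambda x.t$, the key observation is that proving $\phi[\lambda x.t/\mathbf{r}]$ in \PL{} is equivalent, via $\beta$-reduction ([CONV]) and the instantiation of the universal quantifier, to proving $\forall x.\, \phi'\implies\phi[\mathbf{r}x/\mathbf{r}]$, which lets me invoke [u-ABS] after applying the induction hypothesis to the body $t$. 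The probabilistic cases ($\RETURN$, $\BIND$, $\OBSERVE$) are the interesting ones: here I would use the substitution of enriched expressions and the monadic equational axioms to massage the \PL{} formula into the shape demanded by [u-RET], [u-BIND], or [u-QRY]. The quantification over $M[\tau_1]$ in the premises of [u-BIND] and [u-QRY] is handled by instantiating the quantified distribution variable $s$ with the concrete subterm and discharging the hypothesis $\phi_1[s/\mathbf{r}]$ using the induction hypothesis on the first subexpression.

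For the ``if'' direction, I would induct on the derivation of the \UPL{} judgment, translating each \UPL{} rule into a derived sequence of \PL{} steps. The crucial point is that [u-SUB] is nothing but an application of [SUBST]/implication elimination in \PL, and that each syntax-directed rule corresponds to combining the \PL{} proofs of its premises with $\beta$-conversion and the relevant axiom. For instance, [u-APP] is discharged by instantiating the universally quantified premise about the function at the argument $u$ and applying modus ponens ([$\Rightarrow_E$]) after establishing the precondition $\phi'[u/\mathbf{r}]$; [u-BIND] and [u-QRY] are justified by instantiating the distribution quantifier $s$ at the concrete first argument and using the variable-transformation and $\BIND$/$\OBSERVE$ axioms of Section~\ref{sec:axioms} to identify the substituted formula with $\phi_2[e/\mathbf{r}]$.

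The main obstacle I anticipate is the treatment of the recursion rule [u-LETREC] together with the probabilistic monadic rules, because these are exactly the places where the quantifier structure and the enriched-expression substitutions interact most delicately. For [u-LETREC], the soundness of the induction hypothesis $\forall m.\,|m|<|x|\implies\cdots$ relies on the termination side condition $\mathit{Terminate}(f,x,e)$ and on a well-founded induction principle that must itself be derivable (or admissible) in \PL; matching this against the \PL-side induction requires care to ensure the inductive assertion is instantiated only at strictly smaller arguments. For [u-BIND] and [u-QRY], the mismatch between the element type $\tau_1$ appearing in the source and the distribution type $M[\tau_1]$ over which the second premise quantifies means the equivalence hinges on correctly aligning the Kleisli-lifted substitution $\BIND s~\mathbf{r}/\mathbf{r}$ with the concrete monadic term; getting the substitutions and the freshness conditions exactly right, so that the \PL{} and \UPL{} formulas coincide after the monadic axioms are applied, is where the bulk of the technical work will lie.
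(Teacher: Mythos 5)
Your direction from \UPL{} to \PL{} (induction on the \UPL{} derivation, turning [u-SUB] into implication reasoning in \PL{} and instantiating the quantified premises of [u-APP]/[u-BIND]/[u-QRY] at the concrete subterms) is the standard argument, and it matches the intent of the paper, which presents \UPL{} as a collection of rules derivable in \PL{} in the style of RHOL. The gap is in the converse (completeness) direction. Your plan is to induct on the structure of $e$ and, at each constructor, ``massage'' the given postcondition $\phi$ into the shape demanded by the corresponding syntax-directed rule; in particular you assert that proving $\phi[\lambda x.t/\mathbf{r}]$ in \PL{} is equivalent to proving $\forall x.\,\phi'\implies\phi[\mathbf{r}x/\mathbf{r}]$. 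This is false for arbitrary $\phi$: the conclusion of [u-ABS] has a constrained logical shape (a universally quantified implication about \emph{applications} of $\mathbf{r}$), an arbitrary property of a function need not be equivalent to any formula of that shape, and you never specify what $\phi'$ should be. The same problem recurs in exactly the cases you flag as delicate ([u-BIND], [u-QRY], [u-LETREC]): there is no uniform way to rewrite an arbitrary $\phi$ into the quantified-over-distributions form those rules produce, so the induction as you set it up does not close.

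The way this direction actually goes through --- and the reason [u-SUB] is in the system at all --- is to separate the two concerns. First prove, by induction on $e$, the single canonical judgment $\Gamma\mid\Psi\vdash_{\mathrm{\UPL}} e\colon\tau\mid \mathbf{r}=e$: each syntax-directed rule is only ever instantiated with this self-referential postcondition, which \emph{does} have the required shape (for [u-ABS] take $\phi'\equiv\top$ and body postcondition $\mathbf{r}=t$, yielding $\forall x.\,\mathbf{r}x = t$, which entails $\mathbf{r}=\lambda x.t$ by extensionality and [CONV]; for [u-BIND]/[u-QRY] the second premise with $\phi_1\equiv(\mathbf{r}=e)$ collapses, after instantiating the quantified $s$, to the self-equality of the compound term). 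Then, given any \PL{} derivation of $\phi[e/\mathbf{r}]$, a single application of [u-SUB] finishes the proof: its side condition is $\Gamma\mid\Psi\vdash_{\mathrm{\PL}} (\mathbf{r}=e)[e/\mathbf{r}]\implies\phi[e/\mathbf{r}]$, i.e.\ $(e=e)\implies\phi[e/\mathbf{r}]$, which follows immediately from the given derivation. This restructuring makes completeness a short argument and confines all the quantifier and substitution bookkeeping you worry about (including the [u-LETREC] case) to the one fixed postcondition $\mathbf{r}=e$, rather than requiring it for every $\phi$.
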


\subsection{The Relational Logic \RPL}
Judgments in the relational logic \RPL\ have the shape
$
\Gamma \mid \Psi \vdash_{\mathrm{\NameOfUnaryLogic}} e_1 \colon \tau_1 \sim e_2 \colon \tau_2 \mid \phi
$,
where $\Gamma$ is a context, $\Psi$ is a set of assertions on the context, $e_1$ and $e_2$ are  \PCFP\
expressions, $\tau_1$ and $\tau_2$ are types, and $\phi$ is an assertion (possibly)
containing
two distinguished variables $\mathbf{r}_1$ of type $\tau_1$ and
$\mathbf{r}_2$ of type $\tau_2$ which are used to refer to the
expressions $e_1$ and $e_2$ in the formula $\phi$.
We give in Figure~\ref{fig:RPL-rules} a selection of proof rules in
\RPL. We present three groups of rules. The first group consists of
relational rules for pure computations. The second group consists of
two-sided relational rules for probabilistic computations,
meaning that the terms on both sides of the judgment have the same top-level
constructor. Finally, the third group consists of one-sided relational rules for
probabilistic computations, meaning that one of terms has
a specific top-level constructor while the other is arbitrary (not analyzed by the rule).
Here we show just the left-sided rules that have the constructor on the left;
right-sided rules are symmetrical. 
As in the unary case, we use an approach that is mostly \emph{syntax-directed} except for
the [r-SUB] rule.
\begin{figure}
{ 
\fbox{Relational rules for pure constructions - two-sided}
}
{
\begin{gather*}
{ 
\RightLabel{\tiny [r-ABS]}
\AxiomC{
$\Gamma,x_1 \colon \tau_1,x_2\colon \tau_1 \mid \Psi,\phi' \vdash_{\mathrm\RPL}t_1 \colon \sigma_1 \sim t_2 \colon \sigma_2 \mid \phi $
}
\UnaryInfC{
$\Gamma \mid \Psi \vdash_{\mathrm\RPL} \lambda{x_1}.t_1 \colon \tau_1 \to \sigma_1 \sim \lambda{x_2}.t_2 \colon \tau_2 \to \sigma_2 \mid \forall{x_1}.\forall{x_2}\phi'\implies \phi[\mathbf{r}_1x_1/\mathbf{r}_1,\mathbf{r}_2x_2/\mathbf{r}_2] $}
\DisplayProof
}
\\
{ 
\RightLabel{\tiny [r-APP]}
\AxiomC{
$\begin{array}{c@{}}
\Gamma \mid \Psi \vdash_{\mathrm\RPL}u_1 \colon \tau_1 \sim u_2\colon \tau_2 \mid \phi'
\\
\Gamma \mid \Psi \vdash_{\mathrm\RPL}t_1 \colon \sigma_1 \to \tau_1 \sim t_2 \colon \sigma_2 \to \tau_2 \mid
\forall{x_1}.\forall{x_2}.\phi'[x_1/\mathbf{r}_1,x_2/\mathbf{r}_2] {\implies} \phi[\mathbf{r}_1x_1/\mathbf{r}_1,\mathbf{r}_2x_2/\mathbf{r}_2] 
\end{array}
$
}
\UnaryInfC{
$\Gamma \mid \Psi \vdash_{\mathrm\RPL}t_1 u_1 \colon \sigma_1 \sim t_2 u_2\colon \sigma_2 \mid \phi $}
\DisplayProof
}
\\
{ 
\RightLabel{\tiny [r-SUB]}
\AxiomC{
$
\Gamma \mid \Psi \vdash_{\mathrm\RPL} t_1 \colon \sigma_1 \sim t_2 \colon \sigma_2  \mid \phi'
$
\quad
$
\Gamma \mid \Psi \vdash_{\mathrm\PL} \phi'[t_1/\mathbf{r}_1, t_2/\mathbf{r}_2] \implies \phi[t_1/\mathbf{r}_1, t_2/\mathbf{r}_2] 
$
}
\UnaryInfC{$\Gamma \mid \Psi \vdash_{\mathrm\RPL} t_1 \colon \sigma_1 \sim t_2 \colon \sigma_2  \mid \phi$}
\DisplayProof
}
\end{gather*}
}

{ \fbox{Relational rules for probabilistic constructions - two-sided}}
{\small
\begin{gather*}
{ 
\RightLabel{\tiny [r-RET]}
\AxiomC{$\Gamma \mid \Psi \vdash_{\mathrm\NameOfRelationalLogic} e_1 \colon \tau_1 \sim  e_2 \colon \tau_2 \mid \phi[\RETURN(\mathbf{r}_1)/\mathbf{r}_1,~\RETURN(\mathbf{r}_2)/\mathbf{r}_2]$}
\UnaryInfC{$\Gamma \mid \Psi \vdash_{\mathrm\NameOfRelationalLogic} \ \RETURN(e_1) \colon M[\tau_1]\sim\RETURN(e_2) \colon M[\tau_2] \mid \phi$}
\DisplayProof
}
\\
{ 
\RightLabel{\tiny [r-BIND]}
\AxiomC{
$
\begin{array}{c@{}}
\phi=\forall{s_1\colon M[\tau_1]}.\forall{s_2\colon M[\tau_2]}.~(\phi_1[s_1/\mathbf{r}_1,s_2/\mathbf{r}_2]\implies \phi_2[\BIND s_1~\mathbf{r}_1/\mathbf{r}_1, \BIND s_2~\mathbf{r}_2/\mathbf{r}_2])\\
\Gamma \mid \Psi \vdash_{\mathrm\NameOfRelationalLogic} e_1 \colon M[\tau_1] \sim e_2 \colon M[\tau_2] \mid \phi_1\quad
\Gamma \mid \Psi \vdash_{\mathrm\NameOfRelationalLogic}
e'_1 \colon \tau_1 \to M[\tau_3]
\sim
e'_2 \colon \tau_2 \to M[\tau_4]
\mid\phi
\end{array}
$
}
\UnaryInfC{$\Gamma \mid \Psi \vdash_{\mathrm\NameOfRelationalLogic} \BIND e_1~e'_1 \colon M[\tau_3] \sim \BIND e_2~e'_2  \colon  M[\tau_4] \mid \phi_2$}
\DisplayProof
}
\\
%
{ 
\RightLabel{\tiny [r-QRY]}
\AxiomC{
$
\begin{array}{c@{}}
\phi=\forall{s_1\colon M[\tau_1]},{s_2\colon M[\tau_2]}.\phi_1[s_1/\mathbf{r}_1,s_2/\mathbf{r}_2] 
\implies \phi_2[\OBSERVE s_1 \AS \mathbf{r}_1/\mathbf{r}_1,\OBSERVE s_2 \AS \mathbf{r}_2/\mathbf{r}_2]
\\
\Gamma \mid \Psi \vdash_{\mathrm\NameOfRelationalLogic} e_1 \colon M[\tau_1] \sim e_2 \colon M[\tau_2] \mid \phi_1\quad
\Gamma \mid \Psi \vdash_{\mathrm\NameOfRelationalLogic}
e'_1 \colon \tau_1 \to \mathtt{pReal} \sim e'_2 \colon \tau_2 \to
  \mathtt{pReal} \mid \phi
\end{array}
$
}
\UnaryInfC{$\Gamma \mid \Psi \vdash_{\mathrm\NameOfRelationalLogic} \OBSERVE e_1 \AS e'_1 \colon  M[\tau_1]
\sim  \OBSERVE e_2 \AS e'_2 \colon  M[\tau_2] \mid \phi_2$}
\DisplayProof
}
\end{gather*}
}

{ \fbox{Relational rules for probabilistic constructions - one-sided}}
{\small
\begin{gather*}
{ 
\RightLabel{\tiny [r-RET-L]}
\AxiomC{$\Gamma \mid \Psi \vdash_{\mathrm\NameOfRelationalLogic} e_1 \colon \tau_1 \sim  e_2 \colon \tau_2 \mid \phi[\RETURN(\mathbf{r}_1)/\mathbf{r}_1]$}
\UnaryInfC{$\Gamma \mid \Psi \vdash_{\mathrm\NameOfRelationalLogic} \ \RETURN(e_1) \colon M[\tau_1]\sim e_2 \colon \tau_2 \mid \phi$}
\DisplayProof
}
\\
{ 
\RightLabel{\tiny [r-BIND-L]}
\AxiomC{
$
\begin{array}{c@{}}
\Gamma \mid \Psi \vdash_{\mathrm\NameOfUnaryLogic} e_1 \colon M[\tau_1] \mid \phi_1\\
\Gamma \mid \Psi \vdash_{\mathrm\NameOfRelationalLogic}
e'_1 \colon \tau_1 \to M[\tau_3] \sim e_2 \colon M[\tau_2]
\mid\forall{s_1\colon M[\tau_1]}.\phi_1[s_1/\mathbf{r}]\implies \phi_2[\BIND s_1\IN \mathbf{r}_1/\mathbf{r}_1]
\end{array}
$
}
\UnaryInfC{$\Gamma \mid \Psi \vdash_{\mathrm\NameOfRelationalLogic} \BIND e_1~e'_1  \colon M[\tau_3] \sim e_2\colon  M[\tau_2] \mid \phi_2$}
\DisplayProof
}
\\
{ 
\RightLabel{\tiny [r-QRY-L]}
\AxiomC{
$
\begin{array}{c@{}}
\Gamma \mid \Psi \vdash_{\mathrm\UPL} e_1 \colon M[\tau_1] \mid \phi_1\\
\Gamma \mid \Psi \vdash_{\mathrm\RPL}
e'_1 \colon \tau_1 \to \mathtt{pReal} \sim e_2 \colon M[\tau_2] \mid
  \forall{s_1\colon M[\tau_1]}.\phi_1[s_1/\mathbf{r}_1] \implies \phi_2[\OBSERVE s_1 \AS \mathbf{r}_1/\mathbf{r}_1]
\end{array}
$
}
\UnaryInfC{$\Gamma \mid \Psi \vdash_{\mathrm\NameOfRelationalLogic} \OBSERVE e_1 \AS e'_1 \colon  M[\tau_1]
\sim  e_2 \mid \phi_2$}
\DisplayProof
}
\end{gather*}
}
\caption{A selection of \RPL\ rules.}
\label{fig:RPL-rules}
\end{figure}

The rules for pure computations are similar to the ones from RHOL~\citep{Aguirre:2017:RLH:3136534.3110265}
and we present only a selection. 
For the probabilistic constructions, we have relational rules for the monadic
return and bind, and for $\OBSERVE$. These rules are the natural
generalization of the unary rules to the relational case. 
In particular, in all the rules for $\BIND$ and $\OBSERVE$ we use assertions
quantifying over distributions, similarly to what we have in \UPL, to establish a
connection between the different assertions. 

The equi-derivability result for \UPL{} can be lifted to the relational setting: \RPL{} is also sound and
complete with respect to the logic \PL.
\begin{theorem}[Equi-derivability of \PL{} and \RPL]
\label{thm:equivPLRPL}
The judgment $\Gamma\mid\Psi\vdash_{\mathrm{\NameOfUnderlyingLogic}} \phi[ e_1/\mathbf{r}_1,~e_2/\mathbf{r}_2]$ is derivable if and only if $\Gamma\mid\Psi\vdash_{\mathrm{\NameOfRelationalLogic}} e_1\colon\tau_1 \sim e_2\colon\tau_2 \mid \phi$ is derivable.
\end{theorem}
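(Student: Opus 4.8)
The plan is to prove both directions of the equi-derivability between \RPL{} and \PL{} by structural induction, closely mirroring the strategy that Theorem~\ref{thm:equiPLUPL} uses for the unary case. The key observation is that each \RPL{} rule is designed to be the relational analogue of a corresponding pattern of reasoning in \PL{}, so the two directions amount to (i) showing each \RPL{} rule is \emph{derivable} in \PL{} once we substitute $e_1/\mathbf{r}_1$ and $e_2/\mathbf{r}_2$, and (ii) showing that any \PL{}-proof of $\phi[e_1/\mathbf{r}_1, e_2/\mathbf{r}_2]$ can be repackaged as an \RPL{}-derivation by peeling off the syntactic structure of $e_1$ and $e_2$.

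For the \emph{``only if''} direction (\RPL{} implies \PL{}), I would proceed by induction on the \RPL{} derivation. For each rule, the goal is to show that its conclusion, after substituting the distinguished variables by the two subject terms, follows in \PL{} from the correspondingly-substituted premises. The pure rules ([r-ABS], [r-APP], [r-SUB]) are handled exactly as in RHOL~\citep{Aguirre:2017:RLH:3136534.3110265}, using $\beta\iota\mu$-conversion via [CONV] and [SUBST] to discharge the substitutions $\phi[\mathbf{r}_1 x_1/\mathbf{r}_1, \mathbf{r}_2 x_2/\mathbf{r}_2]$. The genuinely new cases are the probabilistic rules. The crucial point here is the quantification over distributions $s_1 \colon M[\tau_1], s_2 \colon M[\tau_2]$ appearing in [r-BIND], [r-QRY], and their one-sided variants: by instantiating these universally quantified $s_1, s_2$ with the actual prior terms $e_1, e_2$ and then using the \PL{}-premise $\phi_1[e_1/\mathbf{r}_1, e_2/\mathbf{r}_2]$ together with $\Rightarrow_E$, one recovers precisely the substituted postcondition $\phi_2[\BIND e_1~e'_1/\mathbf{r}_1, \BIND e_2~e'_2/\mathbf{r}_2]$. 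The one-sided rules ([r-BIND-L], [r-QRY-L]) additionally appeal to the already-established Theorem~\ref{thm:equiPLUPL} to convert their \UPL{} premise on the analyzed side into a \PL{} statement.

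For the \emph{``if''} direction (\PL{} implies \RPL{}), I would proceed by induction on the pair of syntactic structures of $e_1$ and $e_2$, reading the \RPL{} rules bottom-up as a syntax-directed decomposition. Given a \PL{}-proof of $\phi[e_1/\mathbf{r}_1, e_2/\mathbf{r}_2]$, I match the top-level constructors of $e_1$ and $e_2$ against the two-sided rules when they agree, or against a one-sided rule otherwise, and reconstruct the required premises; whenever the \PL{}-reasoning does not respect the syntactic shape, the rule [r-SUB] absorbs the discrepancy, reducing the obligation to a pure \PL{}-implication. The base cases (variables, constants) fall to the pure-rule treatment inherited from RHOL.

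The main obstacle I anticipate is the treatment of the distribution-level quantifiers in the bind and query rules: one must check that instantiating $s_1, s_2$ with arbitrary distributions and then specializing to $e_1, e_2$ is sound in both directions, i.e. that no information is lost when moving between the $\tau_i$-indexed and $M[\tau_i]$-indexed views, and that the Kleisli-lifting interpretation of $\BIND$ and $\OBSERVE$ makes the substituted postconditions match syntactically. The one-sided rules compound this difficulty, since they mix \UPL{}/\RPL{} premises and therefore force the induction to be carried out simultaneously with, or after, Theorem~\ref{thm:equiPLUPL}; I would structure the argument so that the unary equi-derivability is available as a lemma before attacking the relational one-sided cases.
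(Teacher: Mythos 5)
Your forward direction (\RPL{} $\Rightarrow$ \PL{}) is sound and is the expected argument: induction on the \RPL{} derivation, where the pure rules are handled as in RHOL, the two-sided probabilistic rules are discharged by instantiating the universally quantified $s_1, s_2$ with the subject terms $e_1, e_2$ and applying [$\Rightarrow_E$] to the premise's translation, and the one-sided rules invoke Theorem~\ref{thm:equiPLUPL} to convert their \UPL{} premise into a \PL{} fact; your decision to establish the unary theorem first, so that it is available as a lemma, is exactly right.

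The gap is in your backward direction (\PL{} $\Rightarrow$ \RPL{}). You propose to induct on the pair of term structures, matching top-level constructors against two-sided or one-sided rules and ``reconstructing the required premises'' --- but this reconstruction is precisely the step that fails as stated. To apply, say, [r-BIND] to $\BIND e_1~e_1' \sim \BIND e_2~e_2'$ you must exhibit an intermediate assertion $\phi_1$ for the first premise such that the second premise (a quantified implication over \emph{all} $s_1, s_2$) is also derivable; the given \PL{} proof of $\phi[e_1/\mathbf{r}_1, e_2/\mathbf{r}_2]$ offers no canonical candidate, and a naive choice such as $\phi_1 = \top$ turns the second premise into a claim about arbitrary distributions $s_1, s_2$, which is generally false. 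You name this obstacle yourself but do not resolve it. The proof this paper relies on (inherited from the RHOL development of \citet{Aguirre:2017:RLH:3136534.3110265} whose style it follows) avoids any inversion of the \PL{} proof: since [r-SUB] is not syntax-directed, it is applied \emph{once, at the root}, with $\phi' := \top$ (or $\phi' := (\mathbf{r}_1 = e_1 \land \mathbf{r}_2 = e_2)$); its second premise $\top \implies \phi[e_1/\mathbf{r}_1, e_2/\mathbf{r}_2]$ is exactly the assumed \PL{} judgment after one application of [$\Rightarrow_I$]. This reduces the whole direction to deriving $\Gamma\mid\Psi\vdash_{\mathrm{\RPL}} e_1 \colon \tau_1 \sim e_2 \colon \tau_2 \mid \top$, a routine structural induction in which every intermediate assertion is trivial and each step closes with a [r-SUB] discharging a \PL{} tautology. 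Your instinct that ``[r-SUB] absorbs the discrepancy'' is the right one, but it must be deployed this way --- carrying a trivial postcondition through the syntactic decomposition and doing all the logical work in a single final subsumption --- rather than threading $\phi$ itself through the syntax-directed rules.
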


\paragraph{A comment on product types and \RPL{} }
One could effectively embed the whole of \RPL{} into \UPL{} by replicating the
set of rules of \RPL{} as \UPL{} rules for every possible product type, and
rewriting the distinguished $\mathbf{r}_1, \mathbf{r}_2$ in the refinements to
$\pi_1(\mathbf{r}),\pi_2(\mathbf{r})$.  For instance, the two-sided [r-ABS]
rule would be rewritten as a \UPL{} rule [u-ABSxABS] for the product of two
arrow types. Similarly, all the one-sided rules would be written as unary rules
directed by only one side of the product type, while ignoring the other.

However, we believe that this style of presentation would be significantly more
cumbersome, and, moreover, it would hide the fact that we are trying to express
and prove a relational property of two programs that execute independently.

\subsection{Special Rules}
As already discussed in the introduction, we enrich \THESYSTEM\ with special rules that can ease verification. One example is the  use  the following Bayesian law
expressing a general fact about the way we can reason about
probabilistic inference when the observation is a boolean:
\[
{ 
\RightLabel{[Bayes]}
\AxiomC{
$\Gamma, x\colon \tau \vdash e' \colon \mathtt{bool}$
\quad
$\Gamma, x \colon \tau \vdash e'' \colon \mathtt{bool}$
\quad
$\Gamma \vdash e \colon M[\tau]$
}
\UnaryInfC{
$
\Gamma\mid\Psi\vdash_{\mathrm{\NameOfUnaryLogic}}
\OBSERVE ~ e  \AS \lambda{x}.(\IF e' \THEN 1 \ELSE 0)
\colon M[\tau] \mid
\Pr_{y \sim \mathbf{r}}[e''[y/x]]
=
\frac{\Pr_{x \sim e}[e' \land e'' ]}{\Pr_{x \sim e}[e']}
$}
\DisplayProof
}
\]
This rule can be derived by first using [u-QRY],
and then reasoning in \PL\ through the [u-SUB] rule, which is
why the premises are just simply typed assumptions.
In particular, in \PL\ we use the characterization of $\OBSERVE$ given in Section \ref{sec:axioms}.

We also introduce a [LET] rule, which can be derived by desugaring the \texttt{let} notation:
\[
{ 
\RightLabel{[LET]}
\AxiomC{
$\begin{array}{c@{}}
\Gamma \mid \Psi \vdash e \colon \tau_1 \mid \phi_1
\quad
\Gamma, x:\tau_1 \mid \Psi, \phi_1[x/\mathbf{r}] \vdash e' \colon \tau_2 \mid
   \phi_2
\end{array}
$
}
\UnaryInfC{$\Gamma \mid \Psi \vdash \LET~x= e~\IN~e' \colon \tau_2 \mid \phi_2$}
\DisplayProof
}
\]

Notice that Theorem~\ref{thm:equiPLUPL} can be used to convert \UPL{} derivation trees into \PL{} ones and vice versa. 
Similarly, Theorem \ref{thm:equivPLRPL} is used to convert \RPL{} proofs to \PL proofs.
These conversions are useful to switch between the different levels of our system
and to reason in whichever one is more convenient.
To this end, we introduce the following \emph{admissible} rules:
\[
{ 
\AxiomC{$\Gamma\mid\Psi\vdash_{\mathrm{\NameOfUnaryLogic}} e \colon\tau \mid \phi$}
\RightLabel{[conv-\NameOfUnaryLogic]}
\UnaryInfC{$\Gamma\mid\Psi\vdash_{\mathrm{\NameOfUnderlyingLogic}} \phi[ e / \mathbf{r}]$}
\DisplayProof
}
\qquad
{ 
\AxiomC{$\Gamma\mid\Psi\vdash_{\mathrm{\NameOfRelationalLogic}} e_1\colon\tau_1 \sim e_2\colon\tau_2 \mid \phi$}
\RightLabel{[conv-\NameOfRelationalLogic]}
\UnaryInfC{$\Gamma\mid\Psi\vdash_{\mathrm{\NameOfUnderlyingLogic}} \phi[ e_1/\mathbf{r}_1,~e_2/\mathbf{r}_2]$}
\DisplayProof
}
\]

\section{Semantics}
\label{sec:semantics}
\subsection{Background}
In this section we present the semantic foundation of \THESYSTEM.
We start by recalling the definition of
quasi-Borel spaces~\citep{HeunenKSY17} and by showing how we can use
them to define monads for probability measures~\citep{Scibior:2017:DVH:3177123.3158148}. We use these
constructions in the next section to give the
semantics of programs on which we will build our logic.
\paragraph{Quasi-Borel Spaces}
We introduce here the category $\QBS$ of \emph{quasi-Borel spaces}.
Intuitively, the category $\QBS$ is a relaxation of the category
$\Meas$ of measurable spaces. $\QBS$ has a nice categorical structure---it is  Cartesian closed and retains important properties
coming from measure theory.
%
Before introducing quasi-Borel spaces, we fix some notation.
We use $\mathbb{R}$ to denote the real line equipped with the standard Borel algebra.
We use 
$\coprod_{i \in \mathbb{N}} S_i$ to denote the coproduct of a countable family
of sets $\{S_i\}_{i\in \mathbb{N}}$, and  
$[\alpha_i]_{i \in \mathbb{N}}$ for the copairing of functions
$\alpha_i$ for $i\in\mathbb{N}$.

\begin{definition}[{\citet{HeunenKSY17}}]
The category $\QBS$ is the category of quasi-Borel spaces and morphisms
between them, where a  \emph{quasi-Borel space} $(X,M_X)$
(with respect to $\mathbb{R}$)
is a set $X$ equipped with a subset $M_X$ of functions in $\mathbb{R} \to X$ such that
(1) If $\alpha \colon \mathbb{R} \to X$ is constant then $\alpha \in M_X$.
(2) If $\alpha \in M_X$ and $f \colon \mathbb{R} \to \mathbb{R}$ is measurable then $\alpha \circ f \in M_X$.
(3) If the family $\{S_i\}_{i\in \mathbb{N}}$ is a countable partition of $\mathbb{R}$, i.e. $\mathbb{R}=\coprod_{i \in \mathbb{N}} S_i$, with each set $S_i$ Borel, and if $\alpha_i \in M_X$ ($\forall i\in\mathbb{N}$) then the copairing $[\alpha_i|_{S_i}]_{i \in \mathbb{N}}$ of $\alpha_i|_{S_i} \colon S_i \to X$ belongs to $M_X$.

A  morphism from a quasi-Borel space $(X,M_X)$ to a quasi-Borel space $(Y,M_Y)$ is a function $f \colon X \to Y$ such that 
$f \circ \alpha \in M_Y$ holds for any $\alpha \in M_X$.
\end{definition}

As shown by~\citet{HeunenKSY17}, the category $\QBS$ has a convenient
structure to interpret probabilistic programs. That is, it is
well-pointed and Cartesian closed and we have the usual structure for currying and uncurrying functions; it has products and coproducts with
distributivity between them;
every standard Borel space $\Omega$ can be converted to a quasi-Borel
space and every measurable function $f\colon\Omega_1\to\Omega_2$ is  
a morphism $f\colon\Omega_1\to\Omega_2$ in $\QBS$.
Hence, a $\QBS$ can be used to interpret a probabilistic functional language. See
\citet{HeunenKSY17,Scibior:2017:DVH:3177123.3158148} for more details.

The category $\QBS$ has also a convenient
structure to \emph{reason} about probabilistic programs.
In particular, the forgetful functor $|{-}| \colon \QBS \to \Sets$
erasing the quasi-Borel structure does not change the underlying structure of functions. This property is fundamental for the design of the category $\Pred(\QBS)$ of predicates over quasi-Borel spaces.
\paragraph{Measures on quasi-Borel spaces.}
Quasi-Borel spaces were introduced to support measure theory in a
Cartesian closed category. In particular, given a measure on some
standard Borel space $\Omega$ we can define a measure over
quasi-Borel spaces.

\begin{definition}[{\citet{Scibior:2017:DVH:3177123.3158148}}]
A measure on a quasi-Borel space $(X,M_X)$ is a triple
$(\Omega,\alpha,\nu)$ where $\Omega$ is a standard Borel space,
$\alpha \colon \Omega \to X$ is a morphism  in $\QBS$, and $\nu$  is a $\sigma$-finite measure over $\Omega$.
\end{definition}
For a measure $\mu = (\Omega,\alpha,\nu)$ on $X$ and a
function $f:X \to \mathbb{R}$ in $\QBS$,
we define  integration over quasi-Borel spaces in terms of integration
over Borel spaces:
$\int_X f~d{\mu} \defeq \int_\Omega (f \circ \alpha)~d\nu$.
Equivalence of measures in $\QBS$ is defined in terms of equality of
integrations:
\begin{center}
$
(\Omega,\alpha,\nu) \approx (\Omega',\alpha',\nu')\defeq \forall{f: X \to \mathbb{R} \text{ in } \QBS}.~ \int_\Omega (f \circ \alpha)~d\nu = \int_{\Omega'} (f \circ \alpha')~d\nu'.
$
\end{center}
In the following, it will be convenient to work with equivalence
classes of measures which we denote by $[\Omega,\alpha,\nu]$.
Every equivalence class for a measure $(\Omega,\alpha,\nu)$ also contains a measure over $\mathbb{R}$ defined in the appropriate way \citep{HeunenKSY17}. 
%
%
We are now ready to
define a monad for measures. 
\begin{definition}[\citet{Scibior:2017:DVH:3177123.3158148}]
The monad of $\sigma$-finite measures $\mathfrak{M}$ is defined as follows.
\begin{itemize}
\item For any $X$ in $\QBS$, $\mathfrak{M}X$ is the set of equivalence
  classes of $\sigma$-finite measures equipped with the quasi-Borel
  structure given by the following definition
\[
M_{\mathfrak{M}X} = \Set{\lambda r.[D_r,\alpha(r,-),\mu_r] | \begin{aligned}
&D\subseteq_{\text{measurable}} \mathbb{R} \times \Omega,~ \mu  \colon \text{$\sigma$-finite measure on } \Omega, \\
& \alpha \colon D \to X,~ D_r = \Set{\omega | (r,\omega) \in D},~ \mu_r = \mu|_{D_r}
\end{aligned}
}
\]
\item The unit $\eta_X \colon X \to \mathfrak{M}X$ is defined by $\eta_X(x) = [1, \lambda \ast.~x, \mathbf{d}_\ast]$.
\item The Kleisli lifting is defined for any $f \colon X \to \mathfrak{M}Y$ and $[\Omega,\alpha,\nu] \in \mathfrak{M}X$ as
\[
f^\sharp [\Omega,\alpha,\nu] = [D,\beta,(\nu \otimes \nu')|_D]
\]
where $D=\{(r,\omega)\ |\ \omega\in D_r\}$ and $\beta(-)=\lambda
r.\beta(r,-)$ are defined for every $\gamma:\Omega\to \mathbb{R}$ and
$\gamma^*:\mathbb{R}\to \Omega$ satisfying $\gamma^\ast \circ \gamma
= \mathrm{id}_\Omega$ through $(f \circ \alpha)(\gamma^\ast(r)) =
[D_r,\beta(r,-),\nu']$.
\end{itemize}
\end{definition}
Let us unpack in part this definition. The set of functions
$M_{\mathfrak{M}X}$ can be seen as a set of (uncountable) families of measures,
indexed by $r$, supporting infinite measures. The Kleisli lifting uses
the fact that each $(f \circ \alpha)(\gamma^\ast(-))$ is a function in
$M_{\mathfrak{M}Y}$, that $D$ built as a product measure starting from
$r$ and $D_r$ is measurable, and $\beta$ is a morphism from $D$ to $Y$.

Thanks to the Fubini-Tonelli theorem, the monad $\mathfrak{M}$ on $\QBS$
is commutative strong with respect to Cartesian products. 
We can also use the structure of $\QBS$ to define the product measure
of $[\Omega,\alpha,\nu]$ and $[\Omega',\alpha',\nu']$ as $[(\Omega\times\Omega'),(\alpha\times\alpha'),(\nu\otimes\nu')]$.
Using the isomorphism $\mathfrak{M}1 \cong [0,\infty]$, usual integration $\int f ~d\mu$ for $f \colon \mathbb{R} \to [0,\infty]$ and $\mu \in \mathfrak{M}(\mathbb{R})$ corresponds to $f^\sharp(\mu)$.
We can define the \emph{mass} $|\mu|$ of measure $\mu = [\Omega,\alpha,\nu]$
by $\int_X 1 d\mu$ which is the same as the mass $|\nu|$ of base measure $\nu$.
The monad $\mathfrak{M}$ captures general measures. For example, we can
define a \emph{null measure} as $\mathbf{0}=[\Omega,\alpha,0]$.

In the sequel, we will also use a commutative monad $\mathfrak{P}$ on
$\QBS$ obtained by restricting the monad $\mathfrak{M}$ to \emph{subprobability} measures.
We have the canonical inclusion $\mathfrak{P}X \subseteq \mathfrak{M}X$.
\subsection{Semantics for \THESYSTEM}
In order to give meaning to the logical formulas of \PL, we first need
to give meaning to expressions in \PCFP\ and to enriched expressions
in \PL. We do this by interpreting types as $\QBS$ objects as shown below:
\[
\begin{array}{c@{}}
\interpret{\mathtt{unit}} \defeq 1,\quad
\interpret{\mathtt{bool}} \defeq 1+1,\quad
\interpret{\INT} \defeq \mathbb{N},\quad
\interpret{\REAL} \defeq \mathbb{R},\quad
\interpret{\mathtt{pReal}} \defeq [0,\infty],\\
\interpret{\tau_1 \to \tau_2}  \defeq \interpret{\tau_1}{\Rightarrow}\interpret{\tau_2},
\interpret{\tau_1 \times \tau_2}  \defeq \interpret{\tau_1} \times \interpret{\tau_2},
\interpret{\mathtt{list}(\tau)} \defeq \coprod_{n \in \mathbb{N}}\!\! \interpret{\tau}^n,
\interpret{M[\tau]} \defeq \mathfrak{M}(\interpret{\tau})
\end{array}
\]
where $1$ is the terminal object in $\QBS$;
$\coprod_{n \in \mathbb{N}} \interpret{\tau}^n$ is the coproduct of the countable family $\interpret{\tau}^n = \interpret{\tau} \times \cdots \times \interpret{\tau}$ ($n$ times);
$(\interpret{\tau_1} \Rightarrow \interpret{\tau_2})$ is the exponential object in $\QBS$.
%
We interpret each term $\Gamma \vdash e \colon \tau$ as a
morphism $\interpret{\Gamma} \to \interpret{\tau}$ in $\QBS$, where, as
usual, the interpretation $\interpret{\Gamma}$ of a context $\Gamma$ is
the product of the interpretations of its components.
Pure computations are interpreted using the Cartesian closed structure of $\QBS$
%
%
%
%
%
where we can interpret recursive terms based on recursive data types ($I = \mathtt{list}(\tau),\INT$) by means of a fixed point operator $\mathrm{fix}$ iterating functions until termination.
Since the termination criterion $\mathrm{Terminate}(f,x,e)$ ensures that all recursive calls are on smaller arguments, the operator $\mathrm{fix}$ is well-defined: for any $n \in I$ with $|n| < k$, $\mathrm{fix}(\lambda f.\lambda x.e)(n)$ is defined within $k$ steps.
\[
\interpret{\Gamma \vdash \mathtt{letrec}~f x = e \colon I \to \sigma}
\defeq \mathrm{fix}(\interpret{\Gamma \vdash \lambda f\colon I \to \sigma. \lambda x \colon I.~ e \colon (I \to \sigma) \to (I \to \sigma)})
\]

We interpret $\mathtt{return}$ and
$\mathtt{bind}$ using the structure of the monad
$\mathfrak{M}$ of measures on $\QBS$.
\[
\begin{array}{rl}
\interpret{\Gamma \vdash \mathtt{return}~e \colon M[\tau]}
&\defeq \eta_{\interpret{\tau}} \circ \interpret{\Gamma \vdash e \colon \tau}\\
\interpret{\Gamma \vdash \mathtt{bind}~e_1~e_2 \colon M[\tau_2]}
&\defeq \interpret{\Gamma \vdash e_2 \colon \tau_1 \to M[\tau_2] }^\sharp \circ 
\mathrm{st}_{\interpret{\Gamma},\interpret{\tau_1}}(\lrangle{\mathrm{id}_{\interpret{\Gamma}},\interpret{\Gamma \vdash e_1 \colon M[\tau_1]}})
\end{array}
\]
where $\eta$, $(-)^\sharp$, and $\mathrm{st}$ are the unit, the Kleisli lifting, and the tensorial strength of the commutative monad $\mathfrak{M}$.
To interpret the other constructions we first introduce two semantics
constructions for scaling and normalizing:
\[
\mathrm{scale}(\nu,f) \defeq (\mathfrak{M}(\pi_2) \circ \mathrm{dst}_{1,X}\circ\lrangle{f,\eta_X})^\sharp(\nu).
\qquad
\mathrm{normalize}(\nu) \defeq
\begin{cases}
\mathbf{0} & |\nu| = 0, \infty\\
\nu/|\nu| & \text{ (otherwise) }
\end{cases}
.
\]
where $\mathrm{dst}$ is the double strength of the
commutative monad $\mathfrak{M}$, and $|\nu|$ is the mass of
$\nu$. 
In the definition of $\mathrm{scale}(\nu,f)$, the construction 
$\mathfrak{M}(\pi_2) \circ \mathrm{dst}_{1,X}\circ\lrangle{f,\eta_X}$
corresponds to a function mapping an element $x\in X$ to  a Dirac distribution centered at $x$ and scaled by $f(x)$, whose domain is then lifted to measures using the Kleisli lifting.  To achieve this, we use the equivalence $\interpret{\mathtt{pReal}} =
[0,\infty] \cong \mathfrak{M}1$, and pairing and projection constructions to manage the duplication of $x$. The definition of $\mathrm{scale}(\nu)$ is more straightforward and reflects the semantics we described before.

Using these constructions we can interpret
the corresponding syntactic constructions.
\[
\begin{array}{rl}
\interpret{\Gamma\vdash\mathtt{scale}(t,t')\colon M[\tau]}&
\defeq \mathrm{scale}(\interpret{\Gamma \vdash t \colon M[\tau]}, \interpret{\Gamma \vdash t' \colon \tau \to \mathtt{pReal}})\\
\interpret{\Gamma\vdash\mathtt{normalize}(t) \colon M[\tau]} &\defeq \mathrm{normalize}(\interpret{\Gamma \vdash t \colon M[\tau]})
\end{array}
\]
We can now interpret $\OBSERVE$ as follows:
\[
\interpret{\Gamma\vdash\OBSERVE \ e~\AS~e' \colon M[\tau]}
\defeq \mathrm{normalize}(\mathrm{scale}(\interpret{\Gamma \vdash e \colon M[\tau]},\interpret{\Gamma \vdash e' \colon \tau \to \mathtt{pReal}}))
\]

%
%
Using the equivalence $\interpret{\mathtt{pReal}} =
[0,\infty] \cong \mathfrak{M}1$ again, we interpret expectation as:
\[
\interpret{\Gamma \vdash \mathbb{E}_{x\sim t}[t'(x)] \colon \mathtt{pReal}}
\defeq
\lambda{\gamma \in \interpret{\Gamma}}.~\big (\interpret{\Gamma \vdash t' \colon \tau \to
  \mathtt{pReal}} (\gamma)\big )^\sharp(\interpret{\Gamma \vdash t \colon
  M[\tau]} (\gamma)).
\]
The primitives of basic probability distributions $\mathtt{Uniform}$, $\mathtt{Bern}$, $\mathtt{Gauss}$ are interpreted by rescaling a measure (given as a constant) with density functions (cf. Section \ref{subsection:example:NormalLearning}), and the usual operations on real numbers are given by embedding measurable real functions in $\QBS$.
%

To interpret formulas in \PL\ we use the category $\Pred(\QBS)$ of predicates on quasi-Borel spaces. This will be useful to see these formulas as assertions in the unary logic \UPL\ and the relational logic \RPL. This is actually the main reason why we use quasi-Borel spaces: we want an assertion logic whose predicates support both higher-order computations and continuous probability.
The structure of the category $\Pred(\QBS)$ is the following:
\begin{itemize}
\item An object is a pair $(X,P)$ where $X \in \QBS$ and $P \subseteq X$.
\item A morphism  $f\colon(X,P)\to (Y,Q)$ is $f \colon X \to Y\in\QBS$ such that $\forall x\in P. f(x) \in Q$.
\end{itemize}
An important property of this category is that \emph{every arbitrary subset}
$P$ of a quasi-Borel space $X$ forms an object $(X,P)$ in $\Pred(\QBS)$.
This allows us to interpret all logical operations, including universal quantifiers, in a set-theoretic way.

Notice that this category can be seen as the total category
of the fibration $q\colon\Pred(\QBS) \to \QBS$ given by the following change-of-base
of the fibration $p \colon \Pred \to \Sets$ along the forgetful functor $|-| \colon \QBS \to \Sets$.
Here $\Pred$ is the category of predicates and all predicate-preserving maps, and
the fibration $p$ extracts underlying sets of predicates.
Then, all fibrewise properties of the fibration $p$ are inherited by the fibration $q$.
For detail, see \citep[Section 1.5--1.8]{JacobsCLTT}.

We are now ready to interpret formulas in \PL.
We interpret a typed formula $\Gamma \vdash \phi\;\mathsf{wf}$ as an object $\interpret{\Gamma \vdash \phi\;\mathsf{wf}} \defeq (\interpret{\Gamma},\Rinterpret{\Gamma \vdash \phi\;\mathsf{wf}})$ in $\Pred(\QBS)$
where the predicate part $\Rinterpret{\Gamma \vdash \phi\;\mathsf{wf}}$ is interpreted inductively. We give here a selection of the inductive rules defining the interpretation:
\[
\begin{array}{c}
\Rinterpret{\Gamma \vdash \top\;\mathsf{wf}}
\defeq \interpret{\Gamma},
\quad 
\Rinterpret{\Gamma \vdash \forall{x \colon \tau } \phi\;\mathsf{wf}}
\defeq {\textstyle\bigcap_{y \in \interpret{\tau}}} \Set{\gamma \in \interpret{\Gamma} | (\gamma,y) \in \Rinterpret{\Gamma,x \colon \tau \vdash \phi\;\mathsf{wf}}},
\\
\Rinterpret{\Gamma \vdash \bot\;\mathsf{wf}}
\defeq \emptyset,
\quad
\Rinterpret{\Gamma \vdash t_1 = t_2 \;\mathsf{wf}}
\defeq \Set{\gamma \in \interpret{\Gamma} | \interpret{\Gamma\vdash t_1 \colon \tau}(\gamma) =  \interpret{\Gamma\vdash t_2 \colon \tau}(\gamma)},
\\
\Rinterpret{\Gamma \vdash \phi_1 \land \phi_2\;\mathsf{wf}}
\defeq \Rinterpret{\Gamma \vdash \phi_1\;\mathsf{wf}} \cap \Rinterpret{\Gamma \vdash \phi_2\;\mathsf{wf}},
\qquad
\Rinterpret{\Gamma \vdash \neg \phi\;\mathsf{wf}}
\defeq \interpret{\Gamma} \setminus \Rinterpret{\Gamma \vdash \phi\;\mathsf{wf}},
\end{array}
\]
This interpretation is well-behaved with respect to substitution. In particular, the substitution $\phi[t/x]$ of $x$ by an enriched expression $t$ can be interpreted by the inverse image
$
\Rinterpret{\Gamma \vdash \phi[t/x] \;\mathsf{wf}}
= \inverse{\lrangle{\mathrm{id}_{\interpret{\Gamma}}, \interpret{\Gamma \vdash t \colon \tau}}}\Rinterpret{\Gamma,x\colon\tau \vdash \phi \;\mathsf{wf}}.
$
Using this property, we can show that the logic \PL\ is sound with respect to the semantics that we defined above.
\begin{theorem}[\PL\ Soundness]
\label{thm:PLsound}
If a judgment  $\Gamma \mid \Psi \vdash_{\mathrm{\PL}} \phi$ is derivable then we have the inclusion
$(\bigcap_{\psi \in \Psi} \Rinterpret{\Gamma \vdash \psi\;\mathsf{wf}}) \subseteq \Rinterpret{\Gamma \vdash \phi\;\mathsf{wf}}$ of predicates, which is equivalent to having a morphism
$
\mathrm{id}_{\interpret{\Gamma}} \colon \interpret{\Gamma \vdash {\textstyle\bigwedge_{\psi \in \Psi}} \psi\;\mathsf{wf}}
 \to \interpret{\Gamma \vdash \phi\;\mathsf{wf}}
$ in the category $\Pred(\QBS)$.
\end{theorem}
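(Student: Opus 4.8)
The plan is to prove the inclusion by induction on the derivation of $\Gamma \mid \Psi \vdash_{\mathrm{\PL}} \phi$, showing for each inference rule that if the semantic inclusion holds for the premises then it holds for the conclusion. Writing $\interpret{\Psi} \defeq \bigcap_{\psi \in \Psi} \Rinterpret{\Gamma \vdash \psi\;\mathsf{wf}}$ for brevity, the goal at each step is $\interpret{\Psi} \subseteq \Rinterpret{\Gamma \vdash \phi\;\mathsf{wf}}$. The two workhorses are the substitution property stated just above the theorem, namely $\Rinterpret{\Gamma \vdash \phi[t/x]\;\mathsf{wf}} = \inverse{\lrangle{\mathrm{id}_{\interpret{\Gamma}}, \interpret{\Gamma \vdash t \colon \tau}}}\Rinterpret{\Gamma,x\colon\tau \vdash \phi\;\mathsf{wf}}$, and the fact that the predicate interpretation is purely set-theoretic, so that every subset of a quasi-Borel space is again an object of $\Pred(\QBS)$ and the connectives denote genuine lattice/Boolean operations (intersection for $\land$, complement for $\neg$, indexed intersection for $\forall$, indexed union for $\exists$).

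First I would dispatch the structural and logical rules. The [AX] rule is immediate: if $\phi \in \Psi$ then $\interpret{\Psi}$ is an intersection one of whose members is $\Rinterpret{\Gamma \vdash \phi\;\mathsf{wf}}$, so the inclusion is trivial. The implication rules reduce to elementary set algebra: since $\Rinterpret{\psi \implies \phi\;\mathsf{wf}}$ is $(\interpret{\Gamma}\setminus\Rinterpret{\psi\;\mathsf{wf}}) \cup \Rinterpret{\phi\;\mathsf{wf}}$, the rule [$\Rightarrow_I$] is the equivalence $\interpret{\Psi}\cap\Rinterpret{\psi}\subseteq\Rinterpret{\phi}$ iff $\interpret{\Psi}\subseteq(\interpret{\Gamma}\setminus\Rinterpret{\psi})\cup\Rinterpret{\phi}$, and [$\Rightarrow_E$] is modus ponens read as containment; the quantifier and conjunction rules are handled identically, with the side conditions on free variables matched by the substitution property. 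Because the interpretation is set-theoretic with actual complements, these arguments are routine and require no measurability bookkeeping---this is precisely the payoff of working in $\Pred(\QBS)$.

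Next I would treat the equational rules. For [SUBST], I would use the substitution property to rewrite $\Rinterpret{\phi[t/x]\;\mathsf{wf}}$ and $\Rinterpret{\phi[u/x]\;\mathsf{wf}}$ as inverse images of the same predicate along $\lrangle{\mathrm{id},\interpret{t}}$ and $\lrangle{\mathrm{id},\interpret{u}}$, then observe that the premise $\Gamma\mid\Psi\vdash_{\mathrm{\PL}} t = u$, by induction and the semantics of equality, forces $\interpret{t}(\gamma) = \interpret{u}(\gamma)$ for every $\gamma\in\interpret{\Psi}$, so the two pairing maps agree on $\interpret{\Psi}$ and the inclusion transfers. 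For [CONV] I need the semantic soundness of the definitional equality $=_{\beta\iota\mu}$: that $t =_{\beta\iota\mu} t'$ implies $\interpret{\Gamma\vdash t} = \interpret{\Gamma\vdash t'}$ as morphisms in $\QBS$. This is a separate lemma, proved by induction on the conversion, and it holds because the model validates $\beta$, $\iota$ (case), $\mu$ (recursion/fixpoint) and the monad laws---that is, $\QBS$ with $\mathfrak{M}$ is a model of the equational theory of Section~\ref{sec:PCFP}.

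The hard part, and the bulk of the work, will be validating the probabilistic axioms of Section~\ref{sec:axioms}---monotonicity and linearity of expectation, Cauchy--Schwarz, the Fubini--Tonelli commutativity equation, and the scaling/normalization/observe laws---since each is an axiom of \PL\ and must be shown to denote a valid inclusion (in fact an equality of predicates holding at every $\gamma$). For these I would follow \citet{Scibior:2017:DVH:3177123.3158148} and discharge each one inside the synthetic measure theory of $\QBS$: unfolding $\mathbb{E}$, $\mathtt{scale}$ and $\mathtt{normalize}$ to their definitions in terms of the Kleisli lifting $(-)^\sharp$, the strengths $\mathrm{st}$ and $\mathrm{dst}$, and the isomorphism $\mathfrak{M}1\cong[0,\infty]$, and then appealing to the corresponding integration identities (linearity and monotonicity of the integral, the Fubini--Tonelli theorem that makes $\mathfrak{M}$ commutative, and the defining case split of $\mathrm{normalize}$). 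These measure-theoretic verifications are where the real content lies; the equational axioms instead feed into [CONV] as above. Assembling all cases completes the induction and yields the stated inclusion, which is by definition the existence of the identity morphism $\mathrm{id}_{\interpret{\Gamma}}$ from $\interpret{\Gamma \vdash \bigwedge_{\psi\in\Psi}\psi\;\mathsf{wf}}$ to $\interpret{\Gamma \vdash \phi\;\mathsf{wf}}$ in $\Pred(\QBS)$.
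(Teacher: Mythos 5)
Your proposal is correct and takes essentially the same route as the paper: induction on \PL\ derivations, with the structural and logical rules discharged set-theoretically in $\Pred(\QBS)$ using the substitution-as-inverse-image property, and the probabilistic axioms of Section~\ref{sec:axioms} validated via the synthetic measure theory of \citet{Scibior:2017:DVH:3177123.3158148} (the isomorphism $\mathfrak{M}1 \cong [0,\infty]$, commutativity of $\mathfrak{M}$, and the correspondence between $f^\sharp(\mu)$ and integration). The paper's appendix organizes the axiom verifications exactly as you outline, so no gap remains.
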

Here, the soundness of \PL\ axioms introduced in Section \ref{sec:axioms} is proved from the basic facts discussed by \citet{Scibior:2017:DVH:3177123.3158148}, in particular,
the isomorphism $\mathfrak{M}1 \cong [0,\infty]$,
the commutativity of the monad $\mathfrak{M}$,
the correspondence between $f^\sharp(\mu)$ and usual integration $\int f d~\mu$ for any $f\colon \mathbb{R} \to [0,\infty]$ and $\mathfrak{M}(\mathbb{R})$, and
that measurable functions between standard Borel spaces are exactly morphisms in $\QBS$.

Using Theorem~\ref{thm:equiPLUPL} and Theorem~\ref{thm:PLsound}, we
can prove the soundness of \UPL.
\begin{corollary}[\UPL\ Soundness]
\label{cor:equivalence_PL_UPL}
If $\Gamma\mid\Psi\vdash_{\mathrm{\NameOfUnaryLogic}} e \colon\tau \mid \phi$ then

$
\lrangle{\mathrm{id}_{\interpret{\Gamma}},\interpret{\Gamma\vdash e \colon\tau}} \colon 
\interpret{\Gamma\vdash {\textstyle\bigwedge_{\psi \in \Psi}} \psi\;\mathsf{wf}} \to \interpret{\Gamma,\mathbf{r}\colon\tau\vdash \phi\;\mathsf{wf}}$ in $\Pred(\QBS)$.
\end{corollary}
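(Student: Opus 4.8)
The statement is designed to follow by chaining the two theorems cited just before it, with the interpretation's substitution property as the glue. The plan is as follows. Suppose $\Gamma\mid\Psi\vdash_{\mathrm{\UPL}} e \colon \tau \mid \phi$ is derivable. First I would invoke the equi-derivability of \PL\ and \UPL\ (Theorem~\ref{thm:equiPLUPL}) to transport this \UPL\ judgment into the \PL\ judgment $\Gamma\mid\Psi\vdash_{\mathrm{\PL}}\phi[e/\mathbf{r}]$. Feeding the latter into \PL\ soundness (Theorem~\ref{thm:PLsound}) immediately yields the inclusion of predicates
\[
{\textstyle\bigcap_{\psi\in\Psi}}\Rinterpret{\Gamma\vdash\psi\;\mathsf{wf}} \subseteq \Rinterpret{\Gamma\vdash\phi[e/\mathbf{r}]\;\mathsf{wf}}.
\]

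The crux is then to recognize the right-hand side. Writing $h=\lrangle{\mathrm{id}_{\interpret{\Gamma}},\interpret{\Gamma\vdash e\colon\tau}}$, I would apply the substitution property of the interpretation, instantiated at the fresh context variable $\mathbf{r}\colon\tau$ and the enriched expression $e$, to rewrite
\[
\Rinterpret{\Gamma\vdash\phi[e/\mathbf{r}]\;\mathsf{wf}} = \inverse{h}\,\Rinterpret{\Gamma,\mathbf{r}\colon\tau\vdash\phi\;\mathsf{wf}}.
\]
Combining the two displays gives ${\bigcap_{\psi}}\Rinterpret{\Gamma\vdash\psi\;\mathsf{wf}}\subseteq \inverse{h}\,\Rinterpret{\Gamma,\mathbf{r}\colon\tau\vdash\phi\;\mathsf{wf}}$. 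By the definition of inverse image, this inclusion says exactly that $h$ maps every point of $\bigcap_{\psi}\Rinterpret{\Gamma\vdash\psi\;\mathsf{wf}}$ into $\Rinterpret{\Gamma,\mathbf{r}\colon\tau\vdash\phi\;\mathsf{wf}}$, which is precisely the predicate-preservation condition defining a morphism in $\Pred(\QBS)$. Since conjunction is interpreted as intersection, the source object $\interpret{\Gamma\vdash\bigwedge_{\psi\in\Psi}\psi\;\mathsf{wf}}$ carries exactly the predicate $\bigcap_\psi\Rinterpret{\Gamma\vdash\psi\;\mathsf{wf}}$, so this is the needed condition for the claimed morphism.

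Finally I would discharge the remaining, purely structural obligation that the underlying function $h$ is a $\QBS$ morphism between the carriers $\interpret{\Gamma}$ and $\interpret{\Gamma,\mathbf{r}\colon\tau}=\interpret{\Gamma}\times\interpret{\tau}$. This is immediate from the Cartesian structure of $\QBS$, since $\mathrm{id}_{\interpret{\Gamma}}$ and the term interpretation $\interpret{\Gamma\vdash e\colon\tau}$ are both $\QBS$ morphisms and $h$ is their pairing. The argument is essentially mechanical once the two cited theorems are available; the only step requiring genuine care is the application of the substitution property, which relies on treating $\mathbf{r}$ on the \PL\ side as a fresh context variable of type $\tau$ so that the inverse-image rewriting — and hence the identification of the soundness inclusion with a $\Pred(\QBS)$-morphism — is valid.
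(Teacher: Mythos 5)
Your proposal is correct and follows exactly the paper's own route: the paper derives this corollary by applying Theorem~\ref{thm:equiPLUPL} to transport the \UPL{} judgment to $\Gamma\mid\Psi\vdash_{\mathrm{\PL}}\phi[e/\mathbf{r}]$, invoking Theorem~\ref{thm:PLsound}, and then identifying $\Rinterpret{\Gamma \vdash \phi[e/\mathbf{r}]\;\mathsf{wf}}$ with the inverse image $\inverse{\lrangle{\mathrm{id}_{\interpret{\Gamma}},\interpret{\Gamma\vdash e\colon\tau}}}\Rinterpret{\Gamma,\mathbf{r}\colon\tau\vdash\phi\;\mathsf{wf}}$ via the stated substitution property of the interpretation. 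The only difference is that you spell out the mechanical details (the predicate-preservation reading of the inclusion and the fact that the pairing is a $\QBS$ morphism) which the paper leaves implicit.
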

Similarly, using Theorem~\ref{thm:equivPLRPL} and Theorem~\ref{thm:PLsound}, we can prove the soundness of \RPL.
\begin{corollary}[\RPL\ Soundness]
If $\Gamma\mid\Psi\vdash_{\mathrm{\NameOfRelationalLogic}} e_1\colon\tau_1 \sim e_2\colon\tau_2 \mid \phi$ then

$
\lrangle{\mathrm{id}_{\interpret{\Gamma}}, \interpret{\Gamma \vdash\!e_1\colon \tau_1}, \interpret{\Gamma \vdash\!e_2 \colon \tau_2}} \colon 
\interpret{\Gamma \vdash \!\!{\textstyle\bigwedge_{\psi \in \Psi}} \psi\;\mathsf{wf}} \to 
\interpret{\Gamma, \mathbf{r}_1{\colon}\tau_1,\mathbf{r}_2{\colon}\tau_2 \vdash \phi\;\mathsf{wf}}
$
in $\Pred(\QBS)$.
\end{corollary}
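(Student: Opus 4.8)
The plan is to reduce the relational soundness statement to the soundness of the base logic \PL, following verbatim the argument used for Corollary~\ref{cor:equivalence_PL_UPL} but carrying the two distinguished variables $\mathbf{r}_1,\mathbf{r}_2$ instead of the single $\mathbf{r}$. Suppose $\Gamma\mid\Psi\vdash_{\mathrm{\NameOfRelationalLogic}} e_1\colon\tau_1 \sim e_2\colon\tau_2 \mid \phi$ is derivable. First I would apply the equi-derivability result, Theorem~\ref{thm:equivPLRPL}, to obtain a derivation of the \PL\ judgment $\Gamma\mid\Psi\vdash_{\mathrm{\PL}} \phi[e_1/\mathbf{r}_1,e_2/\mathbf{r}_2]$. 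This is the step that discharges the entire relational proof structure, replacing it with a single \PL\ obligation about the substituted formula.

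Next I would feed this \PL\ judgment into Theorem~\ref{thm:PLsound}, obtaining the inclusion of predicates
\[
\textstyle\bigcap_{\psi\in\Psi}\Rinterpret{\Gamma\vdash\psi\;\mathsf{wf}}
\ \subseteq\
\Rinterpret{\Gamma\vdash\phi[e_1/\mathbf{r}_1,e_2/\mathbf{r}_2]\;\mathsf{wf}}
\]
in the fibre over $\interpret{\Gamma}$. It then remains only to identify the right-hand side. Using the substitution property of the interpretation in its two-variable form,
\[
\Rinterpret{\Gamma\vdash\phi[e_1/\mathbf{r}_1,e_2/\mathbf{r}_2]\;\mathsf{wf}}
=
\inverse{\lrangle{\mathrm{id}_{\interpret{\Gamma}},\interpret{\Gamma\vdash e_1\colon\tau_1},\interpret{\Gamma\vdash e_2\colon\tau_2}}}\,\Rinterpret{\Gamma,\mathbf{r}_1\colon\tau_1,\mathbf{r}_2\colon\tau_2\vdash\phi\;\mathsf{wf}},
\]
the inclusion above would then state exactly that the pairing $\lrangle{\mathrm{id}_{\interpret{\Gamma}},\interpret{\Gamma\vdash e_1\colon\tau_1},\interpret{\Gamma\vdash e_2\colon\tau_2}}$ maps the predicate $\bigcap_{\psi\in\Psi}\Rinterpret{\Gamma\vdash\psi\;\mathsf{wf}}$ into $\Rinterpret{\Gamma,\mathbf{r}_1\colon\tau_1,\mathbf{r}_2\colon\tau_2\vdash\phi\;\mathsf{wf}}$. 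By the definition of morphisms in $\Pred(\QBS)$, this is precisely the asserted morphism.

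The only point needing genuine care is justifying the two-variable substitution equality, since the substitution property recorded earlier is stated for a single substituted variable. Here I would observe that $\mathbf{r}_1$ and $\mathbf{r}_2$ are distinct and that $e_1,e_2$ are well typed already in $\Gamma$ (neither mentions either distinguished variable), so the simultaneous substitution factors as two commuting sequential substitutions, e.g.\ $\phi[e_1/\mathbf{r}_1,e_2/\mathbf{r}_2]=(\phi[e_2/\mathbf{r}_2])[e_1/\mathbf{r}_1]$. Applying the single-variable lemma twice and using functoriality of inverse images, $\inverse{(g\circ f)}=\inverse{f}\circ\inverse{g}$, together with the fact that the composite of the two single-substitution pairing maps equals the three-component pairing $\lrangle{\mathrm{id}_{\interpret{\Gamma}},\interpret{\Gamma\vdash e_1\colon\tau_1},\interpret{\Gamma\vdash e_2\colon\tau_2}}$, then yields the desired equality. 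I expect this indexing bookkeeping to be the main, though entirely routine, obstacle; the rest of the argument is a direct transcription of the unary case.
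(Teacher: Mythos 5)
Your proposal is correct and follows exactly the paper's route: the paper proves this corollary by combining Theorem~\ref{thm:equivPLRPL} (equi-derivability of \PL{} and \RPL) with Theorem~\ref{thm:PLsound} (\PL\ soundness) and the substitution property of the interpretation, just as you do. Your extra care in reducing the simultaneous two-variable substitution to two sequential applications of the single-variable substitution lemma is a routine detail the paper leaves implicit, and your treatment of it is sound.
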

\section{Examples}
\label{sec:examples}
%
In Section~\ref{sec:section2} we showed two examples of how to use \THESYSTEM\
to reason about probabilistic inference and Monte Carlo
approximation. In this section, we  demonstrate further how \THESYSTEM\ can be used to
verify a wide range of properties of probabilistic programs.  We
will start by showing how to reason formally about
probabilistic program \emph{slicing} for continuous random
variables as a relational property. We will then consider an example of the use of
\THESYSTEM\ to reason about the \emph{convergence}
of probabilistic inference. We will then move to some statistical
applications: we will show how to reason about \emph{mean estimation}
of distributions, about the \emph{approximation properties} of importance
sampling. Finally, we will show how to use \THESYSTEM\ for a proper
machine learning task by showing how one can reason about the \emph{Lipschitz continuity} of a generalized
iteration algorithm useful for reinforcement learning.

\subsection{Slicing of Probabilistic Programs}

In this example, we show how \THESYSTEM{} can be used to reason
about relational properties of probabilistic programs with continuous
random variables. Specifically, we show that a combination of relational reasoning in \RPL{},
and equational reasoning in \PL{} allow us to reason about 
slicing of probabilistic programs~\citep{10.1007/978-3-662-49630-5_11}.  Slicing is a program
analysis technique that can be used to speed up probabilistic inference
tasks. Previous work has shown how to slice probabilistic programs
with discrete random variables in an efficient way. Here, we consider
the problem of checking the correctness of a slice, when the program
contains continuous random variables.
We look at an example adapted from~\citet{10.1007/978-3-662-49630-5_11}. Consider
the following
two programs $\mathtt{left}$ and $\mathtt{right}$:
\[
\begin{array}{rl}
\mathtt{left} \equiv
&\LET x =\mathtt{Uniform}(0,1) \IN  \LET  y =\mathtt{Uniform}(0,1) \IN \LET z = x \otimes y\ \IN \\
&\MLET v = (\OBSERVE ~z \AS \lambda{w}. \IF
  \pi_2(w) > 0.5 \THEN 1 \ELSE 0)\IN \ \RETURN (\pi_1(v))\\
\mathtt{right}\equiv
&\LET x = \mathtt{Uniform}(0,1) \IN x
\end{array}
\]
Intuitively, even if $\OBSERVE$ in $\mathtt{left}$ is applied to the product
measure $z$, and not just to the measure of $y$, the conditioning concerns only
$y$ and it does
not affect the distribution of $x$. Indeed, $\mathtt{right}$ is a correct slice of
$\mathtt{left}$.  We can show this in \RPL\ by proving the following judgment.
\[
\vdash_{\mathrm\NameOfRelationalLogic}
\mathtt{left} \colon M[\REAL] \sim \mathtt{right}\colon M[\REAL] \mid \mathbf{r}_1 = \mathbf{r}_2
\]
To prove this judgment, we first apply the relational [LET] rule, which allows
us to introduce an assumption about $x$ on both sides. Then we apply a sequence
of asynchronous [LET-L] rules on the program on the left, which
introduce preconditions about $y$ and $z$ into the context:
\begin{align*}
& x = \mathtt{Uniform}(0,1), y = \mathtt{Uniform}(0,1), z = x \otimes y \vdash_{\mathrm\NameOfRelationalLogic} \\
& \MLET v = (\OBSERVE ~z \AS \lambda{w}. \IF
  \pi_2(w) > 0.5 \THEN 1 \ELSE 0) \mathtt{in}\ \RETURN (\pi_1(v)) \sim x \mid \mathbf{r}_1 = \mathbf{r}_2
\end{align*}
To prove this judgement we rely on the
equalities on monadic bind, rescaling, and conditioning in Section
\ref{sec:axioms}.
Starting from the \PCFP\ term on the left, by applying the equations
(\ref{eq:MHOL:observe}),
(\ref{eq:MHOL:normalize1}), and
(\ref{eq:MHOL:scaling1}), we reduce it to
$\MLET v = ( x \otimes X) \IN \RETURN (\pi_1(v))$ where $X$ is a \emph{normalized} distribution
defined by the term
$\OBSERVE y \AS \lambda w_2. \IF w_2 > 0.5 \THEN 1 \ELSE 0$. We then conclude this is equal to $x$ by applying the equality (\ref{eq:MHOL:scaling1}) and the  equality
\[
\MLET w = e_1 \otimes e_2 \IN \RETURN\pi_1(w) = \SCALE (e_1,\mathbb{E}_{x \sim e_2}[1])
\]
proved from the equalities
(\ref{eq:MHOL:scaling4}), (\ref{eq:MHOL:expected:variable_transformation}) and monadic laws.

Using \RPL\ we can also reason about situations where we cannot slice
a program. Adapting again from~\citet{10.1007/978-3-662-49630-5_11}, let us consider
the following
two programs $\mathtt{left}$ and $\mathtt{right}$:
\[
\begin{array}{r@{}l}
\mathtt{left} \equiv\ 
&\LET x =\mathtt{Uniform}(0,1) \ \mathtt{in}\ \LET  y =\mathtt{Uniform}(0,1)\ \mathtt{in} \LET  z = x\otimes y\ \mathtt{in}\\
&\MLET v = (\OBSERVE z \AS \lambda{w}.\IF
  {\pi_1(w)+\pi_2(w)} > 0.5 \THEN 1 \ELSE 0)\IN \RETURN (\pi_1(v)) \\
\mathtt{right}\equiv\ &\LET x = \mathtt{Uniform}(0,1) \IN x 
\end{array}
\]
Now we prove that it is not correct to slice $\mathtt{left}$ into
$\mathtt{right}$ by means of the judgment below:
\[
\vdash_{\mathrm\NameOfRelationalLogic}
\mathtt{left} \colon M[\REAL] \sim \mathtt{right}\colon M[\REAL] \mid \mathbf{r}_1 \neq \mathbf{r}_2
\]
The proof for this judgment follows the structure of the proof of the
previous example. The main difference is that now we need to see the first
coordinate of the variable $w$ in the conditioning.
To prove that $\mathtt{left}$ and $\mathtt{right}$ are different,
we use the probabilistic inference in the first example to prove 
$\vdash_{\mathrm\UPL}\mathtt{left}\mid \Pr_{y \sim \mathbf{r}}[y > .5] > 1/2$ using the the [Bayes] rule and the following calculation:
$
\frac{\Pr_{w}[\pi_1(w) > .5]}{\Pr_{w}[\pi_1(w)+\pi_2(w) > .5]}
\geq 
\frac{\Pr_{x}[x > .5]}{1 - \Pr_{x}[x >.25]\ast\Pr_{y}[y >.25]}
= \frac{8}{15} > \frac{1}{2}
$.

Similarly, we can look at the following two programs:
\[
\begin{array}{r@{}l}
\mathtt{left} \equiv\ 
& \mathtt{mlet}\ x =\mathtt{Uniform}(0,1)\ \mathtt{in}\\ 
& \mathtt{mlet}\ \_ = \big (\IF x>.5\ \THEN\ \big(\mathtt{let}\ y
  =\mathtt{Uniform}(0,1) \ \mathtt{in}\ \mathtt{let}\  z = \RETURN (x)\otimes
  y\ \mathtt{in}\\
&\OBSERVE ~z \AS \lambda{w}. \IF
  \pi_2(w) > .5 \THEN 1 \ELSE 0\big) \ELSE \RETURN (x\otimes x)\big) \mathtt{in}\ \RETURN (x) 
\\
\mathtt{right}\equiv\ &\MLET x = \mathtt{Uniform}(0,1) \IN \RETURN (x) 
\end{array}
\]
and show that we can slice $\mathtt{left}$ into $\mathtt{right}$.

A key point in deriving the slicing property of the above examples is the equation
$
\MLET w = e_1 \otimes e_2 \IN \RETURN\pi_1(w) = \SCALE (e_1,\mathbb{E}_{x \sim e_2}[1])
$
of splitting product measure, which is obtained by
applying the axioms in Section \ref{sec:axioms}.
When $e_2 \equiv \OBSERVE~ e_3~ \AS~ e_4$,  
we have $\MLET w = e_1 \otimes e_2~ \IN~ \RETURN\pi_1(w) = e_1$ since 
our conditioning construction is \emph{normalized}, and hence $\mathbb{E}_{x \sim e_2}[1] = 1$.
On the other hand, when $e_2$ consists of \emph{unnormalized} conditioning, 
we may have the non-slicing $\MLET w = e_1 \otimes e_2 \IN\ \RETURN\pi_1(w) \neq e_1$
because $\mathbb{E}_{x \sim e_2}[1] < 1$.
This is an advantage of our conditioning operator.
Since we renormalize in conditioning construction, we can slice the algorithm
$\mathtt{left}$ into $\mathtt{right}$ in the third example.

Putting the first and the third example together we can consider the 
following two programs $\mathtt{left}$ and $\mathtt{right}$:
\[
\begin{array}{r@{}l}
\mathtt{left} \equiv\ 
& \mathtt{mlet}\ x =\mathtt{Uniform}(0,1)\ \mathtt{in}\\ 
& \mathtt{mlet}\ \_ = \big (\IF x>.5\ \THEN\ \big(\mathtt{let}\ y
  =\mathtt{Uniform}(0,1) \ \mathtt{in}\ \mathtt{let}\  z = \RETURN (x) \otimes
  y\ \mathtt{in}\\
&\hspace{.65in}
\OBSERVE ~z \AS \lambda{w}. \IF
  \pi_2(w) > .5 \THEN 1 \ELSE 0\big) \ELSE \RETURN (x\otimes x)\big)  \mathtt{in}\\
& \mathtt{let}\ u =\mathtt{Uniform}(0,1) \ \mathtt{in}\
  \mathtt{let}\ k = \RETURN (x) \otimes u \ \mathtt{in}  \\
& \mathtt{mlet}\ v= (\OBSERVE ~k \AS \lambda{w}. \IF
  \pi_2(w) > .5 \THEN 1 \ELSE 0) \IN \RETURN (\pi_1(v)) 
\\[2mm]
\mathtt{right} \equiv\ 
& \mathtt{mlet}\ x =\mathtt{Uniform}(0,1) \ \mathtt{in} \\
& \mathtt{mlet}\ \_ = \big (\IF x>.5\ \THEN\ \big(\mathtt{let}\ y
  =\mathtt{Uniform}(0,1) \ \mathtt{in}\ \mathtt{let}\  z = \RETURN (x) \otimes
  y\ \mathtt{in}\\
&\hspace{.65in}
\OBSERVE ~z \AS \lambda{w}. \IF
  \pi_2(w) > .5 \THEN 1 \ELSE 0\big)\ \ELSE \RETURN (x\otimes x)\big) \\
& \IN  \RETURN (x) 
\end{array}
\]
Again, we want to show that $\mathtt{right}$ is a correct slice of
$\mathtt{left}$ by proving that
$
\vdash_{\mathrm\NameOfRelationalLogic}
\mathtt{left} \colon M[\REAL] \sim \mathtt{right}\colon M[\REAL] \mid \mathbf{r}_1 = \mathbf{r}_2$.
The proof of this judgment can be carried out mostly in \RPL, by using
the similarity between the two programs $\mathtt{left}$ and
$\mathtt{right}$. The proof starts by using relational reasoning, and
afterwards reuses the proof of the first example.
This shows that reasoning relationally about slicing can be better
than reasoning directly about equivalence by computing the two
distributions.

\subsection{Gaussian Mean Learning: Convergence and Stability}
\label{subsection:example:NormalLearning}

Probabilistic programs are often used as models for probabilistic
inference tasks in data analysis.  We now show how \THESYSTEM\ can be
used to reason about such processes. Taking the example of the
closed-form Bayesian update, we show how to use \THESYSTEM\ to reason
about two quite common properties, \emph{convergence} and
\emph{stability} under changes of priors.  These two properties allow
us to illustrate two different aspects of \THESYSTEM: 1) The support
it offers for reasoning about iterative probabilistic tasks and for
reasoning about densities of random variables, and 2) The support it
offers for relational reasoning about measures of divergence of one
distribution with respect to another.
To show this, we first prove the convergence of the iterative closed-form
learning of the mean of a Gaussian distribution (with fixed variance). We then
prove this process stable for a precise notion of stability formulated in
terms of Kullback-Leibler (KL) divergence.

Let us start by considering the following implementation
$\mathtt{GaussLearn}$ of an algorithm for Bayesian learning of the mean of a Gaussian
distribution with known variance $\sigma^2$ from a sample list $L$:
\[
\mathtt{GaussLearn}
\equiv
\lambda{p}.
\LETREC f (L) = 
\CASE L\ \WITH [] \Rightarrow p, y::ls \Rightarrow
\OBSERVE ~ f(ls)\ \AS\ \mathtt{GPDF}(y,\sigma^2)
\]
where $\mathtt{GPDF} (y,\sigma^2)$ is a shorthand for the density
function $\lambda r.\frac{1}{\sqrt{2\pi\sigma^2}}
\exp( \frac{(r-y)^2}{2\sigma^2})$ of a Gaussian distribution 
$\mathtt{Gauss}(y, \sigma^2)$ with mean
$y$ and variance $\sigma^2$.
This algorithm starts by assuming a prior $p$ on the unknown mean. Then,
on each iteration, a sample $y$ is read from the list and the prior gets updated 
by observing it as a Gaussian with mean $y$ and variance $\sigma^2$.

We now want to show two properties of this algorithm.  The first property we
show is convergence: the mean of the posterior should roughly
converge to the mean of the data, but we need to take into account that the
posterior also depends on the prior. More precisely, when the prior is also
a Gaussian, we can show that:
\begin{equation}
\label{example:observe:normal:A}
\begin{array}{r@{}l}
(\sigma > 0),
(\xi > 0)
\vdash_{\mathrm{\NameOfRelationalLogic}}& \mathtt{GaussLearn} \sim \mathtt{Total} \mid
\forall{L'\colon \LIST(\REAL) }.~
\forall{n \colon \INT }.~(n = |L'|)\\
&\implies
\mathbf{r}_1(\mathtt{Gauss}(\delta,\xi^2))(L')
 =
\mathtt{Gauss}(\frac{\mathbf{r}_2(L')\ast \xi^2 +\delta\ast\sigma^2}{n\ast \xi^2 + \sigma^2},\frac{\xi^2\ast\sigma^2}{n\ast \xi^2 + \sigma^2})
\end{array}
\end{equation}
where \texttt{Total} is an algorithm summing all the elements of a list $L$.
\[
\mathtt{Total}
\equiv
\LETREC f(L \colon \LIST(\REAL)) = 
\CASE L \WITH [] \Rightarrow 0, y::ls \Rightarrow y + f(ls).
\]
This judgement states that, if the prior on the mean is a Gaussian of mean $\delta$
and variance $\xi^2$, then the posterior is a Gaussian with mean close to the mean 
of $\mathtt{Total}(L)$ and variance close to 0, but that they are still influenced
by the parameters $\delta,\xi^2$ of the prior. 

The proof of this judgment proceeds relationally by first applying
the one-sided [ABS-L] rule to introduce the prior in the context. 
Then the proof continues synchronously by applying the [r-LETREC] and [r-LISTCASE] rules. To conclude the
proof we need to show the following two premises corresponding to the
base case and to the inductive step:
\[
\begin{array}{r@{}l}
&
\begin{array}{r@{}l}
&
(\sigma > 0),
(\xi > 0),
\phi_{\mathrm{ind.hyp}},
(L = []),
d_{\mathrm{prior}} =\mathtt{Gauss}(\delta,\xi^2),
(n = |L|)
\\
&\quad \vdash_{\mathrm{\NameOfRelationalLogic}} d_{\mathrm{prior}} \sim  0 \mid 
\mathbf{r}_1
 =
\mathtt{Gauss}(\frac{\mathbf{r}_2 \ast \xi^2 +\delta\ast\sigma^2}{n\ast \xi^2 + \sigma^2},\frac{\xi^2\ast\sigma^2}{n\ast \xi^2 + \sigma^2})
\end{array}
\\[2mm]
\label{example:observe:normal:A21}
&\begin{array}{r@{}l}
&
(\sigma > 0),
(\xi > 0),
\phi_{\mathrm{ind.hyp}},
(L = y::ls),
d_{\mathrm{prior}} =\mathtt{Gauss}(\delta,\xi^2),
(n = |L|)
\\
&\quad \vdash_{\mathrm{\NameOfRelationalLogic}} \OBSERVE f_1(ls) \AS \mathtt{Gauss}(y,\sigma^2) \sim 
y + f_2(ls) \mid 
\mathbf{r}_1
 =
\mathtt{Gauss}(\frac{\mathbf{r}_2 \ast \xi^2 +\delta\ast\sigma^2}{n\ast \xi^2 + \sigma^2},\frac{\xi^2\ast\sigma^2}{n\ast \xi^2 + \sigma^2})
\end{array}
\end{array}
\]
The first premise is obvious. The second premise requires a little more work,
and can be proved by applying [r-QRY-L] and [r-SUB] rules and several
equations in \PL.
We first show in \PL\ that Gaussian distributions are conjugate prior with respect
to the Gaussian likelihood function by applying the equations on rescaling,
normalization, and conditioning.
\[
\vdash_{\mathrm\PL}(\sigma > 0)\land(\xi > 0)\!\implies\!
\OBSERVE \mathtt{Gauss}(\delta,\xi^2) \AS \mathtt{GPDF}(z,\sigma^2)
=
\mathtt{Gauss}(\frac{z \xi^2 +\delta\sigma^2}{\xi^2 + \sigma^2},\frac{\xi^2\sigma^2}{\xi^2 + \sigma^2}).
\]
Then, we apply [r-QRY-L] and [r-SUB] to the premise (\ref{example:observe:normal:A21})
to introduce the observations in the precondition, and apply the above fact and the induction
hypothesis. 

The second property we show is stability. 
If we run $\mathtt{GaussLearn}$ twice with different prior Gaussian distributions,
we can show that the posteriors will be close if the list of samples
is long enough and not diverging. This closeness is defined
in terms of the Kullback-Leibler (KL) divergence. 
The KL divergence of two distributions with known density functions, can be defined
by expectations:
$(d_1 = \SCALE (d_2,f)) \implies (\mathrm{KL}(d_1 \mathbin{||}d_2) 
= \mathbb{E}_{x \sim d_1}[\log f(x)])$.
In particular, the KL divergence of two Gaussian distributions can be calculated as follows:
\begin{equation}
\label{example:observe:normal:B1}
\mathrm{KL}(\mathtt{Gauss}(\mu_1, \sigma_1^2) \mathbin{||} \mathtt{Gauss}(\mu_2, \sigma_2^2))
=
(\log{|\sigma_2|} - \log{|\sigma_1|})
+
(\sigma_1^2 + (\mu_1 - \mu_2)^2)/{\sigma_2^2}
-1/2
\end{equation}
Formally, we want to prove the following judgment.
\begin{equation}
\label{example:observe:normal:B}
\begin{array}{r@{}l}
&
\sigma \colon \REAL,
\delta \colon \REAL,
\xi \colon \REAL,
\delta_2 \colon \REAL,
\xi_2 \colon \REAL
\mid 
(\sigma > 0),
(\xi > 0),
(\xi_2 > 0)\\
&\quad  \vdash_{\mathrm{\NameOfRelationalLogic}} \mathtt{GaussLearn} \sim \mathtt{GaussLearn} \mid
\forall{L'\colon \LIST(\REAL) }.
\forall{\varepsilon \colon \REAL}.
\forall{C \colon \REAL}.\\
&\qquad(\varepsilon > 0) \implies
\exists{N \colon \INT}.
(|L'| > N) \land |\mathtt{Total}(L')| < C \ast |L'| \\
&\qquad\quad \implies
\mathrm{KL}(\mathbf{r}_1(\mathtt{Gauss}(\delta,\xi^2))(L')\mathbin{||}\mathbf{r}_1(\mathtt{Gauss}(\delta_2,\xi_2^2))(L')) < \varepsilon
\end{array}
\end{equation}
Intuitively, this states that if the algorithm is run twice with different
Gaussian priors, and the mean of the data is bounded by some $C$, then the KL
divergence of the posteriors can be made as small as desired by increasing the
size of the data. In other words, the effect of the prior on the posterior can
be minimized by having enough samples.

By simple calculations, we can prove in \NameOfUnderlyingLogic\
the following assertion in a similar way as proofs of convergence
of sequences using the epsilon-delta definition of limit.
\begin{equation}
\label{example:observe:normal:B2}
\begin{array}{r@{}l}
&\vdash_{\mathrm{\NameOfUnderlyingLogic}} \forall{L'\colon \LIST(\REAL) }.
\forall{\varepsilon \colon \REAL}.
\forall{C \colon \REAL}.\\
&\quad (\varepsilon > 0) \implies
\exists{N \colon \INT}.
(|L'| > N) \land |\mathtt{Total}(L')| < C \ast |L'| \implies\\
&\quad \left|
\frac{\mathtt{Total}(L') \ast \xi^2 + \delta_2 \ast \sigma^2}{|L'|\ast \xi^2 + \sigma^2}
-
\frac{\mathtt{Total}(L') \ast \xi_2^2 + \delta_2 \ast \sigma^2}{|L'|\ast \xi_2^2 + \sigma^2}
\right| < \varepsilon \land 
\left| \log \frac{n\ast \xi^2\ast \xi_2^2 + \xi^2 \ast \sigma^2}{n\ast \xi^2\ast \xi_2^2 + \xi_2^2 \ast \sigma^2} \right| < \varepsilon
\end{array}
\end{equation}
To prove (\ref{example:observe:normal:B}), we want to combine (\ref{example:observe:normal:A}) with (\ref{example:observe:normal:B1}) and (\ref{example:observe:normal:B2}).
To do this, we apply the relational [r-SUB] rule to the judgment (\ref{example:observe:normal:B}), which has the following \PL\ premise:
\[
\begin{array}{r@{}l}
&
\vdash_{\mathrm{\PL}}
\top \implies
\forall{L'\colon \LIST(\REAL)}.
\forall{\varepsilon \colon \REAL}.
\forall{C \colon \REAL}.\\
&\quad (\varepsilon > 0) \implies
\exists{N \colon \INT}.
(|L'| > N) \land |\mathtt{Total}(L')| < C \ast |L'| \\
&\quad \implies
\mathrm{KL}(\mathtt{GaussLearn}(\mathtt{Gauss}(\delta,\xi^2))(L')\mathbin{||}\mathtt{GaussLearn}(\mathtt{Gauss}(\delta_2,\xi_2^2))(L')) < \varepsilon.
\end{array}
\]
We prove this in \PL by first applying  the rule [conv-\RPL] to (\ref{example:observe:normal:A}) and then using (\ref{example:observe:normal:B1}) and (\ref{example:observe:normal:B2}).

\subsection{Sample Size Required in Importance Sampling}
\label{subsection:example:ImportanceSampling}

As another example of a common statistical task, we use \THESYSTEM\ to
show the correctness of self-normalizing importance sampling. 
Importance sampling is an efficient variant of Monte Carlo approximation to
estimate the expected value $\mathbb{E}_{x \sim d'}[h(x)]$ when sampling from
$d'$ is not convenient. The idea is to sample from a different distribution $d$
and then rescale the samples by using the density
function $g$ of $d'$. 
The most interesting aspect of this example is that correctness
is formulated as a probability bound on the difference between the
mean of the distribution $d'$ and the empirical mean. This shows once again
that \THESYSTEM\ supports reasoning about such probabilistic
bounds, which are quite widespread in statistical
applications. However, here we want to go a step further
and show that we can reason about probability bounds that are
parametric in the \emph{number of available data samples}. This
quantity is often crucial for both theoretical understanding and
practical reasons, since data is an expensive resource. For our
specific example, we rely on recent work by~\citet{2015arXiv151101437C}
and use their theorem as the correctness statement. This example also shows
the usefulness of the equations of
Section~\ref{sec:axioms} in high-level reasoning.

The following algorithm $\mathtt{SelfNormIS}$ is an implementation of self-normalizing importance sampling.
\[
\begin{array}{r@{}l}
\mathtt{SelfNormIS} \equiv~~
&\lambda{n \colon \INT}. (\MLET z = \mathtt{SumLoop}(n)(g)(h) \IN \RETURN (\pi_1(z)/\pi_2(z)))
\\
\mathtt{SumLoop}
\equiv~~
&\LETREC f(i\colon \INT)= \lambda{g \colon \tau \to \REAL}.\lambda{h \colon \tau \to \REAL}.\\
&\IF (i \leq 0) \THEN \RETURN \lrangle{0,0} \ELSE \MLET x = d \IN \MLET m = f(i-1)(g)(h) \IN\\
&\RETURN \lrangle{(1/i)(\pi_1(m) + (i-1)\ast h(x) \ast g(x)),(1/i)(\pi_2(m) + (i-1) \ast g(x))}.
\end{array}
\]
This algorithm approximates
$ \mathbb{E}_{x \sim d'}[h(x)] = \int h(x) g(x)\ dx$
by taking samples $X_1 \dots X_n \sim d$ and computing the ratio $(\frac{1}{n}\sum_{i = 1}^n g(X_i) h(X_i))/(\frac{1}{n}\sum_{i = 1}^n g(X_i))$ of weighed sum instead.
Note that $\mathtt{SumLoop}$ is the subroutine calculating the numerator $\frac{1}{n}\sum_{i =
1}^n g(X_i) h(X_i)$ and denominator $\frac{1}{n}\sum_{i = 1}^n g(X_i)$ of empirical expected value from the same samples $X_i \sim d$.

We verify a recent result on the sample size required in self-normalizing importance sampling.
The goal is to prove the following \THESYSTEM\ representation of Theorem 1.2 of \citet{2015arXiv151101437C}:
\begin{equation}
\label{eq:example:ImportanceSampling:0}
\begin{array}{r@{}l}
&d \colon M[\tau],
g \colon \tau \to \REAL,
h \colon \tau \to \REAL
\vdash_{\mathrm\UPL}
\mathtt{SelfNormIS} \colon \INT \to M[\REAL] \mid\\
&\quad
\forall{d' \colon M[\tau]}.
\forall{\mu \colon \REAL}.
\forall{\sigma \colon \REAL}.
\forall{C \colon \REAL}.
\forall{t \colon \REAL}.
\forall{L \colon \REAL}.
\forall{\varepsilon \colon \REAL}.\\
&\qquad \phi \land (\varepsilon > \mathtt{sqrt}(\exp(-t/4) + 2 \mathtt{sqrt}(\Pr_{y \sim d'}[\log(g(y)) > L + t/2])))\\
&\qquad \implies \forall{k \colon \INT}. k > \exp(L + t) \implies \Pr_{y \sim \mathbf{r}(k)(g)(h)}[|y-\mu| \geq 
\frac{2\varepsilon\mathtt{sqrt}(\sigma^2 + \mu^2)}{1 - \varepsilon}] \leq 2\varepsilon
\end{array}
\end{equation}
Here, $C$ is supposed to be an unknown normalization factor of $g$.
The following assertion $\phi$ is the assumption that gives the required sample size.
\[
\begin{array}{r@{}l}
\phi \equiv~~& (\mathbb{E}_{x \sim d}[1] = 1) \land (\sigma^2 = \mathrm{Var}_{x\sim d}[h(x) \ast g(x)]) \land (\mu = \mathbb{E}_{y \sim d'}[h(y)]) \land (t\geq 0)\\
&\land (d' = \SCALE(d,g/C)) \land (C > 0)\land (\mathbb{E}_{y \sim d'}[1] = 1) \land (L=\mathbb{E}_{x \sim d'}[\log g(y)])
\end{array}
\]

The previous theorem gives a bound on the probability that the estimate differs too much from the actual expected value $\mu$.
The proof of the judgment (\ref{eq:example:ImportanceSampling:0}) is involved, and requires several steps. First, we
prove a version of the theorem for naive (non self-normalizing) importance sampling~\cite[Theorem 1.1]{2015arXiv151101437C}. Then, we extend this result to self-normalizing importance sampling.
Naive importance sampling is defined as:
\[
\begin{array}{r@{}l}
\mathtt{Naive}\equiv\ &\LETREC f(i\colon \INT)
= \lambda{g \colon \tau \to \REAL}.\lambda{h \colon \tau \to \REAL}.\\
&\IF (i \leq 0) \THEN (\RETURN 0) \ELSE \MLET x = d \IN \MLET m = f(i-1)(g)(h) \IN\\
&\RETURN \frac{1}{i}(m + (i-1)\ast h(x) \ast g(x))
\end{array}
\]
%
%
%
Here \texttt{Naive} computes 
$\frac{1}{n}\sum_{i = 1}^n g(X_i) h(X_i)$.
We want to show: 
\begin{equation}
\label{eq:example:ImportanceSampling:1}
\begin{array}{r@{}l}
&\vdash_{\mathrm\UPL} \mathtt{Naive} \colon \INT \to (\tau \to \REAL) \to (\tau \to \REAL) \to M[\REAL] \mid\\
&
\forall{d' \colon M[\tau]}.
\forall{\mu \colon \REAL}.
\forall{\sigma \colon \REAL}.
\forall{C \colon \REAL}.
\forall{t \colon \REAL}.
\forall{L \colon \REAL}.
\forall{\varepsilon \colon \REAL}.
\\
&\phi \implies \forall{k \colon \INT}.
k > \exp(L + t)\\ 
&\quad \implies
\begin{array}{l@{}}
\mathbb{E}_{w \sim \mathbf{r}(k)(g/C)(h)}[|w-\mu|]\\
\quad \leq \mathtt{sqrt}(\sigma^2 + \mu^2) \ast (\exp(-t/4) + 2 \mathtt{sqrt}(\Pr_{y \sim d'}[\log(g(y)) > L + t/2]))
\end{array}
\end{array}
\end{equation}
Notice that we need the normalization factor $C$.

The main ``tricks'' in the proof are the Cauchy-Schwartz inequality
and introducing the function $h_2 = \lambda x \colon \tau.(\IF g(x)
\leq k \ast \exp(-t/2) \THEN 1 \ELSE 0) \ast h(x)$.  We first check
the following in \PL, where $\mu'$ is defined as $\mathbb{E}_{y \sim
  d'}[h_2(y)]$.
\[
\begin{array}{r@{}l}
&\mathbb{E}_{w \sim \mathtt{Naive}(k)(g/C)(h)}[|w-\mu|]\\
&\quad = \mathbb{E}_{y \sim \mathtt{SumLoop2}(k)(g/C)(h)(\lambda x \colon \tau.(\IF g(x) \leq k \ast \exp(-t/2) \THEN 1 \ELSE 0) \ast h(x))}[|\pi_1(y)-\mu|]\\
&\quad \leq \mathbb{E}[|\pi_2(y)- \mu'|] + \mathbb{E}[|\pi_1(y)-\pi_2(y)|] + \mathbb{E}[|\mu- \mu'|]\\
&\quad \leq \mathtt{sqrt}(\sigma^2 + \mu^2) \ast (\exp(-t/4) + 2 \mathtt{sqrt}(\Pr_{y \sim d'}[\log(g(y)) > L + t/2])).
\end{array}
\]
In the first step, we use the equivalence of \texttt{Naive} and an alternate version
of \texttt{SumLoop} that we denote \texttt{SumLoop2}.
Here, we introduce the helper function $h_2$ in the expression.
The second step is applying axioms on expectations (the triangle-inequality).
In the last step, we use Cauchy-Schwartz inequality and the inequality $h_2 \leq h$, which follows from the definition of $h_2$.

Finally, we prove our goal (\ref{eq:example:ImportanceSampling:0})
from the just-established judgment (\ref{eq:example:ImportanceSampling:1}).
Define $b \equiv \exp(-t/4) + 2 \mathtt{sqrt}(\Pr_{y \sim
  d'}[\log(g(y)) > L + t/2])$ and $\delta \equiv \mathtt{sqrt}(b) \ast
\mathtt{sqrt}(\sigma^2 + \mu^2)$, and assume $\varepsilon >
\mathtt{sqrt}(b)$.
The main part of the proof is the following inequality in \PL.
\[
\begin{array}{r@{}l}
&\Pr_{z \sim \mathtt{SumLoop}(k)(g)(h)}[| \frac{\pi_1(z)}{\pi_2(z)}-\mu| \geq \frac{2\varepsilon \mathtt{sqrt}(\sigma^2 + \mu^2)}{1 - \varepsilon}]\\
&\quad  \leq \Pr_{w \sim \mathtt{Naive}(k)(g/C)(h)}[|w- \mu| \leq \delta]
+ \Pr_{w \sim \mathtt{Naive}(k)(g/C)(1)}[|w - 1| \leq \mathtt{sqrt}(b)]\\
&\quad  \leq \mathbb{E}_{w \sim \mathtt{Naive}(k)(g/C)(h)}[|w- \mu|]/\delta
+ \mathbb{E}_{w \sim \mathtt{Naive}(k)(g/C)(1)}[|w - 1|]/\mathtt{sqrt}(b)\\
&\quad \leq \frac{b \ast \mathtt{sqrt}(\sigma^2 + \mu^2)}{ \delta} + \frac{b}{\mathtt{sqrt}(b)} \leq 2\varepsilon
\end{array}
\]
The first step is proved by switching from \texttt{Naive}
to \texttt{SumLoop} which requires some structural reasoning and calculations on real numbers supported by \NameOfUnderlyingLogic.
The second step follows from the Markov inequality, and
the last step follows from the definitions of $b$ and $\delta$.
\subsection{Verifying Lipschitz GVI Algorithm}
As our final example, we show that \THESYSTEM\ can be used to reason
about a reinforcement learning task through relational reasoning about
Lipschitz continuity and about statistical distances. This is another
example of the usefulness of relational reasoning (in a different
domain). The example also shows how the expressiveness of \PL\ allows
us to reason about notions like Lipschitz continuity and statistical
distances.

GVI (Generalized Value Iteration) is a reinforcement learning algorithm to optimize a value function on a Markov Decision Process
$(\tau_S, \tau_A, R, T, \gamma)$ where $\tau_S$ is a space of states, $\tau_A$ is a set of actions,
$R \colon \tau_S \times \tau_A \to \REAL$ is a reward function,
$T \colon \tau_S \times \tau_A \to D[\tau_S]$ is a transition dynamic and
$\gamma$ is a discount factor.
Our assumption is that the optimal value function satisfies a specific condition, called a Bellman equation:
$Q(s,a) = R(s,a) + \mathbb{E}_{s' \sim T(s,a)}[f(Q)(s'))] $,
where $f\colon (\tau_S \times \tau_A \to \REAL) \to (\tau_S \to \REAL)$ is a
backup operator (usually we take $\max_{a \colon \tau_A}$).

\citet{2018arXiv180601265A} show that, under some constraints, the GVI
algorithm returns Lipschitz-continuous value functions, which are convenient
for modeling learning algorithms over the MDP.
The following program $\mathtt{LipGVI}$ is an implementation of GVI algorithm:
\[
\mathtt{LipGVI}\equiv\LETREC h(k \colon \INT)
=
\lambda Q' \colon \tau_S \times \tau_A \to \REAL.
(\lambda(s,a) \colon \tau_S \times \tau_A.
R(s,a)+\gamma g(Q')(s))h(k-1)
\]
The algorithm receives an estimate $Q'$ of the value function and updates it
using a function $g$ which is assumed to be an approximation of 
$\lambda Q'.~\lambda s.~ \mathbb{E}_{s' \sim T(s,a)}[f(Q')(s'))]$.
%
We want to verify the Lipschitz continuity of the result of the algorithm $\mathtt{LipGVI}$.
Before stating this, we need to add to \PL{} necessary operators and metrics.
A function $f: X \to \REAL$ is Lipschitz continuous if there exists a finite $K(f)$ such that
$
K(f) = \sup_{x_1, x_2 \in X} ({| f(x_1) - f(x_2) |}/{{\sf dist}_X(x_1, x_2)}).
$
To define this concept in \PL{} we start by defining the $\sup$ operator:
\[
(a = \sup_{x \colon \tau\text{ s.t. }\phi(x)} e(x))
\equiv
\forall{x \colon \tau}.\phi(x)\implies (e(x) \leq a) \land 
\forall{b \colon \tau}.(\forall{x \colon \tau}.~\phi(x) \implies e(x) \leq b) \implies a \leq b
\]
Next, we implement the notions of the Lipschitz constant and
the Wasserstein metric (sometimes known as the Kantorovic metric):
\[
\begin{array}{r@{\;}l}
(a = K_{d_1,d_2}(f)) &\equiv
a = \sup_{\lrangle{s_1,s_2} \colon \tau_1 \times \tau_1} {d_2(f(s_1),f(s_2))}/{d_1(s_1,s_2)}\\
(a = W_{d_1}(\mu_1,\mu_2)) &\equiv
a = \sup_{f \colon \tau_1 \to \REAL \text{ s.t. } K_{d_1,d_{\mathbb{R}}}(f) \leq 1}
(\mathbb{E}_{s \sim \mu_1}[f(s)] - \mathbb{E}_{s \sim \mu_2}[f(s)])
\end{array}
\]
where $d_{\mathbb{R}} \colon \REAL \times \REAL \to \mathtt{pReal}$ is the usual metric in the real line.
The standard lemmas on summation and composition for Lipschitz constants (see, e.g., 
\citep[Lemmas 1 and 2]{2018arXiv180601265A}) can be proved in \PL{} by unfolding.

Now we are in a position to state the main theorem by \citet{2018arXiv180601265A} in \THESYSTEM:
\begin{equation}\label{example:LipschitzGVI:0}
\begin{array}{l@{}}
\forall{Q \colon \tau_S \times \tau_A \to \REAL}.
K_{d_S,d_{\mathbb{R}}}(f(Q)) \leq \sup_{a \colon \tau_A}K_{d_S,d_{\mathbb{R}}}(\lambda{s \colon \tau_S}.Q(s,a))\\
g = \lambda Q'.\lambda s.~ \mathbb{E}_{s' \sim T(s,a)}[f(Q')(s'))],\forall{s\colon \tau_S}.\forall{a\colon \tau_A}. \mathbb{E}_{s' \sim T(s,a)}[1] = 1,\gamma K_{d_S,W}(T) < 1\\
\vdash_{\mathrm\UPL} \mathtt{LipGVI} \colon \INT \to (\tau_S \times \tau_A \to \REAL) \to (\tau_S \times \tau_A \to \REAL) \mid\\
\forall{Q \colon \tau_S \times \tau_A \to \REAL}.
\forall{\varepsilon \colon \REAL}.
\forall{K_1 \colon \REAL}.
(\varepsilon>0) \land (K_1 = \sup_{a \colon \tau_A}K_{d_S,d_{\mathbb{R}}}(Q)(a))\\
\quad \implies \exists{k \colon \INT}.
\forall{K_2 \colon \REAL}.
\forall{K_3 \colon \REAL}.
\forall{K_4 \colon \REAL}.(K_2 = \sup_{a \colon \tau_A}K_{d_S,d_{\mathbb{R}}}(\mathbf{r}(k))(a))\\
\qquad \land (K_3 = K_{d_S,d_{\mathbb{R}}}(R))
\land (K_4 = K_{d_S,W}(T)) \implies 
K_2 \leq K_3/(1 - \gamma \ast K_4) + \varepsilon
\end{array}
\end{equation}
Here, $d_S$ is a distance function on the state space.
The logical assumptions are the losslessness of the transition dynamics $T$, and the definition $g=\lambda Q'.~\lambda s.~ \mathbb{E}_{s' \sim T(s,a)}[f(Q')(s'))]$.
We also introduce four slack variables $K_1$, $K_2$, $K_3$, and $K_4$ to use the above syntactic sugar for Lipschitz constants. 
The judgment (\ref{example:LipschitzGVI:0}) itself is proved inductively as in
the paper \citep{2018arXiv180601265A}. The key part of the proof is showing the inequality:
\[
K_{d_S,d_{\mathbb{R}}}(\lambda s \colon \tau_S.~ \mathbb{E}_{s' \sim T(s,a)}[f(Q')(s'))])\leq
K_{d_S,d_{\mathbb{R}}}(\lambda{s' \colon \tau_S}.~f(Q')(s'))) \cdot
K_{d_S, W_{d_S}}(\lambda{s\colon \tau_S}.T(s,a))
\]
Suppose
$K_1 = K_{d_S,d_{\mathbb{R}}}(\lambda{s' \colon \tau_S}.~f(Q')(s')))$
and
$K_2 = K_{d_S, W_{d_S}}(\lambda{s\colon \tau_S}.T(s,a))$.
What we prove in our framework is that $z = K_{d_S,d_{\mathbb{R}}}(\lambda s \colon \tau_S.~ \mathbb{E}_{s' \sim T(s,a)}[f(Q')(s'))])$ implies
$z \leq K_1 \ast K_2$.
By unfolding and applying linearity of expectation, we obtain:
\[
\begin{array}{r@{}l}
&z =
K_{d_S,d_{\mathbb{R}}}(\lambda s \colon \tau_S.~ \mathbb{E}_{s' \sim T(s,a)}[f(Q')(s'))])\\
&\quad \iff
z = \sup_{s_1,s_2 \colon \tau_S}\dfrac{K_1 \cdot (\mathbb{E}_{s' \sim T(s_1,a)}[f(Q')(s'))/K_1] - \mathbb{E}_{s' \sim T(s_2,a)}[f(Q')(s'))/K_1])}{d_S(s_1,s_2)}.
\end{array}
\]
Here $1 = K_{d_S,d_{\mathbb{R}}}(\lambda{s' \colon \tau_S}.~f(Q')(s'))/K_1)$ holds from the property 
$d_{\mathbb{R}}(\alpha \cdot x,\alpha \cdot y) = \alpha \cdot d_{\mathbb{R}}(x,y)$ of $d_{\mathbb{R}}$ ($0 \leq \alpha$) and the losslessness $\forall{s\colon \tau_S}.\forall{a\colon \tau_A}.~ \mathbb{E}_{s' \sim T(s,a)}[1] = 1$ of the function $T$.
Hence, we conclude $z \leq K_1 \ast K_2$.

\section{Domain-specific reasoning principles}
Paper proofs of randomized algorithms typically use proof techniques to abstract
away unimportant details.  In this section, we show how \THESYSTEM\ can support
custom proof techniques in the form of domain-specific logics for
reasoning about higher-order programs.
Specifically, we show that the $\top\top$-lifting construction
by~\citet{Katsumata2014PEM}---roughly, a categorical construction useful for
building different refinements of the probability distribution monad---can be
smoothly incorporated in \THESYSTEM. As concrete examples, we instantiate the
unary $\top\top$-lifting construction to a logic for reasoning about the
probability of failure using the so-called union
bound~\citep{BartheGGHS16ICALP}, and the binary $\top\top$-lifting construction
to a logic for reasoning about probabilistic coupling. 

\subsection{Embedding Unary Graded $\top\top$-lifting}
\label{subsection:example:UnaryTT-lifting}


Roughly speaking, the $\top\top$-lifting of a monad is given by a large intersection of inverse images of some predicate, called \emph{lifting parameters}.
We can internalize this construction of $\top\top$-lifting in \THESYSTEM\
using a large intersection of assertions as $\forall x \colon \tau.\phi_x$, and
the inverse image $\phi[e/y]$ of an assertion $\phi$ along an expression $e$.
First, we internalize a general construction of \emph{graded} $\top\top$-lifting  (along the fibration $q\colon\Pred(\QBS) \to \QBS$) in the unary logic \UPL.
Then we instantiate it to construct a unary graded $\top\top$-lifting for reasoning about the probability of failure using a union bound.

These instantiations of $\top\top$-liftings need subprobability
measures. Accordingly, we introduce a new monadic type $D[\tau]$ describing
the set of \emph{subprobability} measures over $\tau$.  We interpret
$D$ by $\interpret{D[\tau]} \defeq \mathfrak{P}\interpret{\tau}$, and
interpret monadic structures in the same way as the ones on the monadic
type $M$.  Furthermore, we assume that for every type $\tau$,
$D[\tau]$ is a subtype of $M[\tau]$.  We introduce the following
axioms 
enabling syntactic conversions from distributions in
$D[\mathtt{unit}]$ to real numbers in $[0,1]$.
\begin{equation}
\label{eq:MHOL:equivalence:value_and_distribution_on_unit_type}
{ 
\AxiomC{$\Gamma \vdash e \colon D[\mathtt{unit}]$}
\UnaryInfC{$\Gamma \vdash_{\mathrm{\NameOfUnderlyingLogic}} e = \mathtt{scale}(\mathtt{return}(\ast),\lambda z \colon{\mathtt{unit}}.~\mathbb{E}_{y \sim e}[1])$}
\DisplayProof
}
\qquad
{  
\AxiomC{$\Gamma \vdash e \colon D[\tau]$}
\UnaryInfC{$\Gamma \vdash_{\mathrm{\NameOfUnderlyingLogic}} 0 \leq \mathbb{E}_{y \sim e}[1] \leq 1$}
\DisplayProof
}
\end{equation}

\paragraph{General Construction}
We define a graded $\top\top$-lifting for the monadic type $D$.
Consider a type $\zeta$ equipped with a preordered monoid structure $(\zeta,1_\zeta,\cdot_\zeta,\leq_\zeta)$,
and an arbitrary type $\theta$.
A \emph{lifting parameter} is a well-typed formula $S$ satisfying
$\Gamma, \mathbf{k} \colon \zeta, \mathbf{l} \colon D[\theta] \vdash S \;\mathsf{wf}$
and the following monotonicity condition:
\[
\Gamma, \mathbf{k} \colon \zeta, \mathbf{l} \colon D[\theta]
\vdash_{\mathrm{\NameOfUnderlyingLogic}}
\forall{\alpha \colon \zeta}. 
\forall{\beta \colon \zeta}. 
(\alpha \leq_\zeta\beta \implies
(S[\alpha/\mathbf{k}]\implies S[\beta/\mathbf{k}]))
\]
Roughly speaking, a lifting parameter is a family of predicates $D[\theta]$
which is monotone with respect to the parameter in the preordered monoid $\zeta$.
The type $\theta$ can differ, depending on the application. For
example, in the embedding of the union bound logic, the type $\theta$ is
set to $\mathtt{unit}$.

For any assertion $\Gamma, \mathbf{r}' \colon \tau \vdash \phi\;\mathsf{wf}$ and an expression
$\Gamma \vdash \alpha \colon \zeta$, we define the \emph{unary $\top\top$-lifting}
$\Gamma, \mathbf{r} \colon D[\tau] \vdash \mathfrak{U}^{\alpha}_S \phi \;\mathsf{wf}$
for the lifting parameter $S$ by the following assertion.
\[
\mathfrak{U}^{\alpha}_S \phi \equiv \forall{\beta \colon \zeta}. \forall{f \colon \tau \to D[\theta]}. ((\forall{x \colon \tau}. \phi[x/\mathbf{r}'] \implies S[\beta/\mathbf{k}, f(x)/\mathbf{l}]) \implies S[\alpha\cdot\beta/\mathbf{k}, \BIND \mathbf{r}~f /\mathbf{l}])
\]
Notice that the parameters $\beta$ and $f$ range over \emph{all} elements in the
types $\zeta$ and $\tau \to D[\theta]$ respectively.  By regarding
$\mathfrak{U}_S$ as a constructor of graded $\top\top$-lifting, we obtain the
following graded monadic rules.
\begin{theorem}[Graded Monadic Laws of $\mathfrak{U}_S$]\label{thm:TT-lifting:monadic_law}
The following rules are derivable:
{\small
\begin{gather*}
{ 
\Gamma \mid \Psi \vdash_{\mathrm{\NameOfUnderlyingLogic}}
\forall{\alpha\colon\zeta}. 
\forall{\beta\colon\zeta}. 
(\alpha \leq_\zeta \beta)
\implies
\mathfrak{U}^{\alpha}_S \phi \implies \mathfrak{U}^{\beta}_S \phi
}
\\
{
\Gamma \mid \Psi \vdash_{\mathrm{\NameOfUnderlyingLogic}}
\forall{\alpha \colon \zeta}. 
(\forall{x \colon \tau}. \phi_1[x/\mathbf{r}'] \implies \phi_2[x/\mathbf{r}'])\implies (\mathfrak{U}^{\alpha}_S \phi_1 \implies \mathfrak{U}^{\alpha}_S \phi_2)
}
\\
\vspace{0.7em} 
{ 
\AxiomC{
\begin{tabular}{c}
$ $\\
$
\Gamma \mid \Psi \vdash_{\mathrm{\NameOfUnaryLogic}} e \colon \tau \mid \phi[\mathbf{r}/\mathbf{r}']
$
\end{tabular}}
\UnaryInfC{$\Gamma \mid \Psi \vdash_{\mathrm{\NameOfUnaryLogic}} \RETURN(e) \colon D[\tau] \mid \mathfrak{U}^{1_\zeta}_S \phi$}
\DisplayProof
}
\quad
{ 
\AxiomC{
\begin{tabular}{c}
$\Gamma \mid \Psi \vdash_{\mathrm{\NameOfUnaryLogic}} e \colon D[\tau]\mid
\mathfrak{U}^{\alpha}_S \phi$
\\
$\Gamma \mid \Psi \vdash_{\mathrm{\NameOfUnaryLogic}} e' \colon \tau \to D[\tau'] \mid \forall{x \colon \tau}. \phi[x/\mathbf{r}]{\implies} (\mathfrak{U}^{\beta}_S\phi')[\mathbf{r} x/\mathbf{r}]
$
\end{tabular}
}
\UnaryInfC{$\Gamma \mid \Psi \vdash_{\mathrm{\NameOfUnaryLogic}}\BIND e~e' \colon D[\tau'] \mid \mathfrak{U}^{\alpha\cdot\beta}_S\phi'$}
\DisplayProof
}
\end{gather*}
}
\end{theorem}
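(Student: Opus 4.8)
The plan is to unfold the definition of $\mathfrak{U}^{\alpha}_S \phi$ in each of the four rules and reduce everything to reasoning in \PL, invoking Theorem~\ref{thm:equiPLUPL} to pass between the \UPL\ judgments and their \PL\ counterparts. The two monotonicity rules are already \PL\ statements, so they need no conversion; the return and bind rules are genuine \UPL\ inference rules, and for those I would convert each premise to a \PL\ assertion, prove the required \PL\ implication about the body of the lifting, and convert the conclusion back to \UPL.

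For the grade-monotonicity rule, assume $\alpha \leq_\zeta \beta$ and $\mathfrak{U}^{\alpha}_S \phi$; given arbitrary witnesses $\gamma \colon \zeta$ and $f \colon \tau \to D[\theta]$ together with the hypothesis $\forall x.\,\phi[x/\mathbf{r}'] \implies S[\gamma/\mathbf{k}, f(x)/\mathbf{l}]$, instantiating $\mathfrak{U}^{\alpha}_S \phi$ at $\gamma, f$ yields $S[\alpha\cdot\gamma/\mathbf{k}, \BIND \mathbf{r}~f/\mathbf{l}]$. Since $\cdot$ is monotone in the preordered monoid $\zeta$, we have $\alpha\cdot\gamma \leq_\zeta \beta\cdot\gamma$, and the monotonicity condition imposed on the lifting parameter $S$ upgrades this to $S[\beta\cdot\gamma/\mathbf{k}, \BIND \mathbf{r}~f/\mathbf{l}]$, which is exactly the body of $\mathfrak{U}^{\beta}_S \phi$. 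The predicate-monotonicity rule is purely logical: from $\forall x.\,\phi_1[x/\mathbf{r}'] \implies \phi_2[x/\mathbf{r}']$ one checks that any witness of the antecedent of $\mathfrak{U}^{\alpha}_S \phi_2$ also witnesses the antecedent of $\mathfrak{U}^{\alpha}_S \phi_1$, so both liftings share the conclusion $S[\alpha\cdot\gamma/\mathbf{k}, \BIND \mathbf{r}~f/\mathbf{l}]$.

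For the return rule, Theorem~\ref{thm:equiPLUPL} turns the premise into $\Gamma \mid \Psi \vdash_{\PL} \phi[e/\mathbf{r}']$ and turns the goal into $\Gamma \mid \Psi \vdash_{\PL} (\mathfrak{U}^{1_\zeta}_S \phi)[\RETURN(e)/\mathbf{r}]$. Unfolding, its body is $(\forall x.\,\phi[x/\mathbf{r}'] \implies S[\beta/\mathbf{k}, f(x)/\mathbf{l}]) \implies S[1_\zeta\cdot\beta/\mathbf{k}, \BIND \RETURN(e)~f/\mathbf{l}]$; using the monad left-unit law $\BIND \RETURN(e)~f = f(e)$ (via [CONV]) and the monoid unit law $1_\zeta\cdot\beta = \beta$, it suffices to instantiate the antecedent at $x := e$ and discharge $\phi[e/\mathbf{r}']$, which is precisely the converted premise.

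The bind rule is the main obstacle and carries the characteristic $\top\top$ bookkeeping. After conversion, the goal is $\Gamma \mid \Psi \vdash_{\PL} (\mathfrak{U}^{\alpha\cdot\beta}_S \phi')[\BIND e~e'/\mathbf{r}]$. For arbitrary witnesses $\gamma, f$ and hypothesis $H \colon \forall y.\,\phi'[y/\mathbf{r}'] \implies S[\gamma/\mathbf{k}, f(y)/\mathbf{l}]$, I would rewrite $\BIND(\BIND e~e')~f = \BIND e~(\lambda x.\BIND (e'~x)~f)$ by monad associativity and set $g \equiv \lambda x.\BIND (e'~x)~f$. Instantiating the lifting of $e$ (premise A, converted) at grade $\beta\cdot\gamma$ and continuation $g$ reduces the goal to $S[\alpha\cdot(\beta\cdot\gamma)/\mathbf{k}, \BIND e~g/\mathbf{l}]$, and monoid associativity rewrites this to $S[(\alpha\cdot\beta)\cdot\gamma/\mathbf{k}, \BIND e~g/\mathbf{l}]$, the desired body, provided its antecedent $\forall x.\,\phi[x/\mathbf{r}'] \implies S[\beta\cdot\gamma/\mathbf{k}, g(x)/\mathbf{l}]$ holds. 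To establish that antecedent, fix $x$ satisfying $\phi[x/\mathbf{r}']$; premise B (converted) supplies $\mathfrak{U}^{\beta}_S \phi'$ for $e'~x$, and instantiating it at grade $\gamma$ and continuation $f$ states exactly that $H$ implies $S[\beta\cdot\gamma/\mathbf{k}, \BIND (e'~x)~f/\mathbf{l}] = S[\beta\cdot\gamma/\mathbf{k}, g(x)/\mathbf{l}]$, so $H$ closes it. I expect the delicate point to be choosing the two witnesses ($\beta\cdot\gamma, g$ for $e$ and $\gamma, f$ for $e'~x$) so that the monad- and monoid-associativity rewrites align, and verifying that the single hypothesis $H$ simultaneously drives the outer and the inner lifting.
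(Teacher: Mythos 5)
Your proof is correct and follows essentially the same route as the paper's: unfold $\mathfrak{U}_S$, use the preordered-monoid structure and the monotonicity of $S$ for the two \PL{} rules, and for return/bind choose exactly the witnesses the paper does (grade $\beta\cdot\gamma$ with continuation $\lambda x.\,\BIND e'x~f$ for the outer lifting, $(\gamma,f)$ for the inner one), closing with the monadic unit/associativity laws and the monoid laws. The only cosmetic difference is that you convert premises and goals wholesale between \UPL{} and \PL{} via Theorem~\ref{thm:equiPLUPL}, whereas the paper applies the syntax-directed [u-BIND]/[u-RET] rules followed by [u-SUB] --- an equivalent mechanism in this system.
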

The proofs follow by unfolding the constructor $\mathfrak{U}_S$.
Furthermore, the graded monadic laws (Theorem \ref{thm:TT-lifting:monadic_law})
are proved only using the structure of the preordered monoid for grading, the monotonicity of the lifting parameter, $\alpha$-conversions, $\beta\eta$-reductions, the monadic laws of $D$, and proof rules of \PL.
Moreover, the construction of $\top\top$-lifting can be applied to any monadic type.

\paragraph{Embedding the Union Bound Logic}
\label{subsec:TT-lifting:unionbound}
We show that the predicate lifting given in the semantic model of the union
bound logic~\citep{BartheGGHS16ICALP} can be implemented as a graded unary
$\top\top$-lifting in \THESYSTEM.  Concretely, we give a lifting
parameter $S$ such that the graded $\top\top$-lifting $\mathfrak{U}_S$
corresponds to the predicate lifting for the union bound logic.

Consider the additive monoid structure with usual ordering $(\mathtt{pReal},0,+,\leq)$. 
We define the lifting parameter $\mathbf{k} \colon \mathtt{pReal}, \mathbf{l} \colon D[\mathtt{unit}] \vdash S \;\mathsf{wf}$ by $S = (\mathbb{E}_{y \sim \mathbf{l}}[1] \leq \mathbf{k})$. The monotonicity of $S$ is obvious.
As we proved above, we have the monadic rules for the graded $\top\top$-lifting $\mathfrak{U}_S$.
Next, we prove that the graded $\top\top$-lifting $\mathfrak{U}_S$ describes the probability of failure:
\begin{proposition}
The following reduction is derivable in \NameOfUnderlyingLogic.
\[
\AxiomC{
 $\Gamma, \mathbf{r'}\colon \tau \vdash e \colon \BOOL $
\quad $\Gamma, \mathbf{r'}\colon \tau \mid \Psi \vdash_{\mathrm{\NameOfUnderlyingLogic}} \neg \phi \iff (e = \mathtt{true})$
}
\UnaryInfC{
$\Gamma, \mathbf{r} \colon D[\tau] \mid \Psi \vdash_{\mathrm{\NameOfUnderlyingLogic}}
\mathfrak{U}_S^\alpha(\neg \phi) \iff \Pr_{X \sim \mathbf{r}}[e[X/\mathbf{r}']] \leq \alpha$}
\DisplayProof
\]
\end{proposition}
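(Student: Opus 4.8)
The plan is to establish the biconditional as an admissible rule of \PL\ by unfolding the lifting constructor against the concrete lifting parameter and then proving the two implications separately. Instantiating the general definition of $\mathfrak{U}_S$ with the additive grading monoid $(\mathtt{pReal},0,+,\le)$, with $\theta=\mathtt{unit}$, and with $S\equiv(\mathbb{E}_{y\sim\mathbf{l}}[1]\le\mathbf{k})$, the assertion $\mathfrak{U}^\alpha_S(\neg\phi)$ expands to the statement that for all $\beta\colon\mathtt{pReal}$ and all $f\colon\tau\to D[\mathtt{unit}]$, if $\forall x\colon\tau.\,\neg\phi[x/\mathbf{r}']\implies\mathbb{E}_{y\sim f(x)}[1]\le\beta$, then $\mathbb{E}_{y\sim\BIND\mathbf{r}~f}[1]\le\alpha+\beta$. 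Before attacking either direction I would record two facts in \PL: the expectation--bind interchange $\mathbb{E}_{y\sim\BIND\mathbf{r}~f}[1]=\mathbb{E}_{X\sim\mathbf{r}}[\mathbb{E}_{y\sim f(X)}[1]]$, which follows from the monadic laws together with axiom (\ref{eq:MHOL:expected:variable_transformation}); and the rephrasing of the guard via the hypothesis $\neg\phi\iff(e=\mathtt{true})$, recalling that $\Pr_{X\sim\mathbf{r}}[e[X/\mathbf{r}']]=\mathbb{E}_{X\sim\mathbf{r}}[\IF e[X/\mathbf{r}']\THEN 1\ELSE 0]$.

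For the direction assuming the probability bound, I would fix arbitrary $\beta$ and $f$ satisfying the antecedent and bound $\mathbb{E}_{X\sim\mathbf{r}}[\mathbb{E}_{y\sim f(X)}[1]]$ by splitting the inner expectation according to whether the guard picked out by $e[X/\mathbf{r}']$ holds. On the region where the guard holds the antecedent gives $\mathbb{E}_{y\sim f(X)}[1]\le\beta$; on the complementary region I would use the subprobability mass bound $\mathbb{E}_{y\sim f(X)}[1]\le 1$ furnished by axiom (\ref{eq:MHOL:equivalence:value_and_distribution_on_unit_type}). Linearity and monotonicity of expectation (axioms (\ref{eq:MHOL:expected:linearity}) and (\ref{ineq:MHOL:expected_monotonicity})) then yield an estimate of the form $\mathbb{E}_{y\sim\BIND\mathbf{r}~f}[1]\le\beta+\Pr_{X\sim\mathbf{r}}[e[X/\mathbf{r}']]$, and the assumed probability bound closes the computation at $\alpha+\beta$.

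For the converse I would instantiate the universally quantified $\beta$ and $f$ with an extremal witness: take $\beta$ to be the monoid unit $0$ and let $f$ be the point mass $\RETURN(\ast)$ exactly on the set carved out by $e$ and the null sub-distribution on its complement, chosen so that the antecedent of the lifting is discharged trivially (its mass is $0$ wherever $S$ requires it to be small). Applying the interchange identity then collapses $\mathbb{E}_{y\sim\BIND\mathbf{r}~f}[1]$ to precisely the mass $\Pr_{X\sim\mathbf{r}}[e[X/\mathbf{r}']]$, and the conclusion of the lifting reads exactly $\Pr_{X\sim\mathbf{r}}[e[X/\mathbf{r}']]\le\alpha$, as required.

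The main obstacle is this converse direction: one must design the indicator continuation and grade so that the antecedent is vacuously satisfied while the conclusion recovers the intended probability, which forces careful polarity bookkeeping of which boolean value of $e$ corresponds to the small-mass region under $S$. The only genuinely semantic ingredients are the expectation--bind interchange and the subprobability mass bound of axiom (\ref{eq:MHOL:equivalence:value_and_distribution_on_unit_type}); the latter is essential, since for arbitrary (non-sub-probability) measures the $\le 1$ estimate fails and the equivalence breaks down. Everything else is $\beta\eta$- and monadic rewriting together with the linearity and monotonicity axioms, so the entire derivation stays within \PL.
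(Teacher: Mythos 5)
Your overall strategy does coincide with the paper's: unfold $\mathfrak{U}^{\alpha}_S$ against $S=(\mathbb{E}_{y\sim\mathbf{l}}[1]\le\mathbf{k})$ and the additive monoid, prove one implication by instantiating $\beta=0$ with the scaled point mass $\lambda X.\,\SCALE(\RETURN(\ast),\lambda u.\IF e[X/\mathbf{r}'] \THEN 1 \ELSE 0)$ (literally the paper's witness), and prove the other by bounding an arbitrary $f$ using the lifting's antecedent on one region and the mass bound of axiom (14) on the other (the paper phrases this as domination by the greatest admissible $f$, which is the same argument). The ingredients you isolate --- expectation--bind interchange, $\Pr$ as an expectation of an indicator, and the essential role of axiom (14) and subprobability --- are exactly the paper's.

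The gap is the polarity bookkeeping that you yourself name as the crux but never carry out; as written, both of your key steps fail. Under the stated premise $\neg\phi\iff(e=\mathtt{true})$, the antecedent of the lifting constrains the mass of $f$ on the region where $e$ is \emph{true}. Hence in your first direction the split gives $\mathbb{E}_{y\sim\BIND\mathbf{r}~f}[1]\le\beta\cdot\Pr_{X\sim\mathbf{r}}[e[X/\mathbf{r}']]+\Pr_{X\sim\mathbf{r}}[e[X/\mathbf{r}']=\mathtt{false}]$: the term contributed by the unconstrained region is the mass of the \emph{complement} of the guard, not $\Pr_{X\sim\mathbf{r}}[e[X/\mathbf{r}']]$ as you claim, and the hypothesis $\Pr_{X\sim\mathbf{r}}[e[X/\mathbf{r}']]\le\alpha$ gives no control over it. Symmetrically, your extremal witness in the other direction has mass $1$ exactly on the $e$-true region, which is exactly where the antecedent demands mass $\le\beta=0$; it is not ``discharged trivially'' but violated. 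Your two paragraphs in fact contradict each other: the first places the constrained region on the guard, the second implicitly needs it on the guard's complement.

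To be fair, the flip originates in the statement itself. What is actually derivable is $\mathfrak{U}^{\alpha}_S(\neg\phi)\iff\Pr_{X\sim\mathbf{r}}[\phi]\le\alpha$: the lifting of an assertion bounds the probability of its \emph{complement}. This matches the sentence following the proposition in the paper (``$\mathfrak{U}_S^\alpha(\neg\phi)\iff\Pr[\phi]\le\alpha$'', with $e$ ``an indicator function of $\phi$''), and the paper's own witness only satisfies the antecedent under that corrected reading, so the printed premise is a typo. With $e$ read as the indicator of $\neg\phi$, the right-to-left implication is not merely hard to prove but false: take $\mathbf{r}$ a probability measure concentrated on the $\phi$-region and $\alpha=1/2$; then $\Pr_{X\sim\mathbf{r}}[e[X/\mathbf{r}']]=0\le\alpha$, yet instantiating $\beta=0$ and $f=\lambda x.\SCALE(\RETURN(\ast),\lambda u.\IF e[x/\mathbf{r}'] \THEN 0 \ELSE 1)$ satisfies the antecedent and forces $1\le\alpha$. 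A correct proof must therefore fix the reading of $e$ (indicator of $\phi$) and then use it consistently in both directions, which means swapping the roles of the two regions in your first paragraph; your second paragraph then goes through verbatim.
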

Intuitively, this proposition holds because $\mathfrak{U}_S^\alpha(\neg \phi)
\iff \Pr[\phi] \leq \alpha$. The second premise requires $\phi$ to be a
measurable assertion, i.e., there is an indicator function $\lambda
\mathbf{r'}\colon \tau. e$ of $\phi$.

\subsection{Embedding Relational $\top\top$-lifting}
\label{subsection:example:RelationalTT-lifting}
Similar to unary graded $\top\top$-lifting, we can also define relational graded
$\top\top$-lifting by just switching the type of assertions from predicates to relations.
As a concrete example, we instantiate the (non-graded)
relational $\top\top$-lifting for reasoning about probabilistic coupling. 
%
Consider a preordered monoid $(\zeta,1_\zeta,\cdot_\zeta,\leq_\zeta)$ and a pair of arbitrary types $\theta_1$ and $\theta_2$.
A lifting parameter for relational $\top\top$-lifting is a well-typed formula of the form
$\Gamma, \mathbf{k} \colon \zeta, \mathbf{l}_1 \colon D[\theta_1], \mathbf{l}_2 \colon D[\theta_2] \vdash S \;\mathsf{wf}$
satisfying the following monotonicity condition:
\[
\Gamma, \mathbf{k} \colon \zeta, \mathbf{l}_1 \colon D[\theta_1], \mathbf{l}_2 \colon D[\theta_2] 
\vdash_{\mathrm{\NameOfUnderlyingLogic}}
\forall{\alpha \colon \zeta}. 
\forall{\beta \colon \zeta}. 
(\alpha \leq_\zeta\beta \implies
(S[\alpha/\mathbf{k}]\implies S[\beta/\mathbf{k}])).
\]
Intuitively, a lifting parameter for relational graded $\top\top$-lifting is
a monotone family of relations between $D[\theta_1]$ and $D[\theta_2]$
with respect to the preordered monoid $\zeta$.

For any assertion $\Gamma, \mathbf{r}' \colon \tau_1,\mathbf{r}' \colon \tau_2 \vdash \phi \;\mathsf{wf}$ and an expression $\Gamma \vdash \alpha \colon \zeta$ we define its \emph{relational lifting} $\Gamma, \mathbf{r}_1 \colon D[\tau_1],\mathbf{r}_2 \colon D[\tau_2] \vdash \mathfrak{R}^{\alpha}_S \phi \;\mathsf{wf}$ for the lifting parameter $S$ as the following assertion.
\[
\begin{array}{lr@{}l}
&
\mathfrak{R}^{\alpha}_S \phi \equiv & \; \forall{\beta \colon \zeta}. \forall{f_1 \colon \tau_1 \to D[\theta_1]}. \forall{f_2 \colon \tau_2 \to D[\theta_2]}.\\
& &\qquad (\forall{x_1 \colon \tau_1}. \forall{x_2 \colon \tau_2}. \phi[x_1/\mathbf{r}'_1, x_2/\mathbf{r}'_2] \implies S[\beta/\mathbf{k}, f_1(x_1)/\mathbf{l}_1, f_2(x_2)/\mathbf{l}_2]) \\
& & \qquad \qquad \implies S[\alpha\cdot\beta/\mathbf{k}, \BIND \mathbf{r}_1~ f_1 /\mathbf{l}_1, \BIND \mathbf{r}_2~f_2 /\mathbf{l}_2])).
\end{array}
\]
We can prove two-sided graded monadic laws for $\mathfrak{R}_S$, analogous to those for graded unary $\top\top$-lifting $\mathfrak{U}_S$.
We omit these here.
\paragraph{Embedding the Modality for Relational Coupling of Distributions}

As an example of this relational construction, we show how to
internalize in our framework the modality for
relational probabilistic coupling defined by
\citet{DBLP:conf/esop/0001BBBG018}.
We say that two probability distributions $\mu_1$ and $\mu_2$ are \emph{coupled} over a relation $R \subseteq X \times Y$ if
$
\forall{S \subseteq X}. \Pr_{x \sim \mu_1}[S] \leq \Pr_{y \sim \mu_2}[R(S)] 
$.
%
To internalize this construction we now need to supply the  appropriate lifting
parameters.  First, we take the grading monoid to be the trivial one on the unit
type $\mathtt{unit}$.  Then, we set the lifting parameter $\mathbf{k} \colon
{\mathtt{unit}},\mathbf{l}_1 \colon D[\mathtt{unit}], \mathbf{l}_2 \colon
D[\mathtt{unit}] \vdash S  \;\mathsf{wf}$ by $S = (\mathbb{E}_{y \sim
\mathbf{l}_1}[1] \leq \mathbb{E}_{y \sim \mathbf{l}_2}[1])$, which is equivalent
to the usual inequality on $[0,1]$.  The assertion $S$ obviously satisfies the
monotonicity of lifting parameter.  Hence we obtain the $\top\top$-lifting
$\Gamma, \mathbf{r}_1 \colon D[\tau_1],\mathbf{r}_2 \colon D[\tau_2] \vdash
\mathfrak{R}_S \phi$ for the lifting parameter $S$.
What we need to prove is that the lifting $\mathfrak{R}_S$ actually describes
the above inequality of probabilistic dominance. In other words, we need to
prove the following fundamental property in \NameOfUnderlyingLogic.
\begin{proposition}[{\citet[Lemma 2]{DBLP:conf/esop/0001BBBG018}}]
\[
\begin{array}{r@{}l}
&\Gamma, \mathbf{r}_1 \colon D[\tau_1],\mathbf{r}_2 \colon D[\tau_2]\mid\Psi\vdash_{\mathrm{\NameOfUnderlyingLogic}}\mathfrak{R}_S \phi\implies \forall{f_1 \colon \tau_1 \to \BOOL }. \forall{f_2 \colon \tau_2 \to \BOOL }.\\
&\qquad\qquad \forall{y \colon \tau_2}. 
((f_2(y) = \mathtt{true})
\implies
\forall{x \colon \tau_1}. \phi[x/\mathrm{r}'_1,y/\mathrm{r}'_2] \implies (f_1(x) = \mathtt{true})
)\\
&\qquad\qquad\quad\implies \Pr_{x \sim \mathbf{r}_1 }[f_1(x)] \leq \Pr_{y \sim \mathbf{r}_1}[f_2(y)]
\end{array}
\]
\end{proposition}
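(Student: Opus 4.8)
The plan is to prove the proposition entirely within \PL\ by unfolding the definition of the relational lifting $\mathfrak{R}_S\phi$ and instantiating its two universally quantified continuation functions with carefully chosen indicator-valued sub-distributions. Recall that, for the coupling parameter $S = (\mathbb{E}_{y \sim \mathbf{l}_1}[1] \leq \mathbb{E}_{y \sim \mathbf{l}_2}[1])$ over the trivial (unit) grading monoid—so that the grade $\alpha$ and the quantified $\beta$ collapse—the assertion $\mathfrak{R}_S\phi$ says that for every pair of continuations $g_1 \colon \tau_1 \to D[\mathtt{unit}]$ and $g_2 \colon \tau_2 \to D[\mathtt{unit}]$, if $\phi[x_1/\mathbf{r}'_1,x_2/\mathbf{r}'_2]$ pointwise entails $\mathbb{E}_{z \sim g_1(x_1)}[1] \leq \mathbb{E}_{z \sim g_2(x_2)}[1]$, then $\mathbb{E}_{z \sim \BIND \mathbf{r}_1~g_1}[1] \leq \mathbb{E}_{z \sim \BIND \mathbf{r}_2~g_2}[1]$. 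I rename the lifting's continuations to $g_1,g_2$ to avoid a clash with the boolean functions $f_1,f_2$ appearing in the proposition.

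First I would assume $\mathfrak{R}_S\phi$ together with the hypothesis relating $f_1,f_2$ and $\phi$, and define the two continuations by $g_i \equiv \lambda x.~\mathtt{scale}(\mathtt{return}(\ast),\lambda z\colon\mathtt{unit}.(\IF f_i(x) \THEN 1 \ELSE 0))$, so that $g_i(x)$ is the sub-distribution on $\mathtt{unit}$ whose mass is the $0/1$ indicator of $f_i(x)$. Using the unit-type conversion axioms (\ref{eq:MHOL:equivalence:value_and_distribution_on_unit_type}) together with the scaling identity (\ref{eq:MHOL:expected:scaling}), I would establish in \PL\ that $\mathbb{E}_{z \sim g_i(x)}[1] = (\IF f_i(x) \THEN 1 \ELSE 0)$; this is the bridge that turns statements about masses of $D[\mathtt{unit}]$-valued functions into statements about the booleans $f_i$.

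Next I would discharge the premise of the instantiated lifting, which reduces to showing $\phi(x,y) \implies \mathbb{E}_{z \sim g_1(x)}[1] \leq \mathbb{E}_{z \sim g_2(y)}[1]$ for all $x,y$, i.e. $\phi(x,y) \implies [f_1(x)] \leq [f_2(y)]$; by a boolean case analysis on the two indicators this is exactly the content of the hypothesis on $f_1,f_2,\phi$. Feeding this premise into $\mathfrak{R}_S\phi$ yields $\mathbb{E}_{z \sim \BIND \mathbf{r}_1~g_1}[1] \leq \mathbb{E}_{z \sim \BIND \mathbf{r}_2~g_2}[1]$. Finally, applying the variable-transformation axiom (\ref{eq:MHOL:expected:variable_transformation}) to commute the outer expectation with the bind, together with the definition $\Pr_{x \sim e}[e'] \equiv \mathbb{E}_{x \sim e}[\IF e' \THEN 1 \ELSE 0]$, I would rewrite each side as $\mathbb{E}_{x \sim \mathbf{r}_i}[(\IF f_i(x)\THEN 1\ELSE 0)] = \Pr_{x \sim \mathbf{r}_i}[f_i(x)]$, obtaining the desired probability inequality.

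The hard part will be getting the \emph{polarity} of the indicators to line up with the monotone orientation of $S$ (mass on side $1$ below mass on side $2$) and with the direction of the implication in the hypothesis: the wrong pairing of indicators to sides produces the reversed inequality, and for genuine sub-probability measures one cannot repair a mismatched orientation by complementation, since the masses of $\mathbf{r}_1,\mathbf{r}_2$ need not be $1$. Beyond this, the only other subtlety is the measurability side-condition implicit in the unit-type axioms (\ref{eq:MHOL:equivalence:value_and_distribution_on_unit_type})—here it is met because $\lambda \mathbf{r}'.(\IF f_i(\mathbf{r}')\THEN 1 \ELSE 0)$ is the required indicator function—while keeping the substitutions $[x/\mathbf{r}'_1,y/\mathbf{r}'_2]$ straight during the unfolding of $\mathfrak{R}_S$; everything else is routine equational reasoning with the axioms of Section~\ref{sec:axioms}.
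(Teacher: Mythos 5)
Your proposal is correct and takes essentially the same route as the paper: the paper's own proof is only a sketch (``follows \citet[Theorem 12]{katsumata_et_al:LIPIcs:2015:5532}, using the $D[\mathtt{unit}] \cong [0,1]$ axioms and axioms on scaling''), and your instantiation of the lifting's continuations with the indicator-scaled unit measures $g_i \equiv \lambda x.\,\mathtt{scale}(\RETURN(\ast),\lambda z.\IF f_i(x) \THEN 1 \ELSE 0)$, followed by the mass computation and the variable-transformation axiom, is precisely that argument — it also mirrors the paper's own appendix proof of the unary analogue. One caveat on the polarity issue you flag: the pointwise premise you discharge, $\phi(x,y) \implies (\IF f_1(x) \THEN 1 \ELSE 0) \leq (\IF f_2(y) \THEN 1 \ELSE 0)$, is equivalent to the \emph{intended} hypothesis ``$\phi(A) \subseteq B$'' (as in the paper's own gloss and in Lemma 2 of the cited work), not to the hypothesis as literally printed, which has the roles of $f_1$ and $f_2$ swapped and would yield the reversed pointwise inequality; that printed hypothesis — like the conclusion's $\Pr_{y \sim \mathbf{r}_1}[f_2(y)]$, which should read $\Pr_{y \sim \mathbf{r}_2}[f_2(y)]$ — is a typo in the statement that your proof silently and correctly repairs.
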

Intuitively $f_1$ and $f_2$ encode indicator functions $\chi_A$ and $\chi_B$ respectively, where $\phi(A) \subseteq B$.
The proof follows \citet[Theorem 12]{katsumata_et_al:LIPIcs:2015:5532}, again
using the equivalence $D[\mathtt{unit}] \cong [0,1]$ axiomatized in
(\ref{eq:MHOL:equivalence:value_and_distribution_on_unit_type}) and axioms on
scaling of measures.

Specializing the assertion $\phi$ can establish useful
probabilistic properties. For instance, taking $\phi$ to be the equality
relation yields the following property.
\begin{corollary}[{\citet[Corollary 1]{DBLP:conf/esop/0001BBBG018}}]
If $\tau_1 = \tau_2 = \tau$ then
\[
\begin{array}{r@{}l}
&\Gamma, \mathbf{r} \colon D[\tau] \mid \Psi\vdash_{\mathrm{\NameOfUnderlyingLogic}}\\
&\quad \mathfrak{R}_S (\mathbf{r}'_1 = \mathbf{r}'_2) \iff (\forall{g \colon \tau\to\REAL }. 
(\forall{x \colon \tau}. 0 \leq g(x) \leq 1) \implies
\mathbb{E}_{x \sim \mathbf{r}_1}[g(x)] \leq \mathbb{E}_{y \sim \mathbf{r}_2}[g(y)])
\end{array}
\]
\end{corollary}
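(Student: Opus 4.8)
The plan is to prove both directions of the $\iff$ by unfolding the definition of the relational lifting $\mathfrak{R}_S$ at the equality relation and then exploiting the axiomatized correspondence $D[\mathtt{unit}] \cong [0,1]$ from (\ref{eq:MHOL:equivalence:value_and_distribution_on_unit_type}). Since the grading monoid is the trivial one on $\mathtt{unit}$, the $\forall\beta$ and the monoid products collapse, so $\mathfrak{R}_S(\mathbf{r}'_1 = \mathbf{r}'_2)$ unfolds to: for all $f_1, f_2 \colon \tau \to D[\mathtt{unit}]$, if $\forall x.\ \mathbb{E}_{z \sim f_1(x)}[1] \le \mathbb{E}_{z \sim f_2(x)}[1]$ (the premise $\forall x_1,x_2.\,(x_1=x_2)\implies S$ after identifying $x_1$ with $x_2$), then $\mathbb{E}_{z \sim \BIND \mathbf{r}_1~f_1}[1] \le \mathbb{E}_{z \sim \BIND \mathbf{r}_2~f_2}[1]$. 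The key bridging fact I would establish first is the \emph{mass-of-bind} law $\mathbb{E}_{z \sim \BIND e~f}[1] = \mathbb{E}_{x \sim e}[\mathbb{E}_{z \sim f(x)}[1]]$, which holds because both expectation and $\BIND$ are Kleisli liftings, so it is the monadic associativity law (validated semantically in Section~\ref{sec:semantics}) and refines (\ref{eq:MHOL:expected:variable_transformation}). This lets me identify a $D[\mathtt{unit}]$-valued function $f$ with the bounded real function $g(x) = \mathbb{E}_{z \sim f(x)}[1] \in [0,1]$, and identify the mass of $\BIND e~f$ with $\mathbb{E}_{x \sim e}[g(x)]$.

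For the forward direction, I would assume $\mathfrak{R}_S(\mathbf{r}'_1 = \mathbf{r}'_2)$ and take any $g \colon \tau \to \REAL$ with $0 \le g \le 1$. I would instantiate both $f_1$ and $f_2$ with the same function $\lambda x.\,\SCALE(\RETURN(\ast), \lambda z.\,g(x))$, whose mass at $x$ equals $g(x)$ by the scaling axiom (\ref{eq:MHOL:expected:scaling}) together with the fact that $\RETURN(\ast)$ has unit mass. The premise of the unfolded lifting then holds trivially, being the equality $g(x) = g(x)$, so the conclusion of $\mathfrak{R}_S$ combined with the mass-of-bind law yields exactly $\mathbb{E}_{x \sim \mathbf{r}_1}[g(x)] \le \mathbb{E}_{y \sim \mathbf{r}_2}[g(y)]$, as required.

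For the backward direction, I would assume the dominance condition for all bounded $g$ and take arbitrary $f_1, f_2$ satisfying the premise. Setting $g_i(x) = \mathbb{E}_{z \sim f_i(x)}[1]$, the second axiom of (\ref{eq:MHOL:equivalence:value_and_distribution_on_unit_type}) gives $g_1, g_2 \in [0,1]$, and the premise is precisely the pointwise inequality $g_1 \le g_2$. I would then chain $\mathbb{E}_{z \sim \BIND \mathbf{r}_1~f_1}[1] = \mathbb{E}_{x \sim \mathbf{r}_1}[g_1(x)] \le \mathbb{E}_{x \sim \mathbf{r}_1}[g_2(x)] \le \mathbb{E}_{y \sim \mathbf{r}_2}[g_2(y)] = \mathbb{E}_{z \sim \BIND \mathbf{r}_2~f_2}[1]$, where the first inequality is monotonicity of expectation (\ref{ineq:MHOL:expected_monotonicity}) applied to $g_2 - g_1 \ge 0$ and the second is the hypothesis instantiated at $g = g_2$.

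The main obstacle is the bridging step: carefully justifying the mass-of-bind law and the identification of $D[\mathtt{unit}]$-valued functions with $[0,1]$-valued test functions purely inside \PL, so that one can move freely between the ``subprobability on $\mathtt{unit}$'' view used by $\mathfrak{R}_S$ and the ``bounded test function'' view of the statement. Everything else is a mechanical instantiation. This mirrors the proof of the preceding proposition, which follows \citet[Theorem 12]{katsumata_et_al:LIPIcs:2015:5532}; one could alternatively derive the forward direction from that proposition restricted to indicator functions, but upgrading to arbitrary $[0,1]$-valued $g$ would then require an approximation argument that the direct unfolding above avoids.
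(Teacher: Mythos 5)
Your proof is correct, and it runs on the same machinery that \THESYSTEM\ uses for all of these lifting results: unfolding the $\top\top$-lifting, the axioms identifying $D[\mathtt{unit}]$ with $[0,1]$, witnesses of the form $\SCALE(\RETURN(\ast),\cdot)$, and the ``mass-of-bind'' equality. The difference from the paper is organizational but worth noting: the paper obtains this corollary by ``taking $\phi$ to be the equality relation'' in the preceding proposition, whose statement gives only one implication and only for \emph{boolean indicator} functions $f_1,f_2$; as you correctly observe, that route cannot literally deliver a biconditional over arbitrary $[0,1]$-valued $g$ without an extra approximation step. Your direct unfolding supplies the argument the paper leaves implicit, and it is the same argument the paper sketches for the unary union-bound proposition (following Katsumata's Theorem 12): the forward direction instantiates $f_1=f_2=\lambda x.\SCALE(\RETURN(\ast),\lambda z.\,g(x))$, which lands in $D[\mathtt{unit}]$ precisely because $0\le g\le 1$, and the backward direction recovers $g_i$ as the pointwise masses of $f_i$, using the axiom $0\le\mathbb{E}_{y\sim e}[1]\le 1$ for $e\colon D[\tau]$ together with monotonicity and linearity of expectation. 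The only item you should flag as needing justification is the mass-of-bind law $\mathbb{E}_{z\sim\BIND e\,f}[1]=\mathbb{E}_{x\sim e}[\mathbb{E}_{z\sim f(x)}[1]]$: it is not among the listed \PL\ axioms, but it is exactly the Kleisli-associativity instance (via $\mathtt{pReal}\cong\mathfrak{M}1$) that the paper itself invokes as the equality marked $(\dag)$ in its proof of the union-bound proposition, so treating it as admissible in \PL\ is consistent with the paper's own practice.
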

If we take $g$ to be the indicator function of a (measurable) set $A$, the
conclusion shows that the measure of $A$ in $\mathbf{r}_1$ is smaller than the
measure of $A$ in $\mathbf{r}_2$. Since the assertion $\phi$ is symmetric, we
can also conclude the inequality in the other direction, hence showing that the
measure of $A$ must be equal in $\mathbf{r}_1$ and in $\mathbf{r}_2$. Since
equality holds for all measurable $A$, $\mathbf{r}_1$ and $\mathbf{r}_2$ must
denote equal probability measures.

\section{Related Work}

\paragraph{Semantics of probabilistic programs}
The semantics of probabilistic programs has been extensively studied
starting from the seminal work of~\citet{KOZEN1981328}.  Imperative
first-order programs with continuous distributions have a 
well-understood interpretation based on the Giry monad~\citep{Giry1982}
over the category $\Meas$ of measurable spaces and measurable
functions~\citep{Panangaden1999171}.  However, this approach does not
naturally extend to the higher-order setting since $\Meas$ is not
Cartesian closed~\citep{aumann1961}.  In addition, although $\Meas$ has
a symmetric monoidal closed structure~\citep{2013arXiv1312.1445C}, the
Giry monad is not strong with respect to the canonical
one~\citep{SATO20182888}. 

The category $\QBS$~\citep{HeunenKSY17} of quasi-Borel spaces was
introduced as an ``extension'' of $\Meas$ that is Cartesian closed and
that can be used to interpret higher-order probabilistic programs
with continuous distributions.  The category of s-finite
kernels~\citep{Staton17ESOP} gives a denotational semantics to
observe-like statements in these models, including our construct $\OBSERVE$. In particular, it supports
infinite measures and rescaling of measures. These are useful to give
semantics to programs and to devise equational rules to reason about
the equivalence of programs.  
 The monad $\mathfrak{M}$
of measures on quasi-Borel spaces we use in this paper was introduced by~\citet{Scibior:2017:DVH:3177123.3158148} based on these
constructions. 
One reason we chose $\QBS$ is that it has an
obvious forgetful functor $\QBS \to \Sets$ giving the identity on
functions. This is a key property to allow set-theoretic reasoning
in \THESYSTEM.

An alternative approach has been
proposed by~\citet{Ehrhard:2017:MCS:3177123.3158147}. They use a
domain-theoretic approach based on the category $\mathbf{Cstab}$ of
cones and stable functions, extending previous work on
probabilistic coherent spaces~\citep{EhrhardTP14}.  For our work, $\QBS$ is a
more natural choice than $\mathbf{Cstab}$ for two reasons.
First, the categorical structure needed for $\OBSERVE$-like
statements has already been studied in $\QBS$. Second, we are
interested in terminating programs and so we do not need the
domain-theoretic structure of $\mathbf{Cstab}$. Other models related
to both $\QBS$ and $\mathbf{Cstab}$ that one could consider are the ones by~\citet{TixKP09a} and \citet{KeimelP16}.
Several other papers have studied models for higher-order probabilistic programming starting from the seminal papers  on probabilistic powerdomains
by~\citet{JonesP89} and \citet{Saheb-Djahromi80}. A
non-exhaustive list includes \citet{JungT98,VaraccaVW04,Goubault-LarrecqJ14,Mislove17,CastellanCPW18}. Many
of these model only partially support the features we need.
There is also recent work that studies the semantics of
probabilistic programming from an operational perspective. 
\citet{BorgstromLGS16} propose distribution-based and sample-based
operational semantics for an untyped lambda calculus with continuous
random variables. Their calculus also contains primitives for scaling
and failing which allow them to model different kinds of $\OBSERVE$-like constructs.
\citet{CulpepperC17} propose an entropy-based operational
semantics for a simply typed lambda calculus with continuous random
variables and propose an operational equational theory for it based on logical
relations. The focus of their work is program equivalence, as reflected in the form of their judgments. In contrast, we start from an expressive
predicate logic for probabilistic computations which allows us to
express many different (unary and relational) properties, not just equivalence.
\paragraph{Verification of probabilistic programs}
Starting from the seminal work on Probabilistic Propositional Dynamic Logic by \citet{Kozen85}, several papers have
proposed program logics for the verification of imperative
probabilistic programs. \citet{Morgan96, McIverM05} propose a
predicative logic to
reason  about an imperative
language with probabilistic and non-deterministic choice. Both these
program logics allow reasoning about the expected value of a
single real-valued function on program states. Many subsequent papers build on
this
idea~\citep{HurdMM05,KatoenMMM10,GretzKM13,AudebaudP09,KaminskiKMO16}. 
Other papers focus on program logics where the pre-condition and post-condition are probabilistic
assertions about the input and output
distributions~\citet{Ramshaw79,Hartog:thesis,ChadhaCMS07,RandZ15}. 
\citet{BEGGHS18} propose an assertion-based logic, named ELLORA, using
expectation for
verifying properties of imperative probabilistic programs with
discrete random variables.  Our assertion logic \PL\ is similar in spirit
to the one of ELLORA, but it further supports continuous distributions and the
verification of higher-order programs. On the other hand, ELLORA has powerful rules for
probabilistic while loops that \THESYSTEM\ does not support. It would be
interesting to explore if similar rules can also be
added to \THESYSTEM.
Formalizations of measure and integration theory in general purpose
interactive theorem provers have been considered in many papers
\citep{AudebaudP09,Hurd03,Richter04,Coble10,HolzlH11}.
\citet{AvigadHS14} recently completed a proof of the Central Limit
theorem, which is the principle underlying concentration bounds.
\citet{holzl2016markov} formalized discrete-time Markov chains and
Markov decision processes. These and other existing formalizations
have been used to verify several case studies, but they are scattered
and not easily accessible for our purposes.
\paragraph{Relational Verification}
Several papers have explored relational program logics or
relational type systems for the verification of different
relational properties.
\citet{Aguirre:2017:RLH:3136534.3110265} propose UHOL/RHOL for the
unary and relational verification of higher-order, non-probabilistic,
terminating programs. UHOL and RHOL are based on a combination of logics for expressing (unary and relational)
postconditions, and syntax-directed proof rules for establishing
them. Since only terminating non-probabilistic programs are
considered, the logic and the proof rules can be shown
sound in set-theory. Our broad approach to setting up \THESYSTEM\ is
directly inspired from this work, but we work with probabilistic
programs and, therefore, introduce a new monadic type for
general/continuous measures along with constructs for conditioning. As
a result, we have to interpret the logic and proof system in $\QBS$,
not set theory, and had to re-work the entire soundness proof from
scratch.

The framework $U^C$/$R^C$~\citep{Radicek:2017:MRR:3177123.3158124} is an extention of \citet{Aguirre:2017:RLH:3136534.3110265} for reasoning about costs of non-probabilistic, terminating programs. This work introduces a monad, but this monad merely pairs a computation with its cost. The entire development still has a simple model in set theory.
%
%
GUHOL and GRHOL~\citep{DBLP:conf/esop/0001BBBG018} are extensions of
\citet{Aguirre:2017:RLH:3136534.3110265} to reason about unary and
relational properties of Markov chains. These systems include a monad
for distributions, but the development is limited to discrete
distributions, and relational probabilistic reasoning is limited to
coupling. The framework has an interpretation in the topos of trees
(which is an extension of set theory with step counting) extended with
a Giry monad. Importantly, pre- and post-conditions are
\emph{non}-probabilistic and are interpreted first over deterministic
values, and then over distributions over values by lifting. Moreover,
in~\citep{DBLP:conf/esop/0001BBBG018}, the proof rules only allow
analysis of properties of programs via coupling arguments.  This
differs considerably from what we present here. Indeed, in
\THESYSTEM\ we can reason about (monadic) probabilistic expressions
directly in assertions. For example, we can directly express and prove
convergence properties of the expectation of an expression, which is
impossible in the work
of~\citet{DBLP:conf/esop/0001BBBG018}. In
addition,~\citep{DBLP:conf/esop/0001BBBG018} support only discrete
distributions while we handle continuous distributions. As we have
shown, the probabilistic coupling of GRHOL can be embedded in
\THESYSTEM\ by $\top\top$-lifting, but \THESYSTEM\ does not cover all
features of GRHOL.  The reason is the difference in the goals of
verification: \THESYSTEM\ verifies the static behavior of
probabilistic programs such as expected values and equality between
probability measures. In contrast, GRHOL verifies behaviors of entire
Markov chains.

\citet{BartheFGAGHS16} study a relational type system
PrivInfer for Bayesian inference on a functional
programming language.  Our framework \THESYSTEM\ is more flexible
since it supports continuous probability distributions while
PrivInfer supports only discrete probabilities.  In the
future, we expect to internalize the continuous variant
of PrivInfer's $(f,\delta)$-lifting proposed in \THESYSTEM, in a manner
similar to $\top\top$-lifting.
\section{Conclusion}
In this paper we have introduced a framework \THESYSTEM\ supporting
the (unary and relational) verification of probabilistic programs including constructions for
higher-order computations, continuous distributions and
conditioning. \THESYSTEM\ combines axiomatizations of basic
probabilistic constructions with rules of three different logics in
order to ease the verification of examples from probabilistic
inference, statistics, and machine learning. The soundness of our
approach relies on quasi-Borel spaces, a recently proposed semantic
framework for probabilistic programs. All these components make
\THESYSTEM\ a useful framework for the practical verification of higher-order
probabilistic programs.
\begin{acks}       
This material is based on work supported by the
\grantsponsor{GS100000001}{National Science
Foundation}{http://dx.doi.org/10.13039/100000001} under CCF Grant
No.~\grantnum{GS100000001}{1637532} and under CNS Grant
No.~\grantnum{GS100000001}{1565365}.
\end{acks}
\bibliography{reference}
 \newpage
 \appendix
\section{Proofs in \THESYSTEM}
\subsection{More Detailed Proof on Monte Carlo Approximation}
We first show the following judgment and then we apply Chebyshev's inequality.
\begin{align*}
&d\colon M[\tau],~ h\colon \tau \to \mathtt{real}\vdash_{\mathrm\UPL}\\
&\quad \mathtt{letrec}~f(i \colon \INT)\\
&\qquad = \mathtt{if}~(i \leq 0)~\mathtt{then}~{\mathtt{return}(0)}\\
&\qquad\quad \mathtt{else}~\mathtt{mlet}~m=f(i-1)~\mathtt{in}~\mathtt{mlet}~x=d~\mathtt{in}~\mathtt{return}(\frac{1}{i}(h(x)+m\ast(i-1)))\\
&\quad\colon \INT \to M[\mathtt{real}] \mid\\
&\quad\forall n \colon \INT.~\forall \sigma\colon \mathtt{real}.~\forall \mu \colon \mathtt{real}.\\
&\qquad(\mathbb{E}_{x \sim d}[1] = 1) \land (n > 0) \land (\sigma^2 = \mathrm{Var}_{x \sim d}[h(x)]) \land (\mu = \mathbb{E}_{x \sim d}[h(x)])\\
&\quad\qquad \implies 
(\mathbb{E}_{y \sim \mathbf{r}n}[1] = 1)
\land 
(\mathrm{Var}_{y \sim \mathbf{r}n}[y] = \frac{\sigma^2}{n})
\end{align*}
We split the program and postcondition as follows
\begin{align*}
\mathtt{MonteCarlo} &\equiv \mathtt{letrec}~f(i) = e_{\mathrm{body}}\\
e_{\mathrm{body}} &\equiv \mathtt{if}~(i \leq 0)~\mathtt{then} ~e_{\mathrm{body}0}~\mathtt{else}~e_{\mathrm{body}1}\\
e_{\mathrm{body}0} &\equiv \mathtt{return}(0)\\
e_{\mathrm{body}1} &\equiv \mathtt{mlet}~m = f(i-1)~\mathtt{in}~(\mathtt{mlet}~x=d~\mathtt{in}~\mathtt{return}(\frac{1}{i}(h(x)+m\ast(i-1)))) \\
\\
\phi   &\equiv\forall \sigma\colon \mathtt{real}.~\forall \mu \colon \mathtt{real}.(\phi_0 \implies \phi_1)\\
\phi_0 &\equiv(\mathbb{E}_{x \sim d}[1] = 1) \land (i > 0) \land (\sigma^2 = \mathrm{Var}_{x \sim d}[h(x)]) \land (\mu = \mathbb{E}_{x \sim d}[h(x)])\\
\phi_1 &\equiv(\mathbb{E}_{y \sim \mathbf{r}i}[1] = 1) \land (\mathrm{Var}_{y \sim \mathbf{r}i}[y] = \frac{\sigma^2}{i}).
\end{align*}
What we want to show is:
\begin{equation}
\label{eq:prooftree:WLLN:0}
d\colon M[\tau],~ h\colon \tau \to \mathtt{real} \vdash_{\mathrm\UPL} \mathtt{MonteCarlo} \colon \INT\to M[\mathtt{real}] \mid \forall n \colon \INT.~\phi[n/i].
\end{equation}
To show (\ref{eq:prooftree:WLLN:0}) by applying [u-RETREC] rule, which we have the following premise:
\begin{equation}
\label{eq:prooftree:WLLN:1}
\begin{aligned}
&d\colon M[\tau],~ h\colon \tau \to \mathtt{real},f\colon \INT \to M[\mathtt{real}], i \colon \INT \mid \phi_{\mathrm{ind.hyp}} \vdash_{\mathrm\UPL}\\
&\quad \mathtt{if}~(i \leq 0)~\mathtt{then}~e_{\mathrm{body}0}~\mathtt{else}~e_{\mathrm{body}1}  \colon M[\mathtt{real}]\mid\\
&\quad \forall \sigma\colon \mathtt{real}.~\forall \mu \colon \mathtt{real}.(\phi_0 \implies \phi_1).
\end{aligned}
\end{equation}
Here, we write 
\[
\phi_{\mathrm{ind.hyp}} = (\forall l \colon \INT.~l < i \implies (\sigma\colon \mathtt{real}.~\forall \mu \colon \mathtt{real}.(\phi_0 \implies \phi_1))[l/i,~f(l)/\mathbf{r}])
\]

To show (\ref{eq:prooftree:WLLN:1}) by applying [u-CASE], and we need to show
\begin{align}
\label{eq:prooftree:WLLN:20}
&
\begin{aligned}
&d\colon M[\tau],~ h\colon \tau \to \mathtt{real},f\colon \INT \to M[\mathtt{real}], i \colon \INT\mid (i\leq 0) \land \phi_{\mathrm{ind.hyp}}\vdash_{\mathrm\UPL}\\
&\quad \mathtt{return}(0) \colon M[\mathtt{real}] \mid \forall \sigma\colon \mathtt{real}.~\forall \mu \colon \mathtt{real}.(\phi_0 \implies \phi_1)\\
\end{aligned}
\\
\notag\\
\label{eq:prooftree:WLLN:21}
&
\begin{aligned}
&d\colon M[\tau],~ h\colon \tau \to \mathtt{real},f\colon \INT \to M[\mathtt{real}], i \colon \INT\mid (i > 0) \land \phi_{\mathrm{ind.hyp}}\vdash_{\mathrm\UPL}\\
&
\quad \mathtt{mlet}~m = f(i-1)~\mathtt{in}~(\mathtt{mlet}~x=d~\mathtt{in}~\mathtt{return}(\frac{1}{i}(h(x)+m\ast(i-1)))) \colon M[\mathtt{real}]
\mid\\
&\quad \forall \sigma\colon \mathtt{real}.~\forall \mu \colon \mathtt{real}.(\phi_0 \implies \phi_1).
\end{aligned}
\end{align}
The judgment (\ref{eq:prooftree:WLLN:20}) is shown by applying [u-SUB] having the following \PL-premise:
\begin{align*}
&d\colon M[\tau],~ h\colon \tau \to \mathtt{real},f\colon \INT \to \mathtt{list}(M[\mathtt{real}]), i \colon \INT\mid (i\leq 0) \land \phi_{\mathrm{ind.hyp}}\vdash_{\mathrm{\PL}}\\ &\quad \forall \sigma\colon \mathtt{real}.~\forall \mu \colon \mathtt{real}.(\phi_0 \implies \phi_1).
\end{align*}
In fact, this is a tautology since
\[
\phi_0 = (\mathbb{E}_{x \sim d}[1] = 1) \land (i > 0) \land (\sigma^2 = \mathrm{Var}_{x \sim d}[h(x)]) \land (\mu = \mathbb{E}_{x \sim d}[h(x)]).
\]
The premise (\ref{eq:prooftree:WLLN:21}) is proved by applying [u-SUB] rule having the following premise:
\[
\begin{aligned}
&d\colon M[\tau],~ h\colon \tau \to \mathtt{real},f\colon \INT \to M[\mathtt{real}], i \colon \INT \mid\\
&\quad (i > 0)\land \phi_{\mathrm{ind.hyp}} \vdash_{\mathrm{\PL}}\\ 
&\qquad\quad
\forall \sigma\colon \mathtt{real}.~\forall \mu \colon \mathtt{real}.\\
&\qquad\qquad\phi_0 \implies 
\mathbb{E}_{y \sim (\mathtt{mlet}~m=f(i-1) ~\mathtt{in}~ \mathtt{mlet}~x=d~\mathtt{in}~\mathtt{return}(\frac{1}{i}(h(x)+m(i-1))))}[1] = 1\\
&\qquad\qquad
\land \mathrm{Var}_{y \sim (\mathtt{mlet}~m=f(i-1) ~\mathtt{in}~ \mathtt{mlet}~x=d~\mathtt{in}~\mathtt{return}(\frac{1}{i}(h(x)+m(i-1))))}[y] = \sigma^2/i.
\end{aligned}
\]
To show this by applying [$\forall_I$] rule twice and [$\implies_I$] rule, it suffices to show
\[
\begin{aligned}
&d\colon M[\tau],~ h\colon \tau \to \mathtt{real},\sigma\colon \mathtt{real},\mu \colon \mathtt{real},f\colon \INT \to M[\mathtt{real}], i \colon \INT \mid\\
&\quad \phi_0 \land (i > 0) \land \phi_{\mathrm{ind.hyp}} \vdash_{\mathrm{MHOL}}\\ 
&\qquad\qquad \mathbb{E}_{y \sim (\mathtt{mlet}~m=f(i-1) ~\mathtt{in}~ \mathtt{mlet}~x=d~\mathtt{in}~\mathtt{return}(\frac{1}{i}(h(x)+m(i-1))))}[1] = 1\\
&\qquad\qquad
\land \mathrm{Var}_{y \sim (\mathtt{mlet}~m=f(i-1) ~\mathtt{in}~ \mathtt{mlet}~x=d~\mathtt{in}~\mathtt{return}(\frac{1}{i}(h(x)+m(i-1))))}[y] = \sigma^2/i.
\end{aligned}
\]
This is proved by applying (\ref{eq:MHOL:expected:linearity}),(\ref{eq:MHOL:expected:variable_transformation}),(\ref{eq:MHOL:commutativity}), and elementary calculations.

For example, to show $\mathrm{Var}_{y \sim (\mathtt{mlet}~m=f(i-1)~\mathtt{in}~ \mathtt{mlet}~x=d~\mathtt{in}~\mathtt{return}(\frac{1}{i}(h(x)+m(i-1))))}[y] = \sigma^2/i$, we calculate by applying [SUBST] rule in \PL\ with equations as follows:
\begin{align*}
\lefteqn{ \mathrm{Var}_{y \sim\mathtt{mlet}~m=f(i-1)~\mathtt{in}~\mathtt{mlet}~x=d~\mathtt{in}~\mathtt{return}(\frac{1}{i}(h(x)+m(i-1)))}[y]}\\
&\qquad \{\text{Equation (\ref{eq:MHOL:commutativity}). We mention the Independence of past average $m$ and new sample $x$.}\}\\
&= \mathrm{Var}_{y \sim (\mathtt{mlet}~w = d \otimes f(i-1)~\mathtt{in}~\mathtt{return}(\frac{1}{i}(h(\pi_1(w))+\pi_2(w)(i-1))))}[y]\\
& \qquad \{\text{Equation (\ref{eq:MHOL:expected:variable_transformation}): variable transformations.}\}\\
&= \mathrm{Var}_{w \sim d \otimes f(i-1)}[\frac{1}{i}(h(\pi_1(w))+\pi_2(w)(i-1)]\\
& \qquad \{\text{Equation (\ref{eq:MHOL:expected:linearity}). We use here the Independence and $\mathbb{E}_{x \sim d}[1] = 1$ and $\mathbb{E}_{m \sim f(i-1)}[1] = 1$.}\}\\
&= \mathrm{Var}_{x \sim d}[\frac{1}{i}h(x)] + \mathrm{Var}_{m \sim f(i-1)}[\frac{i-1}{i} m]\\
& \qquad \{\text{Equation (\ref{eq:MHOL:expected:linearity}). We use again $\mathbb{E}_{x \sim d}[1] = 1$ and $\mathbb{E}_{m \sim r}[1] = 1$.}\}\\
&= \frac{1}{i^2}\mathrm{Var}_{x \sim d}[h(x)] + \frac{(i-1)^2}{i^2}\mathrm{Var}_{m \sim f(i-1)}[w]\\
& \qquad \{\text{Combining assumptions in $\phi_0$ and $\phi_{\mathrm{ind.hyp}}$.}\}\\
&= \frac{1}{i^2} \sigma^2 + \frac{(i-1)^2}{i^2} \frac{\sigma^2}{i-1} \tag{\ddag}\\
& \qquad \{\text{Doing elementary calculations.}\}\\
&= \frac{\sigma^2}{i} \tag{\dag}
\end{align*}
To obtain (\dag), precisely, we need further case analysis with $i > 1$ and $i = 1$. If $i = 1$ then we have (\dag) without (\ddag).  
This can be done by [$\Rightarrow_I$] rule in \PL\ and a basic tautology.

Similarly, $\mathbb{E}_{y \sim (\mathtt{mlet}~m=f(i-1) ~\mathtt{in}~ \mathtt{mlet}~x=d~\mathtt{in}~\mathtt{return}(\frac{1}{i}(h(x)+m(i-1))))}[1] = 1$ is proved by applying [SUBST] rule with (\ref{eq:MHOL:expected:linearity}),(\ref{eq:MHOL:expected:variable_transformation}),(\ref{eq:MHOL:commutativity}) and elementary calculations.
To sum up, we obtain
\begin{align*}
&\{
d\colon M[\tau],~ h\colon \tau \to \mathtt{real},
\} \vdash e \colon \INT \to M[\mathtt{real}] \mid\\
&\quad 
\forall n \colon \INT.~\sigma\colon \mathtt{real}.~\mu\colon \mathtt{real}\\
&\qquad (\mathbb{E}_{x \sim d}[1] = 1) \land (n > 0) \land (\sigma^2 = \mathrm{Var}_{x \sim d}[h(x)])\land (\mu = \mathbb{E}_{x \sim d}[h(x)])\\
&\qquad\quad \implies 
\mathbb{E}_{x\sim\mathbf{r}n}[1] = 1 \land \mathrm{Var}_{y \sim \mathbf{r}n}[y] = \sigma^2/n.
\end{align*}
We also have Chebyshev's inequality (we prove later):
\begin{align*}
&\{
d \colon M[\mathtt{real}],~
\varepsilon \colon \mathtt{real},~
\mu \colon \mathtt{real}
\}\vdash_{\mathrm{MHOL}}
(\mu = \mathbb{E}_{y \sim d}[y]) \land (\mathbb{E}_{x \sim d}[1] = 1) \land  (\varepsilon^2 > 0)\\
&\qquad \implies \Pr_{y \sim d}[ |y - \mu| \geq \varepsilon] \leq \mathrm{Var}_{y\sim d}[y]/\varepsilon^2.
\end{align*}
By combining them we conclude what we desired.

\subsection{Verification Example: Mean Estimation of Gaussian Distributions}
\label{subsection:example:MeanEstimationOfGauss}
So far, we have shown how to use \THESYSTEM\ to reason about
probabilistic programs using observe statements to describe
Bayesian models. We now want to show it useful also to reason
about statistical tasks that are not based on Bayesian update.  

As a first example, we show that we can use \THESYSTEM\ to prove the
correctness of iterative mean estimations for Gaussian distributions. Here
mean estimation is formulated in terms of a list of confidence intervals over
the empirical mean observed over a set of sample. The correctness
guarantees that these are indeed the right confidence intervals if the
data actually come from the distribution. This example shows that we
can use \THESYSTEM\ to reason naturally also about unary iterative
properties, and that we can use it to reason about standard
statistical tools like confidence intervals. 

First, we consider the following implementation $\mathtt{GaussMean}$ of the iterative mean estimation of Gaussian distribution with given variance $s$.
The algorithm $\mathtt{GaussMean}$ receives an integer $i$ indicating the number of iterations and returns a list
of length $i$, containing at each position $j$ the estimation after the first $j$ samplings. On each iteration,
the algorithm samples an $x$ from $d$ (supposed to be a $\mathtt{Gauss}(\mu,s)$ with unknown $\mu$) and updates
the previous estimation $\lrangle{\overline{x},l,u}$ of the empirical mean $\overline{x}$ and confidence interval $[l,u]$. 
\begin{align*}
&\LETREC f(i \colon \INT) = \IF(i \leq 0)\mathtt{then}{[]}\\
&\qquad \ELSE (\CASE f(i-1) \WITH\\
&\qquad\qquad\;\; [] \Rightarrow \MLET x=d \IN \RETURN (\lrangle{\frac{x}{i},\frac{x}{i}-z\sqrt{{s/i}},\frac{x}{i}+z\sqrt{{s/i}}})\\
&\qquad\quad r :: \xi'\Rightarrow 
\MLET m=(\MLET y=r \IN \RETURN (\pi_1(y))) \IN \MLET x=d \IN\\
&\qquad\qquad\RETURN (\lrangle{\frac{1}{i}(x+m\cdot(i-1)),\frac{1}{i}(x+m\cdot(i-1))-z\sqrt{{s/i}},\frac{1}{i}(x+m\cdot(i-1))+z\sqrt{{s/i}}})\\
&\quad\qquad\quad)::(r :: \xi')
\end{align*}
We show that $[l,u]$ forms an actually confidence interval of $\mu$ (i.e. $\Pr[l \leq \mu \leq u] \geq g$) for each step update of $\lrangle{\overline{x},l,u}$.
We will prove the following unary \UPL\ judgment.
\begin{align*}
&
(d = \mathtt{Gauss}(\mu,s)),(s > 0),(\Pr_{w\sim\mathtt{Gauss}(0,1)}[-z \leq w \leq z] \geq a),(z > 0)
\vdash_{\mathrm\UPL}\\
&\mathtt{GaussMean}(n) \colon \LIST(M[\REAL \times \REAL \times \REAL])\mid\\
&\forall{i \colon\INT}.1 \leq i \leq n \implies \Pr_{(m,l,u)\sim \mathbf{r}[i]}[l \leq \mu \leq u] \geq a.
\end{align*}
Here, the role of the assertion $(\Pr_{w\sim\mathtt{Gauss}(0,1)}[-z \leq w \leq z] \geq a)$ is referring a table of Z-score of standard Gaussian distribution.
We mainly use the reproductive property of Gaussian distributions
and conversions of Gaussian distributions through the standard Gaussian distribution $\mathtt{Gauss}(0,1)$.
To prove $\Pr_{(m,l,u)\sim \mathbf{r}[i]}[l \leq \mu \leq u] \geq a$, thanks to the equality (\ref{eq:normal:normal_and_standard_normal}),
it suffices to prove
\[
\mathbf{r}[i] = (\MLET m' = \mathtt{Gauss}(\mu, {s}/{i})~ \mathtt{in}~\mathtt{return}(\lrangle{m',m'-z\sqrt{{s}/{i}},m'+z\sqrt{{s}/{i}}})).
\]
To prove this, we apply [u-APP], [u-LETREC], [u-CASE], [u-CONS], and [u-LISTCASE] rules in \NameOfUnaryLogic\ to $\mathtt{GaussMean}$.
We then have the following main premise:
\begin{equation}
\label{example:Mean:assertion:211}
\begin{aligned}
&(i > 0)\land (f(i-1) = r :: \xi) \land \phi_{\mathrm{ind.hyp}} \vdash\\
&\MLET m=(\MLET y=r \IN \RETURN (\pi_1(y))) \IN \MLET x=d \IN\\
&\RETURN (\lrangle{\frac{1}{i}(x+m(i-1)),\frac{1}{i}(x+m(i-1))-z\sqrt{{s/i}},\frac{1}{i}(x+m(i-1))+z\sqrt{{s/i}}})
\mid\\
&\phi'_0 \implies 
\mathbf{r}= \MLET m' = \mathtt{Gauss}(\mu, {s}/{i})~ \mathtt{in}~\mathtt{return}(\lrangle{m',m'-z\sqrt{{s}/{i}},m'+z\sqrt{{s}/{i}}}))
\end{aligned}
\end{equation}
where $\phi_{\mathrm{ind.hyp}}$ is the induction hypothesis obtained by applying [u-LETREC] rule, and $\phi'_0$ is the assumptions on the sample $d$ and the parameter $z$ in the postcondition of the initial judgment.
We first show $(\MLET y=r \IN \RETURN (\pi_1(y))) = \mathtt{Gauss}(\mu, {s}/{(i-1)})$ from the preconditions and axioms on monadic type $M$.
Then we calculate the result $\mathbf{r}$ by applying the equation on Gaussian distributions (\ref{eq:normal:reproductive_property}).
\subsubsection{More Detailed Proof}

Since discussing intervals are easy
\begin{align*}
\lefteqn{\Pr_{(m,l,u)\sim \mathbf{r}[i]}[l \leq \mu \leq u]}\\
&= \Pr_{(m,l,u) \sim  \mathtt{mlet}~m' = \mathtt{Gauss}(\mu,\frac{1}{i}s)~\mathtt{in}~\mathtt{return}(\lrangle{m',m'-z\sqrt{{s/i}},m'+z\sqrt{{s/i}}})}[l \leq \mu \leq u]\\
&= \Pr_{m \sim \mathtt{Gauss}(\mu,\frac{1}{i}s)}[m-z\sqrt{{s/i}} \leq \mu \leq m+z\sqrt{{s/i}}]\\
&= \Pr_{m \sim \mathtt{Gauss}(\mu,\frac{1}{i}s)}[-z\sqrt{{s/i}} \leq \mu - m \leq z\sqrt{{s/i}}]\\
&= \Pr_{m \sim (\mathtt{mlet}~x = \mathtt{Gauss}(0,1)~\mathtt{in}~\mathtt{return}(x\sqrt{{s/i}}+\mu))}[-z\sqrt{{s/i}} \leq \mu - m \leq z\sqrt{{s/i}}]\\
&= \Pr_{x \sim \mathtt{Gauss}(0,1)}[-z\sqrt{{s/i}} \leq -x\sqrt{{s/i}} \leq z\sqrt{{s/i}}]\\
&= \Pr_{x \sim \mathtt{Gauss}(0,1)}[-z \leq x \leq z] \geq a\\
\end{align*}
it suffices to prove the following judgment in \UPL:\
\begin{align*}
\lefteqn{\vdash_{\mathrm\UPL} \mathtt{GaussMean} \colon \mathtt{list}(M[\mathtt{real}\times\mathtt{real}\times\mathtt{real}]) \mid}\\
&\quad (d = \mathtt{Gauss}(\mu,s)) \land (s > 0) \land (z > 0)\\
&\quad\implies
\forall{i \colon\mathtt{nat}}.~i \leq n \implies \mathbf{r}[i] = \mathtt{mlet}~m = \mathtt{Gauss}(\mu,\frac{1}{i}s)~\mathtt{in}~\mathtt{return}(\lrangle{m,m-z\sqrt{{s/i}},m+z\sqrt{{s/i}}}).
\end{align*}
We separate the expression $\Gamma \vdash e \colon \mathtt{list}(M[\mathtt{real}\times\mathtt{real}\times\mathtt{real}])$ as follows:
\begin{align*}
\mathtt{GaussMean} &\equiv (\mathtt{letrec}~f(i) = e_{\mathrm{body}})(n)\\
e_{\mathrm{body}} &\equiv \mathtt{if}~(i \leq 0)~\mathtt{then} e_{\mathrm{body}0}~\mathtt{else}~e_{\mathrm{body}1}\\
e_{\mathrm{body}0} &\equiv []\\
e_{\mathrm{body}1} &\equiv (\mathtt{case}~f(i-1)~\mathtt{with}~[] \Rightarrow e_{\mathrm{body}10}, r :: \xi'\Rightarrow e_{\mathrm{body}11})::f(i-1)\\
e_{\mathrm{body}10} &\equiv
\mathtt{mlet}~x=d~\mathtt{in}~\mathtt{return}(\lrangle{\frac{x}{i},\frac{x}{i}-z\sqrt{{s/i}},\frac{x}{i}+z\sqrt{{s/i}}})\\
e_{\mathrm{body}11} &\equiv
\mathtt{mlet}~m=(\mathtt{mlet}~y=r~\mathtt{in}~\mathtt{return}(\pi_1(y)))~\mathtt{in}~\mathtt{mlet}~x=d~\mathtt{in}\\
&\qquad\mathtt{return}(\lrangle{\frac{1}{i}(x+m(i-1)),\frac{1}{i}(x+m(i-1))-z\sqrt{{s/i}},\frac{1}{i}(x+m(i-1))+z\sqrt{{s/i}}})
\end{align*}
We introduce the following assertions:
\begin{align*}
\phi'   &\equiv\phi'_0 \implies (\phi'_1\land \phi'_2)\\
\phi'_0 &\equiv(d = \mathtt{Gauss}(\mu,s)) \land (i > 0) \land (s > 0) \land (z > 0)\\
\phi'_1 &\equiv(\left|\mathbf{r}\right| = i)\\
\phi'_2 &\equiv(\forall j \colon \INT. 1 \leq j \leq i \implies \mathbf{r}[j] = \mathtt{mlet}~m = \mathtt{Gauss}(\mu, \frac{s}{j})~ \mathtt{in}~\mathtt{return}(\lrangle{m,m-z\sqrt{\frac{s}{j}},m+z\sqrt{\frac{s}{j}}})).
\end{align*}
The goal is to prove $\vdash_{\mathrm\UPL} \mathtt{GaussMean} \colon \mathtt{list}(M[\mathtt{real}\times\mathtt{real}\times\mathtt{real}]) \mid \phi'[n/i]$.
To show this by applying [u-APP] rule, which have the following premise ($\Gamma$ is a context):
\begin{align*}
\Gamma \vdash_{\mathrm\UPL}&(\mathtt{letrec} f i = e_{\mathrm{body}}) \colon \INT \to \mathtt{list}(M[\mathtt{real}\times\mathtt{real}\times\mathtt{real}]) \mid\\
&\forall i \colon \INT.~(\phi'_0 \implies (\phi'_1\land \phi'_2))[\mathbf{r}i/\mathbf{r}].
\end{align*}
To show this by applying [u-LETREC] rule, which has the premise:
\begin{align*}
&\Gamma,f\colon \INT \to \mathtt{list}(M[\mathtt{real}\times\mathtt{real}\times\mathtt{real}]), i \colon \INT\mid\\
&\qquad \forall l \colon \INT.~l < i \implies (\phi'_0 \implies (\phi'_1\land \phi'_2))[l/i,~f(l)/\mathbf{r}]\vdash_{\mathrm\UPL}\\ 
&\qquad \mathtt{if}~(i \leq 0)~\mathtt{then}~e_{\mathrm{body}0}~\mathtt{else}~e_{\mathrm{body}1}  \colon \mathtt{list}(M[\mathtt{real}\times\mathtt{real}\times\mathtt{real}])\mid \\
&\qquad \phi'_0 \implies (\phi'_1\land \phi'_2).
\end{align*}
To show this by applying [u-CASE], which has the premises:
\begin{align}
\label{example:Mean:assertion:20}
&\begin{aligned}
&\Gamma,f\colon \INT \to \mathtt{list}(M[\mathtt{real}\times\mathtt{real}\times\mathtt{real}]), i \colon \INT\mid\\
&\quad (i \leq 0)\land \forall l \colon \INT.~l < i \implies (\phi'_0 \implies (\phi'_1\land \phi'_2))[l/i,~f(l)/\mathbf{r}]\vdash_{\mathrm\UPL}\\ 
&\quad []  \colon \mathtt{list}(M[\mathtt{real}\times\mathtt{real}\times\mathtt{real}])\mid \\
&\quad \phi'_0 \implies (\phi'_1\land \phi'_2)\\
\end{aligned}\\
\notag\\
\label{example:Mean:assertion:21}
&\begin{aligned}
&\Gamma,f\colon \INT \to \mathtt{list}(M[\mathtt{real}\times\mathtt{real}\times\mathtt{real}]), i \colon \INT\mid\\
&\quad (i > 0)\land \forall l \colon \INT.~l < i \implies (\phi'_0 \implies (\phi'_1\land \phi'_2))[l/i,~f(l)/\mathbf{r}]\vdash_{\mathrm\UPL}\\ 
&\quad (\mathtt{case}~f(i-1)~\mathtt{with}~[] \Rightarrow e_{\mathrm{body}10}, r :: \xi'\Rightarrow e_{\mathrm{body}11})::f(i-1)  \colon \mathtt{list}(M[\mathtt{real}\times\mathtt{real}\times\mathtt{real}])\mid \\
&\quad \phi'_0 \implies (\phi'_1\land \phi'_2)
\end{aligned}
\end{align}
The premise (\ref{example:Mean:assertion:20}) is derivable by applying [NIL] rule in \UPL, which has the following premise:
\begin{align*}
&\Gamma,f\colon \INT \to \mathtt{list}(M[\mathtt{real}\times\mathtt{real}\times\mathtt{real}]), i \colon \INT\mid\\
&\quad (i \leq 0)\land \forall l \colon \INT.~l < i \implies (\phi'_0 \implies (\phi'_1\land \phi'_2))[l/i,~f(l)/\mathbf{r}]\vdash_{\mathrm{\PL}}\\ 
&\quad  \top \implies (\phi'_0 \implies (\phi'_1\land \phi'_2))[[]/\mathbf{r}].
\end{align*}
This is derivable in \PL because the assertion $(i \leq 0) \implies \neg \phi'_0$ is obviously a tautology.

To show the premise (\ref{example:Mean:assertion:21}) by applying [u-CONS] rule, which have the following premises:
\begin{align}
\label{example:Mean:assertion:210}
&\begin{aligned}
&\Gamma,f\colon \INT \to \mathtt{list}(M[\mathtt{real}\times\mathtt{real}\times\mathtt{real}]), i \colon \INT\mid\\
&\quad (i > 0)\land \forall l \colon \INT.~l < i \implies (\phi'_0 \implies (\phi'_1\land \phi'_2))[l/i,~f(l)/\mathbf{r}]\vdash_{\mathrm{\UPL}}\\ 
&\quad (\mathtt{case}~f(i-1)~\mathtt{with}~[] \Rightarrow e_{\mathrm{body}10}, r :: \xi'\Rightarrow e_{\mathrm{body}11}) \colon M[\mathtt{real}\times\mathtt{real}\times\mathtt{real}]\mid \\
&\quad  \phi'_0 \implies \mathbf{r} = \mathtt{mlet}~m = \mathtt{Gauss}(\mu, {s/i})~ \mathtt{in}~\mathtt{return}(\lrangle{m,m-z\sqrt{{s/i}},m+z\sqrt{{s/i}}})\\
\end{aligned}\\
\notag\\
\label{example:Mean:assertion:211}
&\begin{aligned}
&\Gamma,f\colon \INT \to \mathtt{list}(M[\mathtt{real}\times\mathtt{real}\times\mathtt{real}]), i \colon \INT\mid\\
&\quad (i > 0)\land \forall l \colon \INT.~l < i \implies (\phi'_0 \implies (\phi'_1\land \phi'_2))[l/i,~f(l)/\mathbf{r}]\vdash_{\mathrm{\UPL}}\\ 
&\quad f(i-1) \colon \mathtt{list}(M[\mathtt{real}\times\mathtt{real}\times\mathtt{real}])\mid \\
&\quad (\phi'_0 \implies (\phi'_1\land \phi'_2))[(i-1)/i]
\end{aligned}
\end{align}
Here, the judgment (\ref{example:Mean:assertion:211}) is easily proved by applying [u-SUB], and [AX], [$\forall_E$], [$\Rightarrow_E$] rules in \PL.
Intuitively we instantiate the assertion $(\phi'_0 \implies (\phi'_1\land \phi'_2))[l/i,~f(l)/\mathbf{r}]$ in the precondition by $l = i-1$, and then apply [u-SUB] rule.

By definition of the length $|-|$ and reference of components $(-)[i]$ of lists
(we need to introduce equations for length of lists {\color{red} $|\xi|+1 = |r::\xi|$} and {\color{red} $(r::\xi)[|r::\xi|] = r$}) and definition of assertions themselves, we have the following assertion in \PL.
\begin{align*}
&\Gamma,f\colon \INT \to \mathtt{list}(M[\mathtt{real}\times\mathtt{real}\times\mathtt{real}]), i \colon \INT\mid\\
&\quad (i > 0)\land \forall l \colon \INT.~l < i \implies (\phi'_0 \implies (\phi'_1\land \phi'_2))[l/i,~f(l)/\mathbf{r}]\vdash_{\mathrm{\PL}}\\
&\qquad\forall{r\colon M[\mathtt{real}\times\mathtt{real}\times\mathtt{real}]}.~
\forall{\xi \colon \mathtt{list}(M[\mathtt{real}\times\mathtt{real}\times\mathtt{real}])}.~\\
&\quad\qquad
(\phi'_0 \implies \mathbf{r} = \mathtt{mlet}~m = \mathtt{Gauss}(\mu, {s/i})~ \mathtt{in}~\mathtt{return}(\lrangle{m,m-z\sqrt{{s/i}},m+z\sqrt{{s/i}}}))[r/\mathbf{r}]\\
&\quad\qquad \implies (\phi'_0 \implies (\phi'_1\land \phi'_2))[(i-1)/i,~\xi/\mathbf{r}] \implies (\phi'_0 \implies (\phi'_1\land \phi'_2))[r::\xi/\mathbf{r}]
\end{align*}

To show the premise (\ref{example:Mean:assertion:210}) by applying [u-LISTCASE] rule, we need to derive
\begin{align}
\label{example:Mean:assertion:2100}
&
\begin{aligned}
&\Gamma,f\colon \INT \to \mathtt{list}(M[\mathtt{real}\times\mathtt{real}\times\mathtt{real}]), i \colon \INT\vdash\\
&\qquad f(i-1)\colon \mathtt{list}M[\mathtt{real}\times\mathtt{real}\times\mathtt{real}]\\
\end{aligned}
\\
\notag\\
\label{example:Mean:assertion:2101}
&
\begin{aligned}
&\Gamma,f\colon \INT \to \mathtt{list}(M[\mathtt{real}\times\mathtt{real}\times\mathtt{real}]), i \colon \INT\mid\\
&\qquad (i > 0)\land  (f(i-1) = [])\land \forall l \colon \INT.~l < i \implies (\phi'_0 \implies (\phi'_1\land \phi'_2))[l/i,~f(l)/\mathbf{r}]\vdash_{\mathrm{\UPL}}\\
&\qquad \mathtt{mlet}~x=d~\mathtt{in}~\mathtt{return}(\lrangle{\frac{x}{i},\frac{x}{i}-z\sqrt{{s/i}},\frac{x}{i}+z\sqrt{{s/i}}}) \colon M[\mathtt{real}\times\mathtt{real}\times\mathtt{real}]\mid \\
&\qquad \phi'_0 \implies \mathbf{r} = \mathtt{mlet}~m = \mathtt{Gauss}(\mu, {s/i})~ \mathtt{in}~\mathtt{return}(\lrangle{m,m-z\sqrt{{s/i}},m+z\sqrt{{s/i}}})\\
\end{aligned}\\
\notag\\
\label{example:Mean:assertion:2102}
&
\begin{aligned}
&\Gamma,f\colon \INT \to \mathtt{list}(M[\mathtt{real}\times\mathtt{real}\times\mathtt{real}]), i \colon \INT,\\
&\qquad
r\colon M[\mathtt{real}\times\mathtt{real}\times\mathtt{real}],~\xi \colon \mathtt{list}(M[\mathtt{real}\times\mathtt{real}\times\mathtt{real}]) 
\mid\\
&\qquad (i > 0)\land (f(i-1) = r::\xi)\land  \forall l \colon \INT.~l < i \implies (\phi'_0 \implies (\phi'_1\land \phi'_2))[l/i,~f(l)/\mathbf{r}]\vdash_{\mathrm{\UPL}}\\
&\qquad \mathtt{mlet}~m=(\mathtt{mlet}~y=r~\mathtt{in}~\mathtt{return}(\pi_1(y)))~\mathtt{in}~\mathtt{mlet}~x=d~\mathtt{in}\\
&\qquad\mathtt{return}(\lrangle{\frac{1}{i}(x+m(i-1)),\frac{1}{i}(x+m(i-1))-z\sqrt{{s/i}},\frac{1}{i}(x+m(i-1))+z\sqrt{{s/i}}}) \colon M[\mathtt{real}\times\mathtt{real}\times\mathtt{real}]\mid \\
&\qquad \phi'_0 \implies \mathbf{r} = \mathtt{mlet}~m = \mathtt{Gauss}(\mu, {s/i})~ \mathtt{in}~\mathtt{return}(\lrangle{m,m-z\sqrt{{s/i}},m+z\sqrt{{s/i}}})
\end{aligned}
\end{align}
The typing judgment (\ref{example:Mean:assertion:2100}) is obvious.
For the premise  (\ref{example:Mean:assertion:2101}), we first need to show $i = 1$ from $i > 0$ and $0 = |[]| = |f(i - 1)| = i-1$.
Technically we show by applying [u-SUB] rule,
\begin{align*}
&\Gamma,f\colon \INT \to \mathtt{list}(M[\mathtt{real}\times\mathtt{real}\times\mathtt{real}]), i \colon \INT\mid\\
&\quad (i > 0)\land  (f(i-1) = [])\land \forall l \colon \INT.~l < i \implies (\phi'_0 \implies (\phi'_1\land \phi'_2))[l/i,~f(l)/\mathbf{r}]\vdash_{\mathrm{\PL}}\\
&\qquad (i = 1)\land (\phi'_0 \implies\mathtt{mlet}~x = d~\mathtt{in}~\mathtt{return}(\lrangle{\frac{x}{i},\frac{x}{i}-z\sqrt{{s/i}},\frac{x}{i}+z\sqrt{{s/i}}})\\
&\qquad\qquad\quad= \mathtt{mlet}~m = \mathtt{Gauss}(\mu, {s/i})~ \mathtt{in}~\mathtt{return}(\lrangle{m,m-z\sqrt{{s/i}},m+z\sqrt{{s/i}}}))
\end{align*}

For the premise (\ref{example:Mean:assertion:2102}), since $|f(i-1)| = i-1 > 0$, we must have $i > 1$ and $f(i-1)[i-1] = r$.
Hence the following assertion is derivable:
\begin{align*}
&\Gamma,f\colon \INT \to \mathtt{list}(M[\mathtt{real}\times\mathtt{real}\times\mathtt{real}]), i \colon \INT,\\
&\quad
r\colon M[\mathtt{real}\times\mathtt{real}\times\mathtt{real}],~\xi \colon \mathtt{list}(M[\mathtt{real}\times\mathtt{real}\times\mathtt{real}]) 
\mid\\
&\qquad (i > 0)\land (f(i-1) = r::\xi)\land  \forall l \colon \INT.~l < i \implies (\phi'_0 \implies (\phi'_1\land \phi'_2))[l/i,~f(l)/\mathbf{r}]\vdash_{\mathrm{\PL}}\\
&\quad\qquad r = \mathtt{mlet}~\hat{m} = \mathtt{Gauss}(\mu, \frac{s}{i-1})~ \mathtt{in}~\mathtt{return}(\lrangle{\hat{m},\hat{m}-z\sqrt{\frac{s}{i-1}},\hat{m}+z\sqrt{\frac{s}{i-1}}}).
\end{align*}
By using this, and monadic laws, {\color{red} laws of projections}, and assumption on $d$, we can do the following reduction in the assertion.
\begin{align*}
&\mathtt{mlet}~m=(\mathtt{mlet}~y=r~\mathtt{in}~~\mathtt{return}(\pi_1(y)))~\mathtt{in}~\mathtt{mlet}~x=d~\mathtt{in}\\
&\qquad\mathtt{return}(\lrangle{\frac{1}{i}(x+m(i-1)),\frac{1}{i}(x+m(i-1))-z\sqrt{{s/i}},\frac{1}{i}(x+m(i-1))+z\sqrt{{s/i}}})\\
&\{\text{Substituting the above } r \text{ and }  d = \mathtt{Gauss}(\mu, s) \text{ and applying monadic and projection laws.}\}\\
&=\mathtt{mlet}~m=\mathtt{Gauss}(\mu, \frac{s}{i-1})~\mathtt{in}~\mathtt{mlet}~x=\mathtt{Gauss}(\mu, s)~\mathtt{in}\\
&\qquad\mathtt{return}(\lrangle{\frac{1}{i}(x+m(i-1)),\frac{1}{i}(x+m(i-1))-z\sqrt{{s/i}},\frac{1}{i}(x+m(i-1))+z\sqrt{{s/i}}})\\
&\{\text{Applying monadic laws.}\}\\
&=\mathtt{mlet}~\hat{m}=(\mathtt{mlet}~m=\mathtt{Gauss}(\mu, \frac{s}{i-1})~\mathtt{in}~\mathtt{mlet}~x=\mathtt{Gauss}(\mu, s)~\mathtt{in}~\mathtt{return}(\frac{1}{i}(x+m(i-1))))\\
&\qquad~\mathtt{in}~(\lrangle{\hat{m},\hat{m}-z\sqrt{{s/i}},\hat{m}+z\sqrt{{s/i}}})\\
&\{\text{Applying the reproducing property of }\mathtt{Gauss}\}\\
&=\mathtt{mlet}~\hat{m}=\mathtt{Gauss}(\mu, \frac{s}{i-1}\cdot\frac{(i-1)^2}{i^2} + s\cdot\frac{1}{i^2})~\mathtt{in}~\mathtt{return}(\lrangle{\hat{m},\hat{m}-z\sqrt{{s/i}},\hat{m}+z\sqrt{{s/i}}})\\
&\{\text{{\color{red} Just calculations.}}\}\\
&=\mathtt{mlet}~\hat{m}=\mathtt{Gauss}(\mu, {s/i})~\mathtt{in}~\mathtt{return}(\lrangle{\hat{m},\hat{m}-z\sqrt{{s/i}},\hat{m}+z\sqrt{{s/i}}})\\
&\{{\color{red} \alpha\text{-conversion}}\}\\
&=\mathtt{mlet}~m=\mathtt{Gauss}(\mu, {s/i})~\mathtt{in}~\mathtt{return}(\lrangle{m,m-z\sqrt{{s/i}},m+z\sqrt{{s/i}}})
\end{align*}
At all, we obtain the following assertion in \PL: 
\begin{align*}
&\Gamma,f\colon \INT \to \mathtt{list}(M[\mathtt{real}\times\mathtt{real}\times\mathtt{real}]), i \colon \INT,\\
&\quad
r\colon M[\mathtt{real}\times\mathtt{real}\times\mathtt{real}],~\xi \colon \mathtt{list}(M[\mathtt{real}\times\mathtt{real}\times\mathtt{real}]) 
\mid\\
&\qquad (i > 0)\land (f(i-1) = r::\xi)\land  \forall l \colon \INT.~l < i \implies (\phi'_0 \implies (\phi'_1\land \phi'_2))[l/i,~f(l)/\mathbf{r}]\vdash_{\mathrm{\PL}}\\
&\qquad\quad\mathtt{mlet}~m=(\mathtt{mlet}~y=r~\mathtt{in}~~\mathtt{return}(\pi_1(y)))~\mathtt{in}~\mathtt{mlet}~x=d~\mathtt{in}\\
&\qquad\qquad\mathtt{return}(\lrangle{\frac{1}{i}(x+m(i-1)),\frac{1}{i}(x+m(i-1))-z\sqrt{{s/i}},\frac{1}{i}(x+m(i-1))+z\sqrt{{s/i}}})\\
&\qquad=
\mathtt{mlet}~m=\mathtt{Gauss}(\mu, {s/i})~\mathtt{in}~\mathtt{return}(\lrangle{m,m-z\sqrt{{s/i}},m+z\sqrt{{s/i}}}).
\end{align*}
Using this, by applying [u-SUB] rule, we complete the proof.

\subsection{Markov Inequality}

From the axioms (\ref{ineq:MHOL:expected_monotonicity}), and (\ref{eq:MHOL:expected:linearity}) in \PL, we have the actual monotonicity of expected values:
\begin{align}
(\forall{x \colon \tau}.~e_2-e_1 \geq 0) &\implies \mathbb{E}_{x \sim e}[e_2 - e_1] \geq 0, \notag\\
\label{ineq:MHOL:expected_monotonicity2}
(\forall{x \colon \tau}.~e_1 \leq e_2) &\implies \mathbb{E}_{x \sim e}[e_1] \leq \mathbb{E}_{x \sim e}[e_2].
\end{align}
The statement of Markov's inequality is:
\begin{align*}
\{
d \colon M[\mathtt{real}],~
a \colon \mathtt{real}
\}\vdash_{\mathrm{\PL}}
a > 0 \implies \Pr_{x \sim d}[ |x| \geq a] \leq \mathbb{E}_{x \sim d}[|x|]/a.
\end{align*}
For any $a > 0$, we have $|x| \geq a \cdot (\mathtt{if}~|x|\geq a~\mathtt{then}~1~\mathtt{else}~0)$.
Hence the monotonicty and linearity of expected values: we calculate in \PL:
\begin{align*}
\mathbb{E}_{x \sim d}[|x|]
& \geq \mathbb{E}_{x \sim d}[a \cdot (\mathtt{if}~|x|\geq a~\mathtt{then}~1~\mathtt{else}~0)]\\
&= a \mathbb{E}_{x \sim d}[\mathtt{if}~|x|\geq a~\mathtt{then}~1~\mathtt{else}~0]\\
&= a \Pr_{x \sim d}[|x|\geq a].
\end{align*}
To show the first inequality, it suffices to show
\[
\{a \colon \mathtt{real},~x \colon \mathtt{real}\}\vdash_{\mathrm{\PL}}
|x| \geq a \cdot (\mathtt{if}~|x|\geq a~\mathtt{then}~1~\mathtt{else}~0)
\]
To prove this we show by analyzing if-else expression inside \PL:
\begin{align*}
\{d \colon M[\mathtt{real}],~a \colon \mathtt{real}\} \mid |x|\geq a \vdash_{\mathrm{MHOL}} |x| \geq (a \cdot 1)\\
\{d \colon M[\mathtt{real}],~a \colon \mathtt{real}\} \mid |x| < a \vdash_{\mathrm{MHOL}} |x| \geq (a \cdot 0).
\end{align*}

\subsection{Chebyshev Inequality}
By applying $a = b^2$, $d = (\mathtt{mlet}~x=d'~\mathtt{in}~\mathtt{return}~(x-\mu)^2)$, (\ref{eq:MHOL:expected:variable_transformation}), and $\alpha$-conversion to Markov's inequality we have 
\begin{align*}
&\{
d \colon M[\mathtt{real}],~
b \colon \mathtt{real},~
\mu \colon \mathtt{real}
\}\vdash_{\mathrm{\PL}}\\
&\quad b^2 > 0 \implies \Pr_{x \sim d}[ |x - \mu| \geq b] \leq \mathbb{E}_{x \sim d}[|x - \mu|^2]/b^2.
\end{align*}
Hence,
\begin{align*}
&\{
d \colon M[\mathtt{real}],~
b \colon \mathtt{real},~
\mu \colon \mathtt{real}
\}\vdash_{\mathrm{\PL}}
\mu = \mathbb{E}_{x \sim d}[x] \land 1 = \mathbb{E}_{x \sim d}[1] \land b^2 > 0\\
&\qquad\implies \Pr_{x \sim d}[ |x - \mu| \geq b] \leq \mathbb{E}_{x \sim d}[|x - \mu|^2]/b^2.
\end{align*}
We also have:
\begin{align*}
&\{
d \colon M[\mathtt{real}],~
b \colon \mathtt{real},~
\mu \colon \mathtt{real}
\}\vdash_{\mathrm{\PL}}
\mu = \mathbb{E}_{x \sim d}[x] \land 1 = \mathbb{E}_{x \sim d}[1] \land b^2 > 0\\
&\qquad \implies \mathbb{E}_{x \sim d}[|x - \mu|^2\geq b] = \mathrm{Var}_{x\sim d}[x].
\end{align*}
Combining the previous two derivations, we conclude the Chebyshev's inequality:
\begin{align*}
&\{
d \colon M[\mathtt{real}],~
b \colon \mathtt{real},~
\mu \colon \mathtt{real}
\}\vdash_{\mathrm{\PL}}
\mu = \mathbb{E}_{x \sim d}[x] \land 1 = \mathbb{E}_{x \sim d}[1] \land b^2 > 0\\
&\qquad \implies \Pr_{x \sim d}[ |x - \mu| \geq b] \leq \mathrm{Var}_{x\sim d}[x]/b^2.
\end{align*}
\subsection{Omitted Calculations in the Example of Importance Sampling}
The expressions $\mathtt{SumLoop2}$ and $\mathtt{Naive}$ are introduced in 
the verification example of importance sampling is defined by
\begin{align*}
\mathtt{SumLoop2}
\equiv
&\LETREC f(i\colon \INT)
= \lambda{g \colon \tau \to \REAL}.\lambda{h \colon \tau \to \REAL}.\lambda{h_2 \colon \tau \to \REAL}.\\
&\IF (i \leq 0) \THEN \RETURN \lrangle{0,0} \ELSE \MLET x = d \IN \MLET m = f(i-1)(g)(h)(h_2) \IN\\
&\RETURN \lrangle{\frac{1}{i}(\pi_1(m) + (i-1)\ast h(x) \ast g(x)),\frac{1}{i}(\pi_2(m) + (i-1) \ast h_2(x) \ast g(x))}\\
\mathtt{Naive}\equiv&\LETREC f(i\colon \INT)
= \lambda{g \colon \tau \to \REAL}.\lambda{h \colon \tau \to \REAL}.\\
&\IF (i \leq 0) \THEN (\RETURN 0) \ELSE \MLET x = d \IN \MLET m = f(i-1)(g)(h) \IN\\
&\RETURN \frac{1}{i}(\pi_1(m) + (i-1)\ast h(x) \ast g(x))
\end{align*}
We have the following structural equalities:
\begin{align*}
&\vdash_{\mathrm{\NameOfRelationalLogic}}
\mathtt{SumLoop2} \sim \mathtt{Naive} \mid
\MLET z = \mathbf{r}_1\!(k)\!(g)\!(h)\!(h_2) \IN \RETURN (\pi_1(z)) = \mathbf{r}_2\!(k)\!(g)\!(h)\!\\
&\vdash_{\mathrm{\NameOfRelationalLogic}}
\mathtt{SumLoop2} \sim \mathtt{Naive} \mid
\MLET z = \mathbf{r}_1(k)\!(g)\!(h)\!(h_2) \IN \RETURN (\pi_2(z)) = \mathbf{r}_2\!(k)\!(g)\!(h_2)\!\\
&C > 0\vdash_{\mathrm{\NameOfRelationalLogic}} \mathtt{SumLoop2} \sim \mathtt{SumLoop2} \mid
\MLET z = \mathbf{r}_1\!(k)\!(g)\!(h)\!(1) \IN \RETURN (\pi_1(z)/\pi_2(z)) \notag\\
&\qquad\qquad\qquad\qquad\qquad = \MLET z = \mathbf{r}_1(k)(g/C)(h)(1) \IN \RETURN (\pi_1(z)/\pi_2(z))\\
&\vdash_{\mathrm{\NameOfRelationalLogic}} \mathtt{SumLoop2} \sim \mathtt{SumLoop}\mid \mathbf{r}_1(k)(g)(h)(1) = \mathbf{r}_2(k)(g)(h)
\end{align*}
We will see the most complicated calculation in the verification example of importance samplings.
We set $h_2 = \lambda x \colon \tau.(\IF g(x) \leq k \ast \exp(-t/2) \THEN 1 \ELSE 0) \ast h(x)$.
We first compute:
\begin{align*}
\lefteqn{\mathbb{E}_{w \sim (\mathtt{mlet}~z = \mathtt{SumLoop2}
(k)(g)(h)(h_2)~\mathtt{in}~\mathtt{return}(\pi_1(z)))}[|w-\mu|]}\\
&\qquad \{\text{Variable transformation in expectation values}\}\\
&=\mathbb{E}_{z \sim \mathtt{SumLoop2}(k,g,h,h_2)}[|\pi_1(z)-\mu|]\\
&\qquad \{\text{Triangle inequality on absolute values and monotonicity of expectations}\}\\
&\leq\mathbb{E}_{z \sim \mathtt{SumLoop2}(k)(g)(h)(h_2)}[|\pi_1(z)-\pi_2(z)|+|\pi_2(z)-\mu'|+|\mu-\mu'|]\\
&\qquad \{\text{additivity pf expectations}\}\\
&=\mathbb{E}_{z \sim \mathtt{SumLoop2}(k)(g)(h)(h_2)}[|\pi_1(z)-\pi_2(z)|]+\mathbb{E}_{z \sim \mathtt{SumLoop2}(k)(g)(h)(h_2)}[|\pi_2(z)-\mu'|]\\
&+\mathbb{E}_{z \sim \mathtt{SumLoop2}(k)(g)(h)(h_2)}[|\mu-\mu'|]
\end{align*}
Next, we apply Cauchy-Schwartz inequality to each expected values.
We denote $a \equiv k \ast \exp(-t/2) \geq \exp(L + t/2)$, and $\mu' \equiv \mathbb{E}_{y \sim d'}[h_2(y)]$.
Then we compute:
\begin{align*}
\lefteqn{\mathbb{E}_{z \sim \mathtt{SumLoop2}(k)(g)(h)(h_2)}[|\pi_1(z)-\pi_2(z)|] = \mathbb{E}_{y \sim d'}[|h(y)-h_2(y)|] }\\
&\qquad \{\text{Applying $d' = \mathrm{scale}(d,g)$ and $h_2(x) =\mathtt{if}~g(x)\leq a~\mathtt{then}~h(x)~\mathtt{else}~0$}\}\\
&= \mathbb{E}_{x \sim d}[g(x) \ast |\mathtt{if}~g(x) \leq a ~\mathtt{then}~0~\mathtt{else}~h(x)|]\\
&= \mathbb{E}_{x \sim d}[g(x) \ast (\mathtt{if}~g(x) \leq a ~\mathtt{then}~0~\mathtt{else}~1) \ast |h(x)|]\\
&\qquad \{\text{Applying $d' = \mathrm{scale}(d,g)$ again}\}\\
&= \mathbb{E}_{y \sim d'}[(\mathtt{if}~g(y) \leq a ~\mathtt{then}~0~\mathtt{else}~1) \ast |h(y)|]\\
&\qquad \{\text{Applying Cauchy-Schwartz inequality}\}\\
&\leq \mathtt{sqrt}(\mathbb{E}_{y \sim d'}[(\mathtt{if}~g(y) \leq a ~\mathtt{then}~0~\mathtt{else}~1)^2] \ast \mathbb{E}_{y \sim d'}[h(y)^2])\\
&= \mathtt{sqrt}({\Pr_{y \sim d'}[g(y) > a]}) \ast \mathtt{sqrt}({(\mu^2 + \sigma^2)}) \leq \mathtt{sqrt}({\Pr_{y \sim d'}[\log(g(y)) > L + t/2]}) \ast \mathtt{sqrt}({\mu^2 + \sigma^2})\\
\lefteqn{\mathbb{E}_{z \sim \mathtt{SumLoop2}(k)(g)(h)(h_2)}[|\mu-\mu'|]}\\
&\qquad \{\text{Applying $\mathbb{E}_{z \sim \mathtt{SumLoop2}(k,g,h,h_2)}[1]=1$}\}\\
&= |\mu-\mu'|
= |\mathbb{E}_{y \sim d'}[h(y)]-\mathbb{E}_{y \sim d'}[h_2(y)]|
= |\mathbb{E}_{y \sim d'}[h(y)-h_2(y)]|\\
&\qquad \{\text{Reusing the above calculation}\}\\
&\leq \mathtt{sqrt}({\Pr_{y \sim d'}[\log(g(y)) > L + t/2]}) \ast \mathtt{sqrt}({\mu^2 + \sigma^2})\\
\lefteqn{\mathbb{E}_{z \sim \mathtt{SumLoop2}(k)(g)(h)(h_2)}[|\pi_2(z)-\mu'|]}\\
&\qquad \{\text{Applying Cauchy-Schwartz inequality}\}\\
&\leq \mathtt{sqrt}({\mathbb{E}_{z \sim \mathtt{SumLoop2}(k)(g)(h)(h_2)}[|\pi_2(z)-\mu'|^2]}) = \mathtt{sqrt}({\mathrm{Var}_{z \sim \mathbf{r}(k)(g)(h)(h_2)}[\pi_2(z)]})\\
&=\mathtt{sqrt}({\frac{\mathrm{Var}_{x \sim d}[g(x) \ast h_2(x)]}{k}})\\
&\qquad \{\text{Applying definition of variance, and definition of $h_2$}\}\\
&\leq \mathtt{sqrt}({\frac{\mathbb{E}_{x \sim d}[g(x)^2 \ast h_2(x)^2]}{k}})
\leq \mathtt{sqrt}({\frac{a \ast \mathbb{E}_{x \sim d}[g(x) \ast h(x)^2]}{k}})
= \mathtt{sqrt}(\frac{a \ast \mathbb{E}_{y \sim d'}[h(y)^2]}{k})\\
&= \mathtt{sqrt}({\mu^2 + \sigma^2}) \ast \mathtt{sqrt}({\frac{a}{k}})= \mathtt{sqrt}({\mu^2 + \sigma^2}) \ast \exp(-t/2).
\end{align*}

\subsection{Derivations of Several (in) Equalities.}
Several derivations of equalities are bit complicated, so we show some of them.
\subsubsection{Marginal Law of Product Measures}
Let $\Gamma\vdash e_1 \colon M[\tau_1]$ and $\Gamma\vdash e_2 \colon M[\tau_2]$.
Then the following equalities are derivable in \PL:
\begin{align*}
\lefteqn{
\MLET w = e_1 \otimes e_2 \IN \RETURN\pi_1(w)
}
\\
&= \BIND (\BIND e_1 ~\lambda{w_1}.(\BIND e_2~\lambda{w_2}.\RETURN\lrangle{w_1,w_2}))~\lambda{w}.\RETURN\pi_1(w)
&(\text{Syntactic sugar})
\\
&= \BIND e_1 ~\lambda{w_1}.(\BIND (\BIND e_2~\lambda{w_2}.\RETURN\lrangle{w_1,w_2})~\lambda{w}.\RETURN\pi_1(w))
&\text{(associativity of $\BIND$)}
\\
&= \BIND e_1 ~\lambda{w_1}.
	(\BIND e_2~\lambda{w_2}.(\BIND \RETURN\lrangle{w_1,w_2}~\lambda{w}.\RETURN\pi_1(w)))
&\text{(associativity of $\BIND$)}
\\
&= \BIND e_1 ~\lambda{w_1}.(\BIND e_2~\lambda{w_2}.\RETURN (w_1))
&\text{Monadic law (unit law)}
\\
&= (\mathtt{scale}(e_1,\lambda{w_1}.\mathbb{E}_{y\sim \BIND e_2~\lambda{w_2}.\RETURN (w_1))}[1]))
&\text{(equation \ref{eq:MHOL:scaling4})}
\\
&= (\mathtt{scale}(e_1,\lambda{w_1}.\mathbb{E}_{w_2 \sim e_2}[1]))
&\text{(equation \ref{eq:MHOL:expected:variable_transformation})}
\end{align*}
Similarly we have 
\[
\vdash_{\mathrm\PL}(\MLET w = e_1 \otimes e_2 \IN \RETURN\pi_2(w))
= (\mathtt{scale}(e_2,\lambda{w_2}.\mathbb{E}_{w_1 \sim e_1}[1])).
\]
\subsubsection{Independence for Product Measures}
Let $\Gamma\vdash d_1 \colon M[\tau_1]$, $\Gamma\vdash d_2 \colon M[\tau_2]$, 
$\Gamma\vdash f \colon \tau_1 \to \mathtt{pReal}$ and 
$\Gamma\vdash g \colon \tau_2 \to \mathtt{pReal}$.
Then the following equalities are derivable in \PL:
\begin{align*}
\lefteqn{\mathbb{E}_{w\sim d_1 \otimes d_2}[f(\pi_1(w)) \ast g(\pi_1(w))]}\\
&=\mathbb{E}_{w\sim \SCALE(d_1 \otimes d_2, \lambda w.f(\pi_1(w))}[g(\pi_2(w))]
&\text{(equation \ref{eq:MHOL:expected:scaling})}\\
&= \mathbb{E}_{w\sim \SCALE(d_1,f) \otimes \SCALE(d_2,1)}[g(\pi_2(w))]
&\text{(equation \ref{eq:MHOL:scaling3})}\\
&= \mathbb{E}_{w\sim \SCALE(d_1,f) \otimes d_2}[g(\pi_2(w))]
&\text{(equation \ref{eq:MHOL:scaling1})}\\
&= \mathbb{E}_{y \sim \BIND (\SCALE(d_1,f) \otimes d_2)~\lambda{w}.\RETURN\pi_2(w)}[g(y)]
&\text{(equation \ref{eq:MHOL:expected:variable_transformation})}\\
&= \mathbb{E}_{y \sim \SCALE(d_2,\lambda{\_}.\mathbb{E}_{x\sim\SCALE(d_1,f)}[1]))}[g(y)]
&\text{(Marginal Law)}
\\
&= \mathbb{E}_{y \sim d_2}[\mathbb{E}_{x\sim\SCALE(d_1,f)}[1] \ast g(y)]
&\text{(equation \ref{eq:MHOL:expected:scaling})}\\
&= \mathbb{E}_{x\sim\SCALE(d_1,f)}[1] \ast \mathbb{E}_{y \sim d_2}[g(y)]
&\text{(equation \ref{eq:MHOL:expected:linearity})}\\
&= \mathbb{E}_{x\sim d_1}[f(x)] \ast \mathbb{E}_{y \sim d_2}[g(y)]
&\text{(equation \ref{eq:MHOL:expected:scaling})}
\end{align*}
\subsubsection{Slicing Law on Simple Observations.}
Let $\Gamma\vdash x \colon M[\tau_1]$, $\Gamma\vdash y \colon M[\tau_2]$, 
$\Gamma\vdash f \colon \tau_2 \to \mathtt{pReal}$, and assume $\mathbb{E}_{\_\sim x}[1] = 1$.
Then the following equalities are derivable in \PL:
\begin{align*}
\lefteqn{
\MLET v = (\mathtt{observe}~x \otimes y \AS \lambda{w}.f(\pi_2(w)) \IN \RETURN (\pi_1(v))
}
\\
&=
\MLET v = 
\mathtt{normalize}(\SCALE(x \otimes y, \lambda{w}.f(\pi_2(w)))
\IN \RETURN (\pi_1(v))
&\text{(equation \ref{eq:MHOL:observe})}\\
&=
\MLET v = 
\SCALE(\SCALE(x \otimes y,\lambda{w}.f(\pi_2(w))),\lambda{\_}.1/K)
\IN \RETURN (\pi_1(v))
&\text{(equation \ref{eq:MHOL:normalize1})}
\\
&=
\MLET v = 
\SCALE(x \otimes y, \lambda{w}.f(\pi_2(w))/K)
\IN \RETURN (\pi_1(v))
&\text{(equation \ref{eq:MHOL:scaling1})}
\\
&=
\MLET v = (
(\SCALE(x, \lambda{\_}.1)
\otimes
(\SCALE(y, f/K)
)
\IN \RETURN (\pi_1(v))
&\text{(equation \ref{eq:MHOL:scaling3})}
\\
&=\MLET v = ( x \otimes (\SCALE(y,f/K)))
) \IN \RETURN (\pi_1(v))
&\text{(equation \ref{eq:MHOL:scaling1})}
\\
&=\SCALE( x, \lambda{\_}.\mathbb{E}_{\_ \sim \SCALE(y,f/K)}[1])
&\text{(Marginal Law)}
\\
&=\SCALE(x,\lambda{\_}.\mathbb{E}_{\OBSERVE y \AS f}[1])
&\text{(\ddag)}
\\
&=\SCALE(x,\lambda{\_}.1) = x
&\text{(equation \ref{eq:MHOL:scaling1})}
\end{align*}
Where $K \equiv \mathbb{E}_{\_\sim\SCALE(x \otimes y,\lambda{w}.f(\pi_2(w)))}[1]$.
The equality (\ddag) is derived as follows.
By the independence of product measure, we have $K = \mathbb{E}_{\_ \sim \SCALE(y,f)}[1] \ast \mathbb{E}_{\_\sim x}[1]$.
Thanks to $\mathbb{E}_{\_\sim x}[1] = 1$ we have $K = \mathbb{E}_{\_ \sim \SCALE(y,f)}[1]$.
Hence, we conclude $\SCALE(y,f/K) = \OBSERVE y \AS f$.
\subsection{Gaussian are Conjugate Prior wrt Gaussian Likelihood functions}
We assumed the definition 
\[
\mathtt{Gauss}(x,\sigma^2)=\SCALE(\mathtt{Lebesgue},\mathtt{GPDF}(x,\sigma^2))
\]
From the probability of Gaussian distribution and applying (equations \ref{eq:MHOL:scaling1}, \ref{eq:MHOL:normalize1}, and \ref{eq:MHOL:normalize3}), we have
\begin{equation}
\label{eq:Gaussan:normalization}
\mathtt{Gauss}(x,\sigma^2)=\mathtt{normalize}(\SCALE(\mathtt{Lebesgue},\lambda r.\exp(\frac{(r-x)^2}{2\sigma^2})).
\end{equation}
Using this we calculate,
\begin{align*}
\lefteqn{\OBSERVE \mathtt{Gauss}(\delta,\xi^2) \AS \mathtt{GPDF}(z,\sigma^2)}\\
&=\mathtt{normalize}(\SCALE(\mathtt{Gauss}(\delta,\xi^2),\mathtt{GPDF}(z,\sigma^2)))
&\text{(equation \ref{eq:MHOL:observe})}
\\
&=\mathtt{normalize}(\SCALE(\SCALE(\mathtt{Lebesgue},\mathtt{GPDF}(\delta,\xi^2)),\mathtt{GPDF}(z,\sigma^2))
&\text{(Axiom on $\mathtt{Gauss}$)}
\\
&=\mathtt{normalize}(\SCALE(\mathtt{Lebesgue},\lambda{r}.\mathtt{GPDF}(\delta,\xi^2)(r) \ast \mathtt{GPDF}(z,\sigma^2)(r)))
&\text{(equation \ref{eq:MHOL:scaling3})}
\\
&=\mathtt{normalize}(\SCALE(\mathtt{Lebesgue},
\lambda{r}.\exp(\frac{(r-\delta)^2}{2\xi^2}) \ast \exp(\frac{(r-z)^2}{2\sigma^2})))
&\text{(equations \ref{eq:MHOL:normalize1} and \ref{eq:MHOL:normalize3})}
\\
&=
\mathtt{normalize}(\SCALE(\mathtt{Lebesgue},
\lambda{r}.\exp(\frac{(r-\frac{z \xi^2 +\delta\sigma^2}{\xi^2 + \sigma^2})^2}{\frac{2\xi^2\sigma^2}{\xi^2 + \sigma^2}}))
&\text{(calculation)}
\\
&=\mathtt{Gauss}(\frac{z \xi^2 +\delta\sigma^2}{\xi^2 + \sigma^2},\frac{\xi^2\sigma^2}{\xi^2 + \sigma^2})
&\text{(equation \ref{eq:Gaussan:normalization})}
\end{align*}
\section{Proofs and Sketches on Graded $\top\top$-liftings}

\begin{theorem}[Graded Monadic Laws of $\mathfrak{U}_S$]
The following rules are derivable:
\begin{gather*}
\Gamma \mid \Psi \vdash_{\mathrm{\NameOfUnderlyingLogic}}
\forall{\alpha\colon\zeta}. 
\forall{\beta\colon\zeta}. 
\mathfrak{U}^{\alpha}_S \phi \implies \mathfrak{U}^{\beta}_S \phi
\\
\begin{aligned}
&\Gamma \mid \Psi \vdash_{\mathrm{\NameOfUnderlyingLogic}}
\forall{\alpha \colon \zeta}. 
(\forall{x \colon \tau}. \phi_1[x/\mathbf{r}'] \implies \phi_2[x/\mathbf{r}'])\implies (\mathfrak{U}^{\alpha}_S \phi_1 \implies \mathfrak{U}^{\alpha}_S \phi_2)
\end{aligned}
\\
\AxiomC{$\Gamma \mid \Psi \vdash_{\mathrm{\NameOfUnaryLogic}} e \colon \tau \mid \phi[\mathbf{r}/\mathbf{r}']$}
\UnaryInfC{$\Gamma \mid \Psi \vdash_{\mathrm{\NameOfUnaryLogic}} \RETURN(e) \colon D[\tau] \mid \mathfrak{U}^{1_\zeta}_S \phi$}
\DisplayProof
\\
\AxiomC{
$\Gamma \mid \Psi \vdash_{\mathrm{\NameOfUnaryLogic}} e \colon D[\tau]\mid \mathfrak{U}^{\alpha}_S \phi$
\;
$\Gamma \mid \Psi \vdash_{\mathrm{\NameOfUnaryLogic}} e' \colon \tau \to D[\tau'] \mid \forall{x \colon \tau}. \phi[x/\mathbf{r}]{\implies} (\mathfrak{U}^{\beta}_S\phi')[\mathbf{r} x/\mathbf{r}]
$
}
\UnaryInfC{$\Gamma \mid \Psi \vdash_{\mathrm{\NameOfUnaryLogic}}\BIND e~e' \colon D[\tau'] \mid \mathfrak{U}^{\alpha\cdot\beta}_S\phi'$}
\DisplayProof
\end{gather*}
\end{theorem}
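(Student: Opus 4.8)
The plan is to prove all four rules by unfolding the definition of $\mathfrak{U}^{\alpha}_S$ and reasoning inside \PL{}, drawing only on the preordered-monoid structure of $\zeta$ (unit, associativity, and compatibility of $\cdot$ with $\le_\zeta$), the monotonicity of the lifting parameter $S$ in $\mathbf{k}$, the monadic laws of $D$, and $\beta\eta$-equality. For the two monadic rules I would first move each \UPL{} premise into \PL{} using Theorem~\ref{thm:equiPLUPL}, carry out the argument on the unfolded assertion, and convert the result back; the two monotonicity laws are already pure \PL{} statements. Throughout I keep the value variable $\mathbf{r}'$ of $\phi$ (and of $\phi'$) separate from the distribution variable $\mathbf{r}$, and I note that the first law is only sound under the side condition $\alpha \le_\zeta \beta$ (the unqualified form printed above should be read with this hypothesis).

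\textbf{Grade and predicate monotonicity.} For the grade law, assuming $\alpha \le_\zeta \beta$ and $\mathfrak{U}^{\alpha}_S\phi$, I unfold the goal $\mathfrak{U}^{\beta}_S\phi$, fix $\gamma$ and $f$ satisfying $\forall x.\,\phi[x/\mathbf{r}']\implies S[\gamma/\mathbf{k},f(x)/\mathbf{l}]$, and feed exactly this $\gamma,f$ to the assumption, obtaining $S[\alpha\cdot\gamma/\mathbf{k},\BIND\mathbf{r}~f/\mathbf{l}]$. Compatibility of $\cdot$ with $\le_\zeta$ gives $\alpha\cdot\gamma \le_\zeta \beta\cdot\gamma$, and the monotonicity of $S$ then upgrades the conclusion to grade $\beta\cdot\gamma$, which is the goal. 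The predicate law is more immediate: from $\forall x.\,\phi_1[x/\mathbf{r}']\implies\phi_2[x/\mathbf{r}']$, any $\gamma,f$ witnessing the hypothesis of $\mathfrak{U}^{\alpha}_S\phi_2$ also witness the hypothesis of $\mathfrak{U}^{\alpha}_S\phi_1$ (compose the two implications), so invoking $\mathfrak{U}^{\alpha}_S\phi_1$ closes the goal with the grade unchanged.

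\textbf{Return.} Converting the premise via Theorem~\ref{thm:equiPLUPL} gives $\phi[e/\mathbf{r}']$. Unfolding $\mathfrak{U}^{1_\zeta}_S\phi$ at $\mathbf{r}:=\RETURN(e)$ and fixing $\beta,f$, the target $S[1_\zeta\cdot\beta/\mathbf{k},\BIND\RETURN(e)~f/\mathbf{l}]$ collapses, by the left-unit law $\BIND\RETURN(e)~f = f(e)$ of $D$ and the monoid unit law $1_\zeta\cdot\beta=\beta$, to $S[\beta/\mathbf{k},f(e)/\mathbf{l}]$; instantiating the hypothesis at $x:=e$ and discharging $\phi[e/\mathbf{r}']$ yields precisely this.

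\textbf{Bind} is the main obstacle, since it interleaves two unfoldings and uses associativity at both the monoid and the monad level. After converting both premises to \PL{}, I unfold the goal $\mathfrak{U}^{\alpha\cdot\beta}_S\phi'$ at $\mathbf{r}:=\BIND e~e'$ and fix a grade $\gamma$ and continuation $g$ with $\forall y.\,\phi'[y/\mathbf{r}']\implies S[\gamma/\mathbf{k},g(y)/\mathbf{l}]$, reducing the goal to $S[(\alpha\cdot\beta)\cdot\gamma/\mathbf{k},\BIND(\BIND e~e')~g/\mathbf{l}]$. The key step is to apply $\mathfrak{U}^{\alpha}_S\phi$ (the premise on $e$) at grade $\beta\cdot\gamma$ and continuation $f:=\lambda x.\BIND(e'~x)~g$. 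Its hypothesis $\forall x.\,\phi[x/\mathbf{r}']\implies S[\beta\cdot\gamma/\mathbf{k},\BIND(e'~x)~g/\mathbf{l}]$ is discharged by fixing $x$ with $\phi[x/\mathbf{r}']$, using the premise on $e'$ to get $\mathfrak{U}^{\beta}_S\phi'$ at the distribution $e'~x$, and then unfolding \emph{that} at $\gamma,g$, whose hypothesis is exactly the $g$-assumption already fixed. This produces $S[\alpha\cdot(\beta\cdot\gamma)/\mathbf{k},\BIND e~f/\mathbf{l}]$; associativity of $\zeta$ rewrites $\alpha\cdot(\beta\cdot\gamma)$ to $(\alpha\cdot\beta)\cdot\gamma$, and associativity of bind rewrites $\BIND e~(\lambda x.\BIND(e'~x)~g)$ to $\BIND(\BIND e~e')~g$, matching the goal. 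I expect the only real care to be in the bookkeeping of which grade and which continuation instantiate each nested quantifier, and in applying bind-associativity in the correct direction; once these align, the remainder is just the monoid and monad equations.
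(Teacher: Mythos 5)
Your proof is correct and follows essentially the same route as the paper's: unfolding $\mathfrak{U}_S$, instantiating the nested quantifiers with the continuation $\lambda x.\,\BIND (e'\,x)~g$ at grade $\beta\cdot\gamma$, and closing with monoid/monad associativity, the monotonicity of $S$, and the \UPL{}/\PL{} interchange (the paper phrases this last step as [u-BIND] followed by [u-SUB], which produces exactly the \PL{} obligation you discharge, so the difference is purely presentational). Your remark that the grade-monotonicity law requires the side condition $\alpha \leq_\zeta \beta$ also agrees with the paper, whose main-text statement of the theorem includes that hypothesis.
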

\begin{proof}[Proof Sketch]
The proofs are straightforward.
For example, the proof of (\ref{rule_general_graded_monad_Kleisli_lifting_unary})
begins with applying the [u-BIND] rule, which has the following premise:
\[
\vdash_{\mathrm{\NameOfUnaryLogic}} e' \colon \tau \to D[\tau'] \mid \forall{d \colon D[\tau]}. (\mathfrak{U}^{\alpha}_S\phi) [d/\mathbf{r}]\implies ( \mathfrak{U}^{\alpha\cdot\beta}_S\phi')[\BIND d~\mathbf{r} /\mathbf{r}].
\]
We then apply [u-SUB] rule, which has the following \NameOfUnderlyingLogic-premise:
\begin{align*}
\vdash_{\mathrm{\NameOfUnderlyingLogic}}&
(\forall{x \colon \tau}. \phi[x/\mathbf{r}]\implies (\mathfrak{U}^{\beta}_S\phi')[e' x/\mathbf{r}])\\
&\implies
(\forall{d \colon D[\tau]}. (\mathfrak{U}^{\alpha}_S\phi) [d/\mathbf{r}]\implies ( \mathfrak{U}^{\alpha\cdot\beta}_S\phi')[\BIND d~e'/\mathbf{r}]).
\end{align*}
To prove this premise,
consider $f \colon \tau' \to D[\theta]$ satisfying $\phi'[y/\mathbf{r}] \implies S[\gamma/\mathbf{k},f(y)/\mathbf{l}]$ and $d \colon D[\tau]$ such that $(\mathfrak{U}^{\alpha}_S\phi) [d/\mathbf{r}]$.
First, from the assumption on $e'$ and $f$, we obtain 
$\forall{x \colon \tau}. \phi[x/\mathbf{r}]\implies S[\beta \cdot \gamma/\mathbf{k},\BIND e'x~f /\mathbf{l}]$.
Hence, by the assumption on $d$, we obtain $S[\alpha \cdot \beta \cdot \gamma/\mathbf{k},\BIND d~\lambda{x\colon\tau}.(\BIND e'x~f) /\mathbf{l}]$.
By the associativity of $\BIND$, this is equivalent to $S[\alpha \cdot \beta \cdot \gamma/\mathbf{k},\BIND (\BIND d~e')~f /\mathbf{l}]$.
Since $f$ is arbitrary, we conclude $(\mathfrak{U}^{\alpha\cdot\beta}_S\phi')[\BIND d~e'/\mathbf{r}]$.
\end{proof}

\begin{proposition}
In the setting in Section \ref{subsec:TT-lifting:unionbound},
the following reduction is derivable in \NameOfUnderlyingLogic.
\[
\AxiomC{
 $\Gamma, \mathbf{r'}\colon \tau \vdash e \colon \BOOL $
\quad $\Gamma, \mathbf{r'}\colon \tau \mid \Psi \vdash_{\mathrm{\NameOfUnderlyingLogic}} \neg \phi \iff (e = \mathtt{true})$
}
\UnaryInfC{
$\Gamma, \mathbf{r} \colon D[\tau] \mid \Psi \vdash_{\mathrm{\NameOfUnderlyingLogic}}
\mathfrak{U}_S^\alpha(\neg \phi) \iff \Pr_{X \sim \mathbf{r}}[e[X/\mathbf{r}']] \leq \alpha$}
\DisplayProof
\]
\end{proposition}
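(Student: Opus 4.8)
The plan is to establish the biconditional in \PL\ by proving each implication separately, after unfolding $\mathfrak{U}_S^\alpha$ for the union-bound instantiation of Section~\ref{subsec:TT-lifting:unionbound}. With grading monoid $(\mathtt{pReal},0,+,\leq)$ and lifting parameter $S=(\mathbb{E}_{y\sim\mathbf{l}}[1]\leq\mathbf{k})$, the assertion $\mathfrak{U}_S^\alpha(\neg\phi)$ expands to
\[
\forall{\beta\colon\mathtt{pReal}}.\,\forall{f\colon\tau\to D[\mathtt{unit}]}.\,\big((\forall{x\colon\tau}.\,\neg\phi[x/\mathbf{r}']\implies\mathbb{E}_{y\sim f(x)}[1]\leq\beta)\implies\mathbb{E}_{y\sim\BIND\mathbf{r}~f}[1]\leq\alpha+\beta\big).
\]
Both directions rest on a single derived identity, the mass of a bind read as an iterated expectation, $\mathbb{E}_{y\sim\BIND\mathbf{r}~f}[1]=\mathbb{E}_{x\sim\mathbf{r}}[\mathbb{E}_{y\sim f(x)}[1]]$, which I would obtain in \PL\ from the monadic laws together with variable transformation (\ref{eq:MHOL:expected:variable_transformation}); the conversion axioms (\ref{eq:MHOL:equivalence:value_and_distribution_on_unit_type}) additionally give $0\leq\mathbb{E}_{y\sim f(x)}[1]\leq1$ for every $x$, since $f(x)\colon D[\mathtt{unit}]$.

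For the forward implication I would instantiate the universally quantified parameters of $\mathfrak{U}_S^\alpha(\neg\phi)$ with $\beta=0$ and with $f$ the sub-Dirac family cut out by the measurable boolean $e$ of the second premise, namely $f(x)\equiv\mathtt{scale}(\RETURN(\ast),\lambda z.(\IF e[x/\mathbf{r}']\THEN1\ELSE0))$, whose mass $\mathbb{E}_{y\sim f(x)}[1]$ is $1$ exactly where $e[x/\mathbf{r}']=\mathtt{true}$ and $0$ elsewhere. By the second premise this mass vanishes on the region where $\neg\phi[x/\mathbf{r}']$ holds, so the antecedent of the lifting is discharged at $\beta=0$; the mass-of-bind identity then rewrites the consequent $\mathbb{E}_{y\sim\BIND\mathbf{r}~f}[1]\leq\alpha$ into exactly $\Pr_{X\sim\mathbf{r}}[e[X/\mathbf{r}']]\leq\alpha$.

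For the converse I would assume $\Pr_{X\sim\mathbf{r}}[e[X/\mathbf{r}']]\leq\alpha$ and take arbitrary $\beta$ and $f$ meeting the antecedent. Setting $g(x)\equiv\mathbb{E}_{y\sim f(x)}[1]$ and decomposing the constant $1$ into the indicator of $e$ and of its negation, linearity (\ref{eq:MHOL:expected:linearity}) splits $\mathbb{E}_{x\sim\mathbf{r}}[g(x)]$ into the contribution of the region where $\neg\phi$ holds and that of its complement. On the former I bound $g$ by $\beta$ from the antecedent via monotonicity (\ref{ineq:MHOL:expected_monotonicity}), giving at most $\beta\cdot\Pr_{X\sim\mathbf{r}}[\neg\phi]\leq\beta$; on the latter I bound $g$ by $1$, giving at most $\Pr_{X\sim\mathbf{r}}[e[X/\mathbf{r}']]\leq\alpha$. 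Summing and applying the mass-of-bind identity yields $\mathbb{E}_{y\sim\BIND\mathbf{r}~f}[1]\leq\alpha+\beta$, which is precisely the instance of the lifting demanded for these $\beta,f$, and hence $\mathfrak{U}_S^\alpha(\neg\phi)$.

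The main obstacle is the logical hygiene of the indicator construction rather than any hard analysis: the boolean $e$ furnished by the second premise is exactly what lets me define, inside \PL, an honest function $f\colon\tau\to D[\mathtt{unit}]$ separating the two regions, and without this measurability hypothesis the forward instantiation cannot even be written down. The remaining care is to keep every manipulation---the mass-of-bind rewriting, the decomposition of $1$ into complementary indicators, and the two monotone bounds---within the axiomatic fragment, using only (\ref{ineq:MHOL:expected_monotonicity}), (\ref{eq:MHOL:expected:linearity}), (\ref{eq:MHOL:expected:variable_transformation}) and the $D[\mathtt{unit}]\cong[0,1]$ conversions (\ref{eq:MHOL:equivalence:value_and_distribution_on_unit_type}), so that soundness follows from Theorem~\ref{thm:PLsound}.
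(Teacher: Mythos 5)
Your proof takes essentially the same route as the paper's: you unfold $\mathfrak{U}_S^\alpha(\neg\phi)$ for $S=(\mathbb{E}_{y\sim\mathbf{l}}[1]\leq\mathbf{k})$, prove the left-to-right implication by instantiating the lifting at $\beta=0$ with the indicator-scaled point mass $f(x)=\SCALE(\RETURN(\ast),\lambda z.\IF e[x/\mathbf{r}']\THEN 1\ELSE 0)$ --- exactly the witness appearing in the paper's equality $(\dag)$ --- and prove the right-to-left implication by bounding the mass of $\BIND \mathbf{r}~f$ for an admissible $f$. The only organizational difference is in the converse: the paper sketches it by identifying a \emph{greatest} admissible mass function through the $D[\mathtt{unit}]\cong[0,1]$ conversion axioms (\ref{eq:MHOL:equivalence:value_and_distribution_on_unit_type}), whereas you bound an arbitrary admissible $f$ directly by splitting the expectation over the two regions and using monotonicity and linearity; your version is at least as clean and stays within the same axiomatic fragment.

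There is, however, a concrete polarity error. The second premise states $\neg\phi\iff(e=\mathtt{true})$, so on the region where $\neg\phi[x/\mathbf{r}']$ holds one has $e[x/\mathbf{r}']=\mathtt{true}$ and hence $\mathbb{E}_{y\sim f(x)}[1]=1$, not $0$: your claim that ``by the second premise this mass vanishes on the region where $\neg\phi$ holds'' is exactly backwards, and with it the $\beta=0$ instantiation becomes inadmissible. The same flip occurs in your converse, where the complement region contributes at most $\Pr_{X\sim\mathbf{r}}[\phi[X/\mathbf{r}']]$, which equals $\Pr_{X\sim\mathbf{r}}[e[X/\mathbf{r}']]$ only when $e$ indicates $\phi$ rather than $\neg\phi$. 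What you have written is therefore a correct proof of the variant in which the premise reads $\phi\iff(e=\mathtt{true})$, i.e.\ in which $e$ marks where the lifted assertion $\neg\phi$ \emph{fails}. To be fair, that is clearly the intended statement: the lifting at grade $\alpha$ bounds the mass of the failure region of its argument (the paper itself glosses the proposition as $\mathfrak{U}_S^\alpha(\neg\phi)\iff\Pr[\phi]\leq\alpha$), under the literal premise the claimed biconditional is semantically false (take $\mathbf{r}$ with $\Pr[\phi]=0.9$, $\Pr[\neg\phi]=0.1$ and $\alpha=1/2$), and the paper's own proof sketch uses the very same witness $f$, whose antecedent likewise only discharges under the corrected reading. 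So the mathematics you did reproduces the paper's proof; the genuine defect is that you attribute the vanishing-mass step to the premise as stated, where it is false. The fix is to flip the polarity of $e$ (and note the corresponding glitch in the proposition), after which both directions of your argument go through.
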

\begin{proof}[Proof Sketch]
We observe the following:
\[
\mathfrak{U}_S^\alpha(\neg \phi) \defeq
\left\{
\begin{aligned}
&\forall{f \colon \tau \to D[\mathtt{unit}]}.\forall{\beta \colon \mathtt{pReal}}.\\
&(\forall{x \colon \tau}. \neg \phi[x/\mathbf{r}'] \implies \mathbb{E}_{y \sim f(x)}[1] \leq \beta) \implies (\mathbb{E}_{y \sim (\BIND \mathbf{r}~f)}[1] \leq \alpha+\beta)
\end{aligned}
\right.
\]
The forward direction of the conclusion is proved by the equality (\dag) derived
from the axioms on scaling in \PL:
\[
\Pr_{X \sim \mathbf{r}}[e[X/\mathbf{r}']]
\labeleq{(\dag)}
\mathbb{E}_{y \sim \BIND \mathbf{r}~~\lambda{X}.\SCALE( \RETURN(\ast),\lambda{u \colon \mathtt{unit}}.\IF e[X/\mathbf{r}'] \THEN 1 \ELSE 0)}[1]
\leq \alpha + 0.
\]
For the converse direction, we need the equivalence between $D[\mathtt{unit}]$
and the unit interval $[0,1]$. To realize this we apply the axioms
(\ref{eq:MHOL:equivalence:value_and_distribution_on_unit_type}) in
\NameOfUnderlyingLogic.
Combining the axioms
(\ref{eq:MHOL:equivalence:value_and_distribution_on_unit_type}) and other axioms
on scaling of measures, we conclude the equivalence between a distribution $e
\colon D[\mathtt{unit}]$ and its mass $\mathbb{E}_{y \sim e}[1]$. The proof
follows by showing that the function $\lambda x \colon \tau.(\IF
e[x/\mathbf{r'}] \THEN \beta \ELSE 0)$ corresponds to the greatest function $f
\colon \tau \to D[\mathtt{unit}]$ such that $(\forall{x \colon \tau}. \neg
\phi[x/\mathbf{r}'] \implies \mathbb{E}_{y \sim f(x)}[1] \leq \beta)$. 
\end{proof}

For relational $\top\top$-lifting, we have
the following derivable graded monadic laws:
\begin{theorem}[Graded Monadic Laws of $\mathfrak{R}_S$]\label{thm:TT-lifting:monadic_law}
The following rules are derivable:
\begin{gather}
\label{validity_general_graded_monad_inclusion_relational}
\Gamma \mid \Psi \vdash_{\mathrm{\NameOfUnderlyingLogic}}
\forall{\alpha\colon\zeta}. 
\forall{\beta\colon\zeta}. 
(\alpha \leq_{\zeta} \beta \implies
\mathfrak{R}^{\alpha}_S \phi \implies \mathfrak{R}^{\beta}_S \phi)
\\
\label{validity_general_graded_monad_monotonicity_relational}
\begin{array}{r@{}l}
&\Gamma \mid \Psi \vdash_{\mathrm{\NameOfUnderlyingLogic}}
\forall{\alpha \colon \zeta}. (\forall{x_1 \colon \tau_1}. \forall{x_2 \colon \tau_2}. \phi_1[x_1/\mathbf{r}'_1, x_2/\mathbf{r}'_2] \implies \phi_2[x_1/\mathbf{r}'_1, x_2/\mathbf{r}'_2])\\
&\qquad\qquad\qquad \implies (\mathfrak{R}^{\alpha}_S \phi_1 \implies \mathfrak{R}^{\alpha}_S \phi_2)
\end{array}
\\
\label{rule_general_graded_monad_unit_relational}
\AxiomC{$\Gamma \mid \Psi \vdash_{\mathrm{\NameOfRelationalLogic}} e_1 \colon \tau_1 \sim e_2 \colon \tau_2 \mid \phi[\mathbf{r}_1/\mathbf{r}'_1, \mathbf{r}_2/\mathbf{r}'_2]$}
\UnaryInfC{$\Gamma \mid \Psi \vdash_{\mathrm{\NameOfRelationalLogic}} \RETURN (e_1) \colon D[\tau_1] \sim \RETURN (e_2) \colon D[\tau_2] \mid \mathfrak{R}^{1_\zeta}_S \phi$}
\DisplayProof
\\
\label{rule_general_graded_monad_Kleisli_lifting_relational}
\AxiomC{
$
\begin{array}{rl}
&\Gamma \mid \Psi \vdash_{\mathrm{\NameOfRelationalLogic}} e_1 \colon D[\tau_1]\sim e_2 \colon D[\tau_2] \mid \mathfrak{R}^{\alpha}_S \phi\\
&
\begin{array}{rl}
\lefteqn{\Gamma \mid \Psi \vdash_{\mathrm{\NameOfRelationalLogic}} 
e'_1 \colon \tau_1 \to D[\tau'_1] \sim e'_2 \colon \tau_2 \to D[\tau'_2] \mid}\\
&\qquad \forall{x_1 \colon \tau_1}. \forall{x_2 \colon \tau_2}. \phi[x_1/\mathbf{r}_1, x_2/\mathbf{r}_2]\implies (\mathfrak{R}^{\beta}_S\phi')[\mathbf{r}_1 x_1/\mathbf{r}_1, \mathbf{r}_2 x_2/\mathbf{r}_2]
\end{array}
\end{array}
$
}
\UnaryInfC{$\Gamma \mid \Psi \vdash_{\mathrm{\NameOfRelationalLogic}}\BIND e_1~e'_1 \colon D[\tau'_1] \sim \BIND e_2~e'_2 \colon D[\tau'_2] \mid \mathfrak{R}^{\alpha\cdot\beta}_S\phi'$}
\DisplayProof
\end{gather}
\end{theorem}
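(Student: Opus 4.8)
The plan is to prove each of the four relational rules by mirroring the corresponding derivation for the unary lifting $\mathfrak{U}_S$ (Theorem~\ref{thm:TT-lifting:monadic_law}), replacing the unary rules [u-RET], [u-BIND], [u-SUB] by their two-sided relational analogues [r-RET], [r-BIND], [r-SUB], and then discharging the residual \NameOfUnderlyingLogic-obligations by unfolding the definition of $\mathfrak{R}^{\alpha}_S\phi$. The only structural facts I will invoke are: the preordered-monoid structure $(\zeta,1_\zeta,\cdot_\zeta,\leq_\zeta)$, i.e. associativity and unitality of $\cdot_\zeta$ together with compatibility of $\leq_\zeta$ with $\cdot_\zeta$; the monotonicity condition imposed on the lifting parameter $S$; the monadic laws of $D$, in particular the left-unit law $\BIND \RETURN(e)~f = f(e)$ and associativity of $\BIND$; and the generic proof rules of \NameOfUnderlyingLogic. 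Since $\mathfrak{R}^{\alpha}_S$ is obtained from $\mathfrak{U}^{\alpha}_S$ purely by doubling the continuation parameter (from $f\colon\tau\to D[\theta]$ to a pair $f_1\colon\tau_1\to D[\theta_1]$, $f_2\colon\tau_2\to D[\theta_2]$) and the distinguished variables, no genuinely new ingredient is expected.

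For the grading-monotonicity rule~(\ref{validity_general_graded_monad_inclusion_relational}) I would unfold both $\mathfrak{R}^{\alpha}_S\phi$ and $\mathfrak{R}^{\beta}_S\phi$: given $\alpha \leq_\zeta \beta$ and a witness triple $(\gamma,f_1,f_2)$ satisfying the inner premise, applying $\mathfrak{R}^{\alpha}_S\phi$ yields $S[\alpha\cdot\gamma/\mathbf{k},\ldots]$, and compatibility of $\leq_\zeta$ with $\cdot_\zeta$ gives $\alpha\cdot\gamma \leq_\zeta \beta\cdot\gamma$, so monotonicity of $S$ upgrades this to $S[\beta\cdot\gamma/\mathbf{k},\ldots]$. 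For the predicate-monotonicity rule~(\ref{validity_general_graded_monad_monotonicity_relational}) I would again unfold and use $\forall x_1,x_2.\,\phi_1\implies\phi_2$ to precompose the inner premise: any triple witnessing the $\mathfrak{R}^{\alpha}_S\phi_2$ premise also witnesses the $\mathfrak{R}^{\alpha}_S\phi_1$ premise, so the conclusion follows with the grade left untouched. For the unit rule~(\ref{rule_general_graded_monad_unit_relational}) I would apply [r-RET] to reduce the goal to $e_1\sim e_2\mid(\mathfrak{R}^{1_\zeta}_S\phi)[\RETURN(\mathbf{r}_1)/\mathbf{r}_1,\RETURN(\mathbf{r}_2)/\mathbf{r}_2]$, then [r-SUB]; unfolding the substituted lifting and using $\BIND \RETURN(\mathbf{r}_i)~f_i = f_i(\mathbf{r}_i)$ together with $1_\zeta\cdot\beta=\beta$ collapses the body to $S[\beta/\mathbf{k},f_1(\mathbf{r}_1)/\mathbf{l}_1,f_2(\mathbf{r}_2)/\mathbf{l}_2]$, which follows by instantiating the inner quantifiers at $x_i=\mathbf{r}_i$ and using the supplied hypothesis $\phi[\mathbf{r}_1/\mathbf{r}'_1,\mathbf{r}_2/\mathbf{r}'_2]$.

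The main work, and the expected obstacle, is the Kleisli rule~(\ref{rule_general_graded_monad_Kleisli_lifting_relational}). I would apply [r-BIND] with left relation $\mathfrak{R}^{\alpha}_S\phi$ and middle relation $\forall s_1,s_2.\,(\mathfrak{R}^{\alpha}_S\phi)[s_1/\mathbf{r}_1,s_2/\mathbf{r}_2]\implies(\mathfrak{R}^{\alpha\cdot\beta}_S\phi')[\BIND s_1~\mathbf{r}_1/\mathbf{r}_1,\BIND s_2~\mathbf{r}_2/\mathbf{r}_2]$, and bridge to the supplied second premise by [r-SUB]. The residual \NameOfUnderlyingLogic-implication I would prove by fixing arbitrary $\gamma$, $g_1\colon\tau'_1\to D[\theta_1]$, $g_2\colon\tau'_2\to D[\theta_2]$ witnessing the $\phi'$-premise, and arbitrary $s_1,s_2$ with $(\mathfrak{R}^{\alpha}_S\phi)[s_1/\mathbf{r}_1,s_2/\mathbf{r}_2]$. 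The key device is to set $h_i=\lambda x_i.\,\BIND(e'_i\,x_i)~g_i$, so that the hypothesis on $e'_1,e'_2$ yields $\forall x_1,x_2.\,\phi\implies S[\beta\cdot\gamma/\mathbf{k},h_1(x_1)/\mathbf{l}_1,h_2(x_2)/\mathbf{l}_2]$; instantiating $(\mathfrak{R}^{\alpha}_S\phi)[s_1/\mathbf{r}_1,s_2/\mathbf{r}_2]$ at the triple $(\beta\cdot\gamma,h_1,h_2)$ then gives $S[\alpha\cdot(\beta\cdot\gamma)/\mathbf{k},\BIND s_1~h_1/\mathbf{l}_1,\BIND s_2~h_2/\mathbf{l}_2]$. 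Finally, associativity of $\BIND$ rewrites $\BIND s_i~h_i$ to $\BIND(\BIND s_i~e'_i)~g_i$ and associativity of $\cdot_\zeta$ rewrites $\alpha\cdot(\beta\cdot\gamma)$ to $(\alpha\cdot\beta)\cdot\gamma$; since $\gamma,g_1,g_2$ were arbitrary, this is exactly $(\mathfrak{R}^{\alpha\cdot\beta}_S\phi')[\BIND s_1~e'_1/\mathbf{r}_1,\BIND s_2~e'_2/\mathbf{r}_2]$. The delicacy is entirely bookkeeping: keeping the two independent sides' continuations ($g_1,h_1$ versus $g_2,h_2$) and distinguished variables separate while threading a \emph{single shared} grade $\gamma$, so that the one instantiation of $\mathfrak{R}^{\alpha}_S\phi$ aligns on both coordinates simultaneously. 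Doubling the continuations and the two associativity rewrites are the only source of friction relative to the unary proof.
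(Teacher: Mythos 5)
Your proposal is correct and takes essentially the same route as the paper's own proof: each rule is reduced via [r-RET]/[r-BIND] and [r-SUB] to a \PL{} obligation, which is discharged by unfolding $\mathfrak{R}_S$ and using the monadic unit and associativity laws, the preordered-monoid structure, and the monotonicity of $S$. In particular, your key step for the Kleisli rule---instantiating $\mathfrak{R}^{\alpha}_S\phi$ at the composed continuations $\lambda x_i.\,\BIND (e'_i\,x_i)~g_i$ with grade $\beta\cdot\gamma$ and then rewriting by associativity of $\BIND$ and of $\cdot_\zeta$---is exactly the paper's argument (with $\delta$ and $f_i$ in place of your $\gamma$ and $g_i$).
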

\begin{proof}
The rule (\ref{validity_general_graded_monad_inclusion_relational}) is obvious from the monotonicity of lifting parameter.

The rule (\ref{validity_general_graded_monad_monotonicity_relational}) proved from the fact that formulas in the following form is tautology:
\[
(\phi_A \implies \phi_B) \implies ((\phi_B \implies \phi_C) \implies (\phi_A \implies \phi_C)).
\]

We prove (\ref{rule_general_graded_monad_unit_relational}).
To prove this by applying two-sided [r-RETURN] rule, having the premise
\[
\Gamma \mid \Psi \vdash_{\mathrm{\RPL}} e_1 \colon \tau_1 \sim e_2 \colon \tau_2 \mid (\mathfrak{R}^{1_\zeta}_S \phi)[\mathtt{return}(\mathbf{r}_1)/\mathbf{r}_1,~\mathtt{return}(\mathbf{r}_2)/\mathbf{r}_2].
\]
To prove this by relational [r-SUB] rule, having the premise:
\begin{equation}\label{eq:monaic:TT-lifting:unit:base}
\Gamma \mid \Psi \vdash_{\mathrm{\PL}} \phi[\mathbf{r}_1/\mathbf{r}'_1,~\mathbf{r}_2/\mathbf{r}'_2] \implies (\mathfrak{R}^{1_\zeta}_S \phi)[\mathtt{return}(\mathbf{r}_1)/\mathbf{r}_1,~\mathtt{return}(\mathbf{r}_2)/\mathbf{r}_2].
\end{equation}
By the the monadic unit law, we obtain the following equality in \PL-formulas:
\begin{align*}
\lefteqn{(\mathfrak{R}^{1_\zeta}_S \phi)[\mathtt{return}(\mathbf{r}_1)/\mathbf{r}_1,~\mathtt{return}(\mathbf{r}_2)/\mathbf{r}_2]}\\
&=
\lefteqn{
\forall{\beta \colon \zeta}.~\forall{f_1 \colon \tau_1 \to D[\theta_1]}.~\forall{f_2 \colon \tau_2 \to D[\theta_2]}.~}\\
&\qquad (\forall{x_1 \colon \tau_1}.~\forall{x_2 \colon \tau_2}.~\phi[x_1/\mathbf{r}'_1,~x_2/\mathbf{r}'_2] \implies S[\beta/\mathbf{k},~f_1(x_1)/\mathbf{l}_1,~f_2(x_2)/\mathbf{l}_2]) \\
& \qquad \qquad \implies S[\beta/\mathbf{k},~\BIND \RETURN(\mathbf{r}_1)~f_1) /\mathbf{l}_1,~\BIND \RETURN(\mathbf{r}_2)~f_2 /\mathbf{l}_2]))\\
&=
\lefteqn{
\forall{\beta \colon \zeta}.~\forall{f_1 \colon \tau_1 \to D[\theta_1]}.~\forall{f_2 \colon \tau_2 \to D[\theta_2]}.~}\\
&\qquad (\forall{x_1 \colon \tau_1}.~\forall{x_2 \colon \tau_2}.~\phi[x_1/\mathbf{r}'_1,~x_2/\mathbf{r}'_2] \implies S[\beta/\mathbf{k},~f_1(x_1)/\mathbf{l}_1,~f_2(x_2)/\mathbf{l}_2]) \\
& \qquad \qquad \implies S[\beta/\mathbf{k},~f_1(\mathbf{r}_1) /\mathbf{l}_1,~f_2(\mathbf{r}_2) /\mathbf{l}_2]))
\end{align*}
The second formula
\begin{align*}
&~\phi[\mathbf{r}_1/\mathbf{r}'_1,~\mathbf{r}_2/\mathbf{r}'_2]\\
&\quad \implies
\forall{\beta \colon \zeta}.~\forall{f_1 \colon \tau_1 \to D[\theta_1]}.~\forall{f_2 \colon \tau_2 \to D[\theta_2]}.\\
&\qquad\quad (\forall{x_1 \colon \tau_1}.~\forall{x_2 \colon \tau_2}.~\phi[x_1/\mathbf{r}'_1,~x_2/\mathbf{r}'_2] \implies S[\beta/\mathbf{k},~f_1(x_1)/\mathbf{l}_1,~f_2(x_2)/\mathbf{l}_2]) \\
& \qquad \qquad \implies S[\beta/\mathbf{k},~f_1(\mathbf{r}_1) /\mathbf{l}_1,~f_2(\mathbf{r}_2) /\mathbf{l}_2])).
\end{align*}
is a tautology. Hence, we conclude (\ref{eq:monaic:TT-lifting:unit:base}).

Next we show (\ref{rule_general_graded_monad_Kleisli_lifting_relational}).
To prove this by applying [r-BIND] and [r-SUB] rules, having the following \PL- premise
\begin{align*}
\Gamma\mid\Psi \vdash_{\mathrm{\PL}}
&\forall{\mathbf{r}_1 \colon \tau_1 \to D[\tau'_1]}.~\forall{\mathbf{r}_2 \colon \tau_2 \to D[\tau'_2]}.\\
&\qquad(\forall{x_1 \colon \tau_1}.~\forall{x_2 \colon \tau_2}.~\phi[x_1/\mathbf{r}'_1,~x_2/\mathbf{r}'_2]\implies (\mathfrak{R}^{\beta}_S\phi')[\mathbf{r}_1 x_1/\mathbf{r}_1,~\mathbf{r}_2 x_2/\mathbf{r}_2])
\\
&\qquad\qquad\implies
(\forall{ s_1 \colon D[\tau_1]}.~\forall{ s_2 \colon D[\tau_2]}.~
\mathfrak{R}^{\alpha}_S \phi[s_1/\mathbf{r}_1,~s_2/\mathbf{r}_2]
\\
&\qquad\qquad\qquad\implies \mathfrak{R}^{\alpha\cdot\beta}_S\phi'[\BIND s_1~ \mathbf{r}_1/\mathbf{r}_1,~\BIND s_2~\mathbf{r}_2/\mathbf{r}_2]
).
\end{align*}
To prove this by applying [$\Rightarrow_I$] and [$\forall_I$], having the following \PL- premise
\begin{equation}
\label{proof:TT-lifting:multiplication:000}
\begin{aligned}
&\Gamma,
\mathbf{r}_1 \colon \tau_1 \to D[\tau'_1],
\mathbf{r}_2 \colon \tau_2 \to D[\tau'_2],
s_1 \colon D[\tau_1],
s_2 \colon D[\tau_2]\mid \Psi,\\
&\quad (\forall{x_1 \colon \tau_1}.~\forall{x_2 \colon \tau_2}.~\phi[x_1/\mathbf{r}'_1,~x_2/\mathbf{r}'_2]\implies (\mathfrak{R}^{\beta}_S\phi')[\mathbf{r}_1 x_1/\mathbf{r}_1,~\mathbf{r}_2 x_2/\mathbf{r}_2]),\\
&\quad (\mathfrak{R}^{\alpha}_S \phi[s_1/\mathbf{r}_1,~s_2/\mathbf{r}_2])\vdash_{\mathrm{\PL}}\mathfrak{R}^{\alpha\cdot\beta}_S\phi'[\BIND s_1~\mathbf{r}_1/\mathbf{r}_1,~\BIND s_2~\mathbf{r}_2x_2/\mathbf{r}_2]
\end{aligned}
\end{equation}
We unfold the macro $\mathfrak{R}^{\alpha\cdot\beta}_S\phi'[\BIND s_1~\mathbf{r}_1/\mathbf{r}_1,~\BIND s_2~\mathbf{r}_2/\mathbf{r}_2]$ to:
\begin{equation}
\label{proof:TT-lifting:multiplication:001}
\begin{aligned}
\lefteqn{
\forall{\delta\colon \zeta}.~\forall{f_1 \colon \tau'_1 \to D[\theta_1]}.~\forall{f_2 \colon \tau'_2 \to D[\theta_2]}.~}\\
&\qquad (\forall{x'_1 \colon \tau'_1}.~\forall{x'_2 \colon \tau'_2}.~\phi'[x'_1/\mathbf{r}'_1,~x'_2/\mathbf{r}'_2] \implies S[\delta/\mathbf{k},~f_1(x'_1)/\mathbf{l}_1,~f_2(x'_2)/\mathbf{l}_2]) \\
& \qquad \qquad \implies S\left[\alpha\cdot\beta\cdot\delta/\mathbf{k},\BIND (\BIND s_1~\mathbf{r}_1)~f_1 /\mathbf{l}_1,\BIND (\BIND s_2~\mathbf{r}_2)~f_2/\mathbf{l}_2
\right]
\end{aligned}
\end{equation}
By the associativity of monadic bind,
the fromula (\ref{proof:TT-lifting:multiplication:001}) is equivalent to:
\[
\begin{aligned}
\lefteqn{
\forall{\delta\colon \zeta}.~\forall{f_1 \colon \tau'_1 \to D[\theta_1]}.~\forall{f_2 \colon \tau'_2 \to D[\theta_2]}.~}\\
&\qquad (\forall{x'_1 \colon \tau'_1}.~\forall{x'_2 \colon \tau'_2}.~\phi'[x'_1/\mathbf{r}'_1,~x'_2/\mathbf{r}'_2] \implies S[\delta/\mathbf{k},~f_1(x'_1)/\mathbf{l}_1,~f_2(x'_2)/\mathbf{l}_2]) \\
& \qquad \qquad \implies S\left[\alpha\cdot\beta\cdot\delta/\mathbf{k},\BIND s_1~ \lambda{x_1}.(\BIND \mathbf{r}_1(x_1)~f_1) /\mathbf{l}_1,\BIND s_2~ \lambda{x_2}.(\BIND \mathbf{r}_2(x_2)~f_2) /\mathbf{l}_2
\right]
\end{aligned}
\]
Hence to prove (\ref{proof:TT-lifting:multiplication:001}) by applying [SUBST],  rule with the associativity of monadic bind, and applying [$\Rightarrow_I$] and [$\forall_I$], we need to prove the following \PL-premise:
\begin{equation}
\label{proof:TT-lifting:multiplication:002}
\begin{aligned}
&\Gamma,
\mathbf{r}_1 \colon \tau_1 \to D[\tau'_1],
\mathbf{r}_2 \colon \tau_2 \to D[\tau'_2],
s_1 \colon D[\tau_1],
s_2 \colon D[\tau_2]\\
&\quad
\delta\colon \zeta,
f_1 \colon \tau'_1 \to D[\theta_1],
f_2 \colon \tau'_2 \to D[\theta_2]\\
\mid \Psi,\\
&\quad (\mathfrak{R}^{\alpha}_S \phi[s_1/\mathbf{r}_1,~s_2/\mathbf{r}_2]),\\
&\quad (\forall{x_1 \colon \tau_1}.~\forall{x_2 \colon \tau_2}.~\phi[x_1/\mathbf{r}'_1,~x_2/\mathbf{r}'_2]\implies (\mathfrak{R}^{\beta}_S\phi')[\mathbf{r}_1 x_1/\mathbf{r}_1,~\mathbf{r}_2 x_2/\mathbf{r}_2]),\\
&\quad (\forall{x'_1 \colon \tau'_1}.~\forall{x'_2 \colon \tau'_2}.~\phi'[x'_1/\mathbf{r}'_1,~x'_2/\mathbf{r}'_2] \implies S[\delta/\mathbf{k},~f_1(x'_1)/\mathbf{l}_1,~f_2(x'_2)/\mathbf{l}_2]) \\
\vdash_{\mathrm{\PL}}\\
&\quad
\begin{aligned}
S\left[\alpha\cdot\beta\cdot\delta/\mathbf{k},\BIND s_1~ \lambda{x_1}.(\BIND \mathbf{r}_1(x_1)~f_1) /\mathbf{l}_1,\BIND s_2~ \lambda{x_2}.(\BIND \mathbf{r}_2(x_2)~f_2) /\mathbf{l}_2
\right]
\end{aligned}
\end{aligned}
\end{equation}
To prove this judgment by applying [Ax] and [$\forall_E$] rules to the precondition $(\mathfrak{R}^{\alpha}_S \phi[s_1/\mathbf{r}_1,~s_2/\mathbf{r}_2])$ of (\ref{proof:TT-lifting:multiplication:002}) and applying [$\Rightarrow_E$] rule, we need to prove:
\begin{align*}
(\ldots)
\vdash_{\mathrm{\PL}}
&\forall{x_1 \colon \tau_1}.~\forall{x_2 \colon \tau_2}.~\phi[x_1/\mathbf{r}'_1,~x_2/\mathbf{r}'_2]\implies
S[\beta\cdot\delta/\mathbf{k},
\BIND \mathbf{r}_1x_1~f_1/\mathbf{l}_1,
\BIND \mathbf{r}_2x_2~f_2/\mathbf{l}_2]
\end{align*}
Similarly, to prove this judgment by instantiating the precondition
\[
(\forall{x_1 \colon \tau_1}.~\forall{x_2 \colon \tau_2}.~\phi[x_1/\mathbf{r}'_1,~x_2/\mathbf{r}'_2]\implies (\mathfrak{R}^{\beta}_S\phi')[\mathbf{r}_1 x_1/\mathbf{r}_1,~\mathbf{r}_2 x_2/\mathbf{r}_2])
\]
of (\ref{proof:TT-lifting:multiplication:002}), we need to prove the following judgment:
\begin{align*}
(\ldots)
\vdash_{\mathrm{\PL}}
&\forall{x_1 \colon \tau_1}.~\forall{x_2 \colon \tau_2}.~
(\mathfrak{R}^{\beta}_S\phi')[\mathbf{r}_1 x_1/\mathbf{r}_1,~\mathbf{r}_2 x_2/\mathbf{r}_2])
\\
&
\implies
S[\beta\cdot\delta/\mathbf{k},
\BIND \mathbf{r}_1x_1~f_1 /\mathbf{l}_1,
\BIND \mathbf{r}_2x_2~f_2 /\mathbf{l}_2].
\end{align*}
Similarly, to prove this judgment by instantiating $(\forall{x_1 \colon \tau_1}.~\forall{x_2 \colon \tau_2}.~\phi[x_1/\mathbf{r}'_1,~x_2/\mathbf{r}'_2])$, we need to prove the following judgment:
\begin{align*}
(\ldots)
\vdash_{\mathrm{\PL}}
&\forall{y_1 \colon \tau'_1}.~\forall{y_2 \colon \tau'_2}.~\phi'[y_1/\mathbf{r}'_1,~y_2/\mathbf{r}'_2]\implies
S[\delta/\mathbf{k},f_1(y_1)
/\mathbf{l}_1,f_2(y_2)/\mathbf{l}_2].
\end{align*}
However it is already in the precondition hence we have it by applying [Ax] rule.
\end{proof}

\section{Discussion on the Correctness of Gaussian Learning}

By using the $\top\top$-lifting for the union bound logic,
we can sketch the convergence of Gaussian Learning algorithm.
We change the typing of the primitive of Gaussian
distributions from $M[\REAL]$ to $D[\REAL]$.
Let $\mathtt{Gauss}(\mu,\sigma)^N \colon D[\LIST(\REAL)]$
be a distribution of lists generated from the list $\{d,d,\ldots,d\}$ with length $N$ consists of
the Gaussian distribution $d = \mathtt{Gauss}(\mu,\sigma)$.

We show the following \UPL-judgment through the $\top\top$-lifting for the union bound logic.
\[
\vdash_{\mathrm\UPL}
\BIND \mathtt{Gauss}(\mu,\sigma)^N ~ \mathtt{GaussLearn}(\mathtt{Gauss}(0,1))
\mid \Pr_{r \sim \mathbf{r}}[|r - \mu| \geq \varepsilon] \leq \delta(\varepsilon,N) + \frac{4\sigma^2}{N\varepsilon^2} 
\]
First, in a similar way as the Monte Carlo approximation,
\[
\vdash_{\mathrm\UPL} \mathtt{Gauss}(\mu,\sigma)^N  \colon D[\LIST(\REAL)] \mid \Pr_{L \sim \mathbf{r}}[|\mathtt{Total}(L)/N - \mu| \geq \frac{\varepsilon}{2}] \leq  \frac{4\sigma^2}{N\varepsilon^2}.
\]
This is interpreted by $\top\top$-lifting ($S = (\mathbb{E}_{y \sim \mathbf{l}}[1] \leq \mathbf{k})$) for the union bound logic to:
\[
\vdash_{\mathrm\UPL} \mathtt{Gauss}(\mu,\sigma)^N  \colon D[\LIST(\REAL)] \mid \mathfrak{U}^{ \frac{4\sigma^2}{N\varepsilon^2}}_S(|\mathtt{Total}(\mathbf{r'})/N - \mu| < \frac{\varepsilon}{2}).
\]
Since $\mathtt{GaussLearn}(\mathtt{Gauss(0,1)})(L) = \mathtt{Gauss}(\mathtt{Total}(L)/(|L| + \sigma^2), \sigma^2/(|L| + \sigma^2))$,
there is a function $\delta \colon \REAL \times \INT \to \REAL$ such that  $\delta(\varepsilon,|L|)$ satisfies
\begin{align*}
\vdash_{\mathrm\UPL} &\mathtt{GaussLearn}(\mathtt{Gauss}(0,1)) \colon \LIST(\REAL) \to D[\REAL] \mid\\
&\forall L \colon \LIST(\REAL). |\mathtt{Total}(L)/|L| - \mu | < \frac{\varepsilon}{2}\\
& \implies 
\mathfrak{U}_S^{\delta(\varepsilon,|L|)}([|\mu - \mathtt{Total}(L)/|L| < \frac{\varepsilon}{2} \land |\mathtt{Total}(L)/|L| - \mathbf{r}' | < \frac{\varepsilon}{2}])[\mathbf{r}(s)/\mathbf{r}]
\end{align*}
We also have by the monotonicity of unary graded $\top\top$-lifting:
\begin{align*}
\vdash_{\mathrm\PL}
&
\mathfrak{U}_S^{\delta(\varepsilon,|L|)}([|\mu - \mathtt{Total}(L)/|L| < \frac{\varepsilon}{2} \land |\mathtt{Total}(L)/|L| - \mathbf{r}' | < \frac{\varepsilon}{2}])[\mathbf{r}(s)/\mathbf{r}]\\
&\implies 
\mathfrak{U}_S^{\delta(\varepsilon,|L|)}([|\mu - \mathbf{r}' | < \frac{\varepsilon}{2}])[\mathbf{r}(s)/\mathbf{r}]
\end{align*}

Then we apply the weakening and bind rule on unary graded $\top\top$-lifting, we conclude
\[
\vdash_{\mathrm\UPL}
\BIND \mathtt{Gauss}(\mu,\sigma)^N ~ \mathtt{GaussLearn}(\mathtt{Gauss}(0,1))
\mid 
\mathfrak{U}_S^{\delta(\varepsilon,N) + \frac{4\sigma^2}{N\varepsilon^2}}(|\mathbf{r}' - \mu| < \varepsilon).
\]
This is equivalent to:
\[
\vdash_{\mathrm\UPL}
\BIND \mathtt{Gauss}(\mu,\sigma)^N ~ \mathtt{GaussLearn}(\mathtt{Gauss}(0,1))
\mid \Pr_{r \sim \mathbf{r}}[|r - \mu| \geq \varepsilon] \leq \delta(\varepsilon,N) + \frac{4\sigma^2}{N\varepsilon^2}.
\]
The term $\delta(\varepsilon,N) + \frac{4\sigma^2}{N\varepsilon^2}$ converges to $0$ as $N \to \infty$.
Here $\delta$ is calculated by an upper bound under the condition $|\mathtt{Total}(L)/|L| - \mu| > \frac{\varepsilon}{2}$ of the following probability:
\begin{align*}
\Pr_{r \sim \mathtt{Gauss}(\frac{\mathtt{Total}(L)}{(|L| + \sigma^2)}, \frac{\sigma^2}{(|L| + \sigma^2)})}[ |r - \mathtt{Total}(L)/|L|| \geq \varepsilon]
=
\Pr_{r \sim \mathtt{Gauss}(
\frac{\sigma^2}{(|L| + \sigma^2)}\frac{\mathtt{Total}(L)}{|L|},\frac{\sigma^2}{(|L| + \sigma^2)})}[ |r| \geq \varepsilon].
\end{align*}
Actually, the proof of convergence of $\delta$ in the logic \PL\ is quite complicated, and need to introduce more terminologies of calculations on integrations in \PL, and the proof of convergence itself is far from program verification.
Hence we omit this discussion from the main body of this paper.

\section{Recall: Axioms and Equations of Assertions for Statistics}
\label{sec:axioms}
%
We introduce axioms and equations in the logic \PL.
First, we have the standard equational theory for expressions based on $\alpha$-conversion, $\beta$-reduction, extensionality, and the monadic rules of the monadic type $M$ (we omit here).
The monadic type $M$ also has the commutativity (Fubini-Tonelli equality), written as the following equation:
%
\begin{equation}
\label{eq:MHOL:commutativity}
(\BIND e_1 ~\lambda{x}.(\BIND e_2 ~\lambda{y}.e(x,y)))
=
(\BIND e_2 ~\lambda{y}.(\BIND e_1 ~\lambda{x}.e(x,y)) \quad (x, y \colon \text{fresh})
\end{equation}
We introduce some equalities around expected values.
We have the monotonicity and linearity of expected values (axioms \ref{ineq:MHOL:expected_monotonicity}, \ref{eq:MHOL:expected:linearity}), and we also have Cauchy-Schwartz inequality (axiom \ref{ineq:expectation:Cauchy-Schwarz}). 
We are able to transform the variables in the expression of expected values.
\begin{gather}
\label{ineq:MHOL:expected_monotonicity}
(\forall{x \colon \tau}.~e' \geq 0) \implies \mathbb{E}_{x \sim e}[e'] \geq 0
\\
\label{eq:MHOL:expected:linearity}
\mathbb{E}_{x \sim e}[e_1 \ast e_2] = e_1 \ast \mathbb{E}_{x \sim e}[e_2]
\quad (x\notin \mathrm{FV}(e_1)),
\qquad
\mathbb{E}_{x \sim e}[e_1 + e_2] = \mathbb{E}_{x \sim e}[e_1] + \mathbb{E}_{x \sim e}[e_2] 
\\
\label{ineq:expectation:Cauchy-Schwarz}
(\mathbb{E}_{x \sim e}[e_1 \ast e_2])^2 \leq \mathbb{E}_{x \sim e}[e_1^2] \ast \mathbb{E}_{x \sim e}[e_2^2]
\\
\label{eq:MHOL:expected:variable_transformation}
\mathbb{E}_{x \sim \BIND e~\lambda{y}.\RETURN(e')}[e'']
=\mathbb{E}_{y \sim e}[e''[e'/x]]
\end{gather}
We also introduce some basic equalities on observations, rescaling,
and normalizations.
\begin{gather}
\label{eq:MHOL:expected:scaling}
\mathbb{E}_{x \sim d'}[h(x)\cdot g(x)] = \mathbb{E}_{x \sim \mathtt{scale}(d',g)}[h(x)].
\\
\label{eq:MHOL:scaling1}
(\mathtt{scale}(\mathtt{scale}(e_1, e_2),~ e_3) =
(\mathtt{scale}(e_1, \lambda{x}.(e_2(x) \ast e_3(x))),
\quad
e = \mathtt{scale}(e,\lambda{\_}.1)
\\
\label{eq:MHOL:scaling2}
(\MLET x = \mathtt{scale}(e_1, e_2) \IN e_3(x))
=
(\MLET x = e_1 \IN \mathtt{scale}(e_3(x), \lambda{u}.e_2(x)))
\\
\label{eq:MHOL:scaling3}
\mathtt{scale}(e_1, e_2)\otimes \mathtt{scale}(e_3, e_4)
=
\mathtt{scale}( e_1 \otimes e_2, \lambda{w}. e_2(\pi_1(w)) \ast e_4(\pi_2(w)))
\\
\label{eq:MHOL:scaling4}
\mathbb{E}_{y\sim e}[1] < \infty
\implies
(\BIND e'~\lambda{x}.e)=(\mathtt{scale}(e,\mathbb{E}_{y\sim e'}[1]))
\quad
(x \notin \mathrm{FV}(e))
\\
\label{eq:MHOL:observe}
(\OBSERVE e_1 \AS e_2) = \mathtt{normalize}(\SCALE (e_1,e_2))
\\
\label{eq:MHOL:normalize1}
\mathtt{normalize}(e) = \SCALE(e,\lambda{u}.1/\mathbb{E}_{x \sim e}[1]) \quad (u \notin \mathrm{FV}(\mathbb{E}_{x \sim e}[1]))
\\
\label{eq:MHOL:normalize3}
0 < \alpha < \infty
\implies \mathtt{normalize}(\mathtt{scale}(e_1, e_2))=\mathtt{normalize}(\mathtt{scale}(e_1, \alpha \ast e_2))
\end{gather}
We may introduce the axioms for particular distributions such as
$\mathbb{E}_{x \sim \mathtt{Bern}(e)}[\IF x \THEN 1 \ELSE 0] = e$ ($0 \leq e \leq 1$),
$\mathbb{E}_{x \sim \mathtt{Gauss}(e_1,e_2)}[x]=e_1$
, and etc.
We omit them right now.
%
\subsection{Markov and Chebyshev inequalities}
The axioms in \PL\ that we introduced above are quite standard, but we already able to enjoy meaningful discussions in probability theory.
For instance, we can prove Markov inequality (\ref{eq:MHOL:lemma:Markov_inequality}) and Chebyshev inequality (\ref{eq:MHOL:lemma:Chebyshev_inequality}) in \PL.
\begin{gather}
\label{eq:MHOL:lemma:Markov_inequality}
d \colon M[\REAL],~
a \colon \REAL
\vdash_{\mathrm{\NameOfUnderlyingLogic}}
(a > 0) \implies \Pr_{x \sim d}[ |x| \geq a] \leq \mathbb{E}_{x \sim d}[|x|]/a.\\
\label{eq:MHOL:lemma:Chebyshev_inequality}
\begin{array}{rl}
d \colon M[\REAL],
b \colon \REAL,
\mu \colon \REAL
&\vdash_{\mathrm{\NameOfUnderlyingLogic}}
\mathbb{E}_{x \sim d}[1] = 1 \land \mu = \mathbb{E}_{x \sim d}[x] \land b^2 > 0\\
&\implies \Pr_{x \sim d}[ |x - \mu| \geq b] \leq \mathrm{Var}_{x\sim d}[x]/b^2.
\end{array}
\end{gather}
\subsection{The Reproductive Property and Conversions of Gaussian distributions}
We can introduce in \PL\ the following equalities of the reproductive property of Gaussian distributions and two equalities converting from Gaussian distribution to the standard Gaussian distribution $\mathtt{Gauss}(0,1)$ and vise versa.
\begin{align}
\label{eq:normal:reproductive_property}
\lefteqn{(\BIND \mathtt{Gauss}(\mu_1,\sigma_1^2) \IN 
\lambda{x}.(\BIND \mathtt{Gauss}(\mu_2,\sigma_2^2) ~\lambda{y}.\mathtt{return}(p x  + (1-p) y))
)}\notag\\
&\quad = \mathtt{Gauss}(p\mu_1+(1-p)\mu_2,p^2 \sigma_1^2+(1-p)^2\sigma_2^2).
\qquad\qquad\qquad\qquad\qquad\qquad\qquad
\end{align}
\begin{equation}
\label{eq:normal:normal_and_standard_normal}
(\BIND \mathtt{Gauss}(0,1)~ \lambda{x}.\mathtt{return}(x\sqrt{\sigma^2}+\mu))
= \mathtt{Gauss}(\mu,\sigma^2)
\end{equation}
\subsection{Soundness of Axioms in \PL}
\newcommand{\Kint}{\square\hspace{-1.0em}\int}
Soundness of many axioms are proved by using the equations in the
toolbox for synthetic measure theory given in \citet[Figure 14 (the last page)]{Scibior:2017:DVH:3177123.3158148}.
Roughly speaking, they consist of notations of the structures
relating the commutative monad $\mathfrak{M}$ on the cartesian closed category $\QBS$.
The notations of synthetic measure theory for the commutative monad $\mathfrak{M}$ and semiring $R = [0,\infty]$ is unfolded as follows:
\begin{mathpar}
\Kint_X f(x)~d\mu(x) \defeq f^\sharp(\mu)

w \odot \mu \defeq \Kint_X w(x)\cdot\mathbf{d}_x ~d\mu(x) = \mathfrak{M}(\pi_2) \circ (\mathrm{dst}^\mathfrak{M}_{1,X} \circ \lrangle{w,\eta})^\sharp
\end{mathpar}
Here, $ w(x)\cdot\mathbf{d}_x$ is a scalar multiplication of the Dirac measure $\mathbf{d}_x$ with $w(x)$, and the projection $\pi_2 \colon 1 \times X \to X$ is also the left unitor of Cartesian product (isomorphism).
We can then formalize the semantics of the monadic bind, expectation, and rescaling as follows:
\begin{align*}
\interpret{\Gamma\vdash\BIND e ~f} &= \lambda{\gamma}.\Kint (\interpret{\Gamma\vdash f}(\gamma))(x)~d(\interpret{\Gamma\vdash e}(\gamma))(x)\\
\interpret{\Gamma\vdash\mathbb{E}_{x\sim e}[f(x)]} &=\lambda{\gamma}.(\inverse{\cong} \circ \Kint ({\cong} \circ \interpret{\Gamma\vdash f}(\gamma)))(x)~d(\interpret{\Gamma\vdash e}(\gamma))(x)\\
&\labeleq{(\dag)}\lambda{\gamma}. \int (\interpret{\Gamma\vdash f}(\gamma))(x)~d \interpret{\Gamma\vdash e}(\gamma))(x)\\
\interpret{\Gamma\vdash\SCALE(e,f)} &= \lambda{\gamma}. \interpret{\Gamma\vdash f}(\gamma) \odot \interpret{\Gamma\vdash e}(\gamma)
\end{align*}
where $\cong$ is the isomorphism $\mathfrak{M}1 \cong [0,\infty]$.
The second reformulation of expectation is actually integration in quasi-Borel space. The equality (\dag) is given from the fact that the correspondence between $f^\sharp \mu = \int f~d\mu$ in the case of $f \colon X \to [0,\infty]$ and $\mu \in \mathfrak{M}X$.

Since $\QBS$ is well-pointed, to prove the soundness of equalities on \PCFP\ probabilistic terms, it suffices to show the semantic equation holds for any snapshot $\gamma$ of environment $\Gamma$ satisfying the precondition.

\begin{itemize}
\item 
Soundness of equalities
(\ref{eq:MHOL:commutativity}),
(\ref{eq:MHOL:expected:variable_transformation}),
(\ref{eq:MHOL:expected:scaling}),
(\ref{eq:MHOL:scaling1}), and
(\ref{eq:MHOL:scaling4})
are derived from the equations given in the toolbox for synthetic measure
theory~\citep[Figure 14 (the last page)]{Scibior:2017:DVH:3177123.3158148}.
Notice that $\alpha \cdot \beta \cdot \mathbf{d}_{\lrangle{x,y}} =
(\alpha \cdot \mathbf{d}_{x}) \otimes (\beta\cdot \mathbf{d}_{y})$ holds by definition of Dirac distribution.

\item For the monotonicity (inequality \ref{ineq:MHOL:expected_monotonicity}), linearity (equalities \ref{eq:MHOL:expected:linearity}), and Cauchy-Schwartz inequality (inequality \ref{ineq:expectation:Cauchy-Schwarz}) of expectations, we use the second reformulation of expectation (\dag).
Integrations for measures on qbs are converted into the usual Lebesgue integration, so we may apply the existing lemmas on usual measure theory.

For an expectation of a real function which may take negative values, we can easily check that the interpretation of the syntactic sugar corresponds to the actual expected value which is given by an integration directly.

\item The soundness of the equality $\OBSERVE e \AS f = \mathtt{normalize}(\SCALE(e,f))$ (equality \ref{eq:MHOL:observe}) is obvious 
from the definition.

\item The soundness of the equality $\mathtt{normalize}(e) = \SCALE(e,\lambda{\_}.1/\mathbb{E}_{x\sim e}[1])$ (equality \ref{eq:MHOL:normalize1}) and normalizing constant-rescaled distribution  (\ref{eq:MHOL:normalize3})
are proved by the equivalence of scalar multiplication and rescaling with constant scalar function, and the equivalence of the mass of a measure and the expectation $\mathbb{E}_{x\sim e}[1]$. Both are easily proved.
\end{itemize}
\end{document}